\newcommand{\dom}{{\rm dom}}
\newcommand{\N}{\mathbb{N}}
\newcommand{\maps}{\colon}
\newcommand{\ceil}[1]{\left\lceil #1 \right\rceil}
\newcommand{\setto}{\leftarrow}
\def \Q {\mathbb{Q}}
\def \fR{{\mathfrak R}}
\newenvironment{proof}{\trivlist \item[\hskip \labelsep{\bf Proof.}]}
{{\hfill$\Box$}\endtrivlist}
\newtheorem{thm}{Theorem}
\newtheorem{cor}[thm]{Corollary}
\newtheorem{lem}[thm]{Lemma}
\title{\bf Every Computably Enumerable Random Real Is Provably Computably Enumerable Random}
\author{Cristian S. Calude, Nicholas J. Hay\\
{\small Department of Computer Science}\\
{\small University of Auckland}\\
{\small Private Bag 92019, Auckland, New Zealand}\\
{\small\url{{cristian,nickhay}@cs.auckland.ac.nz}}}
\begin{document}
\maketitle
\begin{abstract} 
We prove that every computably enumerable (c.e.) random real is provable  in Peano Arithmetic (PA) to be c.e.\ random.  A major step in the proof is to show that the  theorem  stating that  ``a  real is c.e.\ and random iff it is the halting probability of a universal prefix-free Turing machine'' can be proven  in PA. Our proof, which is simpler than the standard one, can also be used for the original theorem.

Our positive result  can be contrasted with the case of computable functions, where  not every computable function is provably computable in PA, or even more interestingly,  with the fact that almost all random finite strings are not provably random in PA. 

We also prove two negative results: a) there exists a universal machine whose universality cannot be proved in PA, b) there exists a universal machine $U$ such that, based on $U$, PA cannot prove the randomness of its halting probability.

The paper also includes a sharper form of the Kraft-Chaitin Theorem, as well as a formal proof
of this theorem written with the proof assistant  Isabelle. 
\end{abstract}

\thispagestyle{empty}
\section{Introduction}

A real  in the unit interval  is  {\em computably enumerable} ({\em c.e.})   if it
is the limit of a computable, increasing
sequence of rationals. We identify a real with its infinite binary expansion.  In contrast with
the case of a computable real, whose bits are given by a computable function,
during the process of approximation of a c.e.\ real one may never know how close one is
to the limit.  A real  is (algorithmic)  random if its binary
expansion is an algorithmic random (infinite) sequence \cite{chaitin75,solovaymanu,chaitin87,Ca,DH}. 
 
 A prefix-free machine is a Turing machine, shortly, {\it machine},  from    strings to    strings  whose
domain  is a
prefix-free set.
A  machine  is  universal  if it can simulate every  machine.
Chaitin \cite{chaitin75}  introduced the halting probability
$\Omega_U$ of a   universal   machine $U$, Chaitin's 
Omega number $$\Omega_U = \sum_{U(x) \mbox{  is defined}} \, 2^{-|x|},$$ and proved that {\it $\Omega_U$  is    c.e.\ and random}. As shown by Calude, Hertling, Khoussainov, Wang \cite{CHKW98stacs} and  Ku\v{c}era,
 Slaman \cite{slaman}, (see also \cite{cerand})  there are no other  c.e.\ random reals:
  
 \begin{thm} \label{representation:thm}The set of 
c.e.\ random reals coincides with the set of   halting probabilities of all universal   machines.
\end{thm}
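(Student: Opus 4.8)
The plan is to prove the two inclusions separately. One inclusion is immediate from the result quoted above: Chaitin showed that the halting probability $\Omega_U$ of any universal machine $U$ is c.e.\ (enumerate $\dom(U)$ and sum $2^{-|x|}$ over the strings found so far, obtaining an increasing computable sequence of rationals with limit $\Omega_U$) and random, so every such $\Omega_U$ is a c.e.\ random real. The substantive direction is the converse: given an arbitrary c.e.\ random real $\alpha$, I would construct a \emph{universal} prefix-free machine $U$ with $\Omega_U=\alpha$. Fix once and for all a universal machine $V$ and a string $p$, to be chosen long enough. The machine $U$ will (i)~simulate $V$ on the cone above $p$, by setting $U(pz)=V(z)$, which guarantees universality and contributes exactly $2^{-|p|}\Omega_V$ to $\Omega_U$; and (ii)~dump the remaining halting mass $\alpha-2^{-|p|}\Omega_V$ into a disjoint cone by means of Kraft--Chaitin requests. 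If both pieces can be produced, prefix-freeness is automatic (the two cones are disjoint) and $\Omega_U=2^{-|p|}\Omega_V+(\alpha-2^{-|p|}\Omega_V)=\alpha$.

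The key step -- and the main obstacle -- is that the leftover mass $\alpha-2^{-|p|}\Omega_V$ is a priori only a \emph{difference} of c.e.\ reals, hence d.c.e.\ but not obviously c.e., so the Kraft--Chaitin Theorem cannot be applied to it directly. This is exactly where randomness enters. Since $\alpha$ is c.e.\ and random, it Solovay-dominates every c.e.\ real; in particular $\Omega_V\leq_S\alpha$. Unfolding this reducibility yields a constant $c$ together with computable increasing approximations $a_0<a_1<\cdots\to\alpha$ and $\beta_0<\beta_1<\cdots\to\Omega_V$ (the latter being the natural stagewise approximation from enumerating $\dom(V)$) such that, after a harmless speed-up, $\beta_{n+1}-\beta_n\le c\,(a_{n+1}-a_n)$ for every $n$. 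Choosing $|p|$ with $2^{|p|}\ge c$ then forces the scaled $V$-increment $2^{-|p|}(\beta_{n+1}-\beta_n)$ to be at most the $\alpha$-increment $a_{n+1}-a_n$ at \emph{every} stage. Running a budget $B_n$ that, at stage $n$, is credited $a_{n+1}-a_n$ and debited $2^{-|p|}(\beta_{n+1}-\beta_n)$ therefore keeps $B_n\ge 0$ throughout, and the total amount ever credited to the budget equals the genuinely c.e.\ quantity we must realize.

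Finally I would spend this running budget greedily: whenever $B_n$ exceeds a negative power of two, emit a Kraft--Chaitin request of the corresponding length into the auxiliary cone and decrease $B_n$ accordingly. Because the budget stays nonnegative and the greedy rule drives the unspent part to $0$, the emitted requests have total weight exactly $\alpha-2^{-|p|}\Omega_V$ and, being issued online with running sum below $1$, satisfy the hypothesis of the Kraft--Chaitin Theorem, which supplies the required prefix-free machine. Taking $|p|$ large enough that $\alpha-2^{-|p|}\Omega_V$ also fits inside a cone disjoint from the one above $p$ (possible since $\alpha<1$ and the fitting bound tends to $1$ as $|p|\to\infty$) completes the construction and yields $\Omega_U=\alpha$ with $U$ universal. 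I expect the delicate point to be precisely the verification that Solovay domination makes the stagewise budget nonnegative, since this is what converts the formally d.c.e.\ remainder into a c.e.\ mass that Kraft--Chaitin can realize.
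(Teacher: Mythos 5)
Your proposal is correct and takes essentially the same route as the paper: the easy inclusion is Chaitin's theorem in both cases, and for the hard one both arguments obtain Solovay domination $\Omega_V \leq_S \alpha$ from the randomness of $\alpha$, scale $\Omega_V$ by a power of two so that the remainder $\gamma = \alpha - 2^{-c}\Omega_V$ has nonnegative stagewise increments and is therefore genuinely c.e., decompose $\gamma$ greedily into dyadic weights, and invoke the Kraft--Chaitin Theorem to build a universal machine with halting probability exactly $\alpha$ --- precisely the paper's chain of Theorem~\ref{thm:dominates}, Lemma~\ref{lem:repce} and Theorem~\ref{thm:omegarep}, resting on Theorem~\ref{thm:kc}. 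The only divergences are minor: you cite, rather than prove, the key fact that a c.e.\ random real Solovay-dominates every c.e.\ real (the paper proves it by turning the two approximating sequences into an explicit Martin-L\"of test in Theorem~\ref{thm:dominates}), and your assembly $U(pz)=V(z)$ with the dump placed in the complement of the $p$-cone needs Kraft--Chaitin relativized to that complement --- a single cone disjoint from $p$ has measure at most $1/2$ --- whereas the paper sidesteps this wrinkle by feeding the simulation requests $(|v_i|+c,\,v_i)$ into the same request stream and taking $U = V \circ M$.
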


  C.e.\ random reals
have been intensively studied in recent years, with many results summarised in \cite{Ca,DH}.

\begin{thm}[Chaitin \cite{chaitin75}] \label{gregzfc}
 Assume that  {\rm ZFC} (Zermelo-Fraenkel set theory with choice) is arithmetically
sound (that is, any theorem of arithmetic proved by  {\rm ZFC} is true).
Then, for every   universal   machine $U$,  {\rm ZFC} can determine
the value of only finitely many bits of $\Omega_U$, and one can
calculate a bound on the number of bits of
$\Omega_U$ which
 {\rm ZFC} can determine.
 \end{thm}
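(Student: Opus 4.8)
The plan is to derive the result from two ingredients, one already in hand and one short compression argument. Write $\Omega_U \restriction n$ for the string of the first $n$ bits of $\Omega_U$, and let $H$ denote prefix (program-size) complexity. Since $\Omega_U$ is random, there is a constant $c_1 = c_1(U)$ with $H(\Omega_U \restriction n) \ge n - c_1$ for all $n$; this Levin--Schnorr/Chaitin characterization is the only feature of randomness I will use. Call a position $i$ \emph{decided} if ZFC proves a statement fixing the value of the $i$-th bit of $\Omega_U$; by arithmetic soundness every such value is correct, so a decided bit really is the corresponding bit of $\Omega_U$. Let $D$ be the set of decided positions. The goal is to show $D$ is finite and to bound $|D|$ effectively.

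First I would build a decompressor. Fix a program that enumerates the theorems of ZFC and records the (position, value) pair of every theorem that fixes a bit of $\Omega_U$, discarding repeated positions. Given a self-delimiting code for a number $k$ with $k \le |D|$, this program halts once it has recorded $k$ distinct decided positions $q_1 < \cdots < q_k$ together with their values; note $q_k \ge k$. To reconstruct $\Omega_U \restriction q_k$ it then suffices to supply, as raw bits, the $q_k - k$ bits of $\Omega_U$ sitting at the positions \emph{not} among $q_1,\dots,q_k$, since the decompressor already knows those $k$ positions and values and can interleave them with the raw bits read left to right. Hence $H(\Omega_U \restriction q_k) \le (q_k - k) + H(k) + O(1)$, where the $O(1)$ absorbs the fixed cost of the enumerator and interleaver and depends only on $U$ and the axiomatization of ZFC.

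Combining the two bounds at $n = q_k$ yields $q_k - c_1 \le (q_k - k) + H(k) + O(1)$, that is $k \le H(k) + c_1 + O(1)$. Using $H(k) \le 2\log_2 k + O(1)$, the resulting inequality $k \le 2\log_2 k + c_1 + O(1)$ fails once $k$ is large enough; let $k^\ast$ be the largest $k$ for which it holds. Since the inequality is valid for every $k \le |D|$, we conclude $|D| \le k^\ast$, so ZFC decides only finitely many bits of $\Omega_U$. Finally, $k^\ast$ is computable as soon as one has upper bounds on $c_1$ and on the additive overhead, both effectively obtainable from descriptions of $U$ and of the proof predicate of ZFC; this is the claimed calculable bound.

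The main obstacle I anticipate is not the compression inequality, which is routine, but making the bound genuinely effective: one must exhibit explicit (or explicitly bounded) values for the randomness deficiency constant $c_1(U)$ of $\Omega_U$ and for the decompressor overhead, tracking how each depends on the chosen universal machine $U$ and on a fixed enumerator of ZFC's theorems. A secondary point to treat carefully is the reading of ``decides a bit'': any ZFC-theorem logically equivalent to fixing the $i$-th bit must be counted, and arithmetic soundness is exactly what guarantees each decided value agrees with $\Omega_U$, so that the decompressor outputs the correct prefix.
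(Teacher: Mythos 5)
Your proposal cannot be checked against a proof in the paper, because the paper gives none: Theorem~\ref{gregzfc} is stated purely as background, with a citation to Chaitin, and the only ingredient of it the paper actually proves is the randomness of the halting probability (Theorem~\ref{thm:chaitin}), which is exactly the fact you invoke. Measured against the standard Chaitin-style proof, your argument is essentially that proof, and it is correct: the set $D$ of ZFC-decided positions is c.e.\ (enumerate proofs); arithmetic soundness guarantees every decided value is the true bit, so your decompressor, given a self-delimiting code for $k$ followed by the $q_k-k$ bits at the undecided positions, really does output the true prefix $\Omega_U(q_k)$; the number of raw bits it reads is determined by $k$ and the fixed enumeration, so it is a legitimate prefix-free machine; and combining the upper bound $H(\Omega_U(q_k)) \le (q_k-k)+H(k)+O(1)$ with the randomness lower bound $q_k-c_1$ gives $k \le H(k)+c_1+O(1)$, which fails for large $k$ and so bounds $|D|$. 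Your closing caveat about effectiveness is the right one, and it is discharged by the conventions of this paper: universality is constructive here (``for every machine we can effectively construct a constant $c$ (depending on $U$ and $C$) such that $H_U(x)\le H_C(x)+c$''), so the deficiency constant $c_1(U)$ --- extractable from the proof of Theorem~\ref{thm:chaitin} --- and the decompressor's simulation overhead are both computable from descriptions of $U$ and of ZFC's proof predicate, which is what turns your $k^\ast$ into a calculable bound rather than a merely existent one.
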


The real $\Omega_U$ depends on $U$, and so by tuning this choice one gets:
\begin{thm}[Solovay \cite{solovay2k}]
\label{solovay} We can chose a  universal   machine $U$
so that   {\rm ZFC} (if arithmetically sound) cannot determine any  bit  of $\Omega_U$.
\end{thm}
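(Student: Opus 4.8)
The plan is to build the universal machine $U$ by a self-referential diagonalization against $\mathrm{ZFC}$, exploiting arithmetic soundness to turn every provable bit into a target we can refute. First I would apply the recursion theorem to obtain a prefix-free machine $U$ that has access to a fixed index $e$ for itself. With $e$ in hand, the assertion ``the $n$-th binary digit of $\Omega_U$ equals $b$'' becomes a concrete arithmetic formula $\varphi(n,b)$ that $U$ can write down and whose $\mathrm{ZFC}$-proofs $U$ can search for by dovetailing through all proofs. To guarantee that $U$ is universal (and hence, by Chaitin's result quoted before Theorem~\ref{representation:thm}, that $\Omega_U$ is random) I would reserve half of the domain for a faithful copy of a fixed universal machine $V$, say $U(1\sigma)=V(\sigma)$, and keep the remaining inputs, those beginning with $0$, for the diagonalization.

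On the reserved part $U$ runs the following strategy. It enumerates the theorems of $\mathrm{ZFC}$ while simultaneously maintaining the current lower approximation $\omega_s$ to $\Omega_U$ (computable because $U$ is a fixed machine whose domain is being enumerated). The first time a proof of some $\varphi(n,b)$ appears for a position $n$ not yet acted upon, $U$ issues, via the Kraft--Chaitin theorem, a request for new descriptions whose total weight pushes $\Omega_U$ up into the next dyadic interval of length $2^{-n}$ whose $n$-th bit is $1-b$ (consecutive such intervals alternate in parity, so this costs at most $2\cdot 2^{-n}$). Since $\Omega_U$ is c.e.\ and only increases, and since by arithmetic soundness a $\mathrm{ZFC}$-proof of $\varphi(n,b)$ would force the true $n$-th bit to equal $b$, any successful forcing of the opposite value yields a contradiction; hence $\mathrm{ZFC}$ proves no $\varphi(n,b)$ at all, which is exactly the conclusion.

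The hard part is making the forcing \emph{permanent}: a later intervention for another position, or the ongoing contributions of the embedded copy of $V$, could carry into position $n$ and undo the flip, and all the extra requests together must keep the domain measure below $1$. I would resolve this by scheduling interventions so that each already-decided bit is protected by reserving only a summable, rapidly shrinking remaining budget, arranged so that no admissible future addition can cross the boundary of the protective interval around a settled bit. Theorem~\ref{gregzfc} guarantees that for the machine finally produced only finitely many interventions are ever needed, which keeps the total requested weight strictly below $1$ and legitimizes the Kraft--Chaitin application. Combining the protection with the soundness contradiction shows that every bit of $\Omega_U$ lies beyond the reach of $\mathrm{ZFC}$, while the embedded universal part certifies that $U$ is a universal machine and $\Omega_U$ a c.e.\ random real, completing the argument.
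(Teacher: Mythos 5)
The paper does not actually prove this statement; it is quoted from Solovay \cite{solovay2k}, so your proposal must stand on its own against Solovay's known construction. Your overall architecture (recursion theorem for self-reference, searching for ZFC-proofs of bit-statements about $\Omega_U$, refuting the first one found, and concluding by arithmetic soundness) is the right one. But the step you yourself flag as ``the hard part''---making the flip of bit $n$ permanent---is where the proposal genuinely fails, and it is the crux of the theorem. Because you hard-wire $U(1\sigma)=V(\sigma)$, the machine has no control over the future mass contributed by the embedded copy of $V$: at the moment it intervenes for position $n$, the remaining tail of $V$-mass is positive, may well exceed $2^{-n}$, arrives at unpredictable stages, and cannot be bounded by the machine (certifying that this tail is smaller than $2^{-n}$ is as hard as computing $\Omega_V$). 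Your ``summable, rapidly shrinking remaining budget'' can discipline only the machine's own Kraft--Chaitin requests, not $V$'s halts, so no protective interval around a settled bit is actually safe. Moreover, the appeal to Theorem~\ref{gregzfc} is circular: you invoke it \emph{inside} the construction of $U$ to bound the number of interventions $U$ will make, but that theorem can only be applied to the machine once it is fully defined.

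The repair---and this is essentially Solovay's construction---is to notice that only the \emph{first} proof ever found needs to be defeated, and that when the machine acts it may freely alter or abandon the simulation of $V$, because by soundness that branch of the construction is never actually executed. Concretely: upon finding the first ZFC-proof of ``bit $n$ of $\Omega_U$ is $b$'', the machine either stops simulating $V$ altogether, so that $\Omega_U$ becomes an exactly known dyadic rational (the mass enumerated so far plus one final request) which it can steer to have $n$-th bit $1-b$; or it raises the padding of all future simulated $V$-computations to $n+2$, so the remaining $V$-contribution is provably at most $2^{-n-2}$ and the approximation plus all future mass stays inside a single dyadic interval of length $2^{-n}$ whose $n$-th bit is $1-b$. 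Either way, the hypothetical world in which ZFC proves a bit-statement yields a true bit value $1-b$, contradicting soundness; hence no such proof exists, the machine never acts, the simulation of $V$ runs untouched with a fixed constant, $U$ is universal, and ZFC determines no bit of $\Omega_U$. With this single-intervention structure your worries about multiple interventions, carries between settled bits, and total measure all evaporate.
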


This result was generalised as follows: 
\begin{thm}[Calude \cite{Incompl}] \label{crisincompl} Assume that  {\rm ZFC} is arithmetically sound. Let $i \ge 1$ and consider the c.e.~random real 
$ \alpha = 0. 1^{i-1}0
\alpha_{i+1}\cdots $
Then, we can effectively construct  a universal   machine $U$ (depending upon
 {\rm ZFC} and $\alpha$)  such that  {\rm PA} (Peano Arithmetic) proves the universality of $U$,
  {\rm ZFC} can  determine at most $i-1$ initial bits of $\Omega_U$ and $\alpha = \Omega_U$.
  \end{thm}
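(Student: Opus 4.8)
The plan is to construct $U$ so that three requirements hold simultaneously: $\Omega_U=\alpha$, the universality of $U$ is verifiable in PA, and no arithmetic sentence asserting the value of a bit in position $\ge i$ of $\Omega_U$ is provable in ZFC. First I would secure the two ``positive'' requirements using machinery already in hand. Fix a computable increasing sequence of rationals $a_1<a_2<\cdots$ with $a_s\nearrow\alpha$, and feed its successive increments as requests to the Kraft--Chaitin construction; interleaving these with a fixed simulation of a machine whose universality PA already proves (as furnished by the PA-provable form of Theorem~\ref{representation:thm}) yields a universal machine with halting probability $\alpha$ and with PA-checkable universality. The recursion theorem lets me assume throughout that $U$ has access to its own index, hence to the concrete arithmetic sentences $\phi_{n,b}$ expressing ``the $n$-th bit of $\Omega_U$ equals $b$''.

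The heart of the argument is a Solovay-style diagonalization that blocks ZFC from every bit $n\ge i$. In parallel with the above I would run a search through ZFC-proofs. For each $n\ge i$ the machine holds in reserve a small amount of so-far-uncommitted halting probability as a \emph{threat}: should the search ever discover a ZFC-proof of the \emph{true} sentence $\phi_{n,b}$, the machine immediately pours this reserved mass into its domain so as to push $\Omega_U$ across the dyadic boundary at level $n$, thereby making the $n$-th bit of $\Omega_U$ differ from $b$. Then $\phi_{n,b}$ would be a false arithmetic statement proved by ZFC, contradicting arithmetic soundness. Consequently no such proof exists, every threat stays dormant, and on the actual run the only mass present is the planned mass approximating $\alpha$, so $\Omega_U=\alpha$ as required. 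The blocked bits are thus true-but-unproven, in a genuine fixed point: ``no proof is found'' feeds back into ``the threat never fires'' which feeds back into ``$\Omega_U=\alpha$''.

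It remains to account for the prefix $1^{i-1}0$ and the bound $i-1$. Since $1-2^{-(i-1)}\le\alpha<1-2^{-i}$, the first $i-1$ bits are forced to be $1$; these are witnessed from below by the $\Sigma_1$ lower bound $\Omega_U\ge 1-2^{-(i-1)}$, which becomes provable as soon as $a_s$ passes it, so they are at least potentially accessible to ZFC and one cannot in general hope for fewer than $i-1$ determined bits. I would therefore install threats only for $n\ge i$, exploiting the room left open by the $i$-th bit being $0$ (which permits an upward push of about $2^{-i}$) and, for deeper positions, the dyadic slack guaranteed by the randomness of $\alpha$.

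The main obstacle will be the probability bookkeeping behind the threats: I must keep enough uncommitted mass in reserve, at every stage and for every still-live level $n\ge i$, to flip bit $n$ without disturbing an already-committed lower bit, while simultaneously driving the committed mass up to exactly $\alpha$ and never letting the total mass exceed $1$. This forces the reservations to be transient and reusable rather than permanently set aside, and it is here that the prefix condition does real work in making the accounting close. Verifying that the self-referential definition of $U$ through the recursion theorem is legitimate, and that executing a threat is always feasible at the stage it is triggered, are the two points I would check most carefully.
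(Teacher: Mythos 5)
Your overall architecture is the right one, and it is in fact the architecture of the cited proof: the present paper does not reprove this theorem, it only records that the proof in \cite{Incompl} begins by fixing a machine $V$, provably universal in PA, with $\Omega_V=\alpha$ --- the very step Solovay questioned, and exactly what Theorem~\ref{thm:omegarep} and Corollary~\ref{cor:secondrep} of this paper legitimize --- followed by a Solovay-style diagonalization against ZFC-proofs of bit values. Your first paragraph reconstructs that first step correctly, and your reading of the prefix $1^{i-1}0$ (it guarantees room $1-\alpha>2^{-i}\ge 2^{-n}$ for every $n\ge i$, while the first $i-1$ bits are unavoidably ZFC-determinable from a $\Sigma_1$ lower bound) is also correct. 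There is, however, a genuine gap at the junction of your two ``positive'' requirements.

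The gap is the interaction between the threat mechanism and PA-provable universality. The final machine $U$ contains the ZFC-proof search in its own code (via the recursion theorem), so PA must prove universality of \emph{that} machine, and it cannot do so by arguing that no threat ever fires: inside PA, ``no threat fires'' implies ${\rm Con(ZFC)}$ (an inconsistent ZFC proves every $\phi_{n,b}$), and PA proving ${\rm Con(ZFC)}$ is impossible for consistent ZFC by G\"odel's second incompleteness theorem. So universality must hold, provably, in every branch, including the counterfactual branches where a threat fires; your sketch never arranges this. Worse, your bookkeeping paragraph points the wrong way: you propose that after a threat fires the machine should still drive its committed mass to exactly $\alpha$ and protect the already-settled lower bits. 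If it does, then in that branch $\Omega_U=\alpha+\delta$ where $\delta$ is the poured mass; since at the firing stage the machine knows $\alpha$ only up to the error $\alpha-a_s$, which it cannot bound by $2^{-n}$, it cannot choose $\delta$ so as to control bit $n$ of $\alpha+\delta$, and the soundness contradiction evaporates.

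The repair also simplifies the accounting you were worried about: at most one threat ever needs to fire, so no per-level reserves and no protection of lower bits are required. When the first ZFC-proof of some $\phi_{n,b}$ with $n\ge i$ is found at stage $s$, the machine \emph{abandons} the approximation of $\alpha$, adds mass so that the accumulated total lands strictly inside a subinterval on which bit $n$ equals $1-b$ (possible precisely because $1-a_s>1-\alpha>2^{-i}\ge 2^{-n}$), and from then on runs \emph{only} the simulation of $V$, re-padded so that its entire future contribution is smaller than the distance from the landing point to the right end of that subinterval (this also keeps $\Omega_U$ irrational there, so its binary expansion is unambiguous). Then in every branch where such a proof exists, bit $n$ of $\Omega_U$ differs from $b$, contradicting arithmetic soundness; hence no threat fires and $\Omega_U=\alpha$. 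And since $U$ simulates $V$ with a fixed padding constant in every branch --- $c$ if no threat ever fires, a computable constant $c'(s)$ if the first threat fires at stage $s$ --- PA proves ``there exists a simulation constant'' by case distinction, without having to decide which case obtains.
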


The proof of Theorem~\ref{crisincompl}  in \cite{Incompl} starts by fixing a universal  
machine $V$ such that the universality of $V$ is provable in  PA 
and $\Omega_V = \alpha$.  Solovay \cite{solovayemail} observed that  ``it is by no means 
evident that there is a universal prefix-free machine whose universality is 
provable in  PA  and whose halting probability is $\alpha$''.   \begin{quote} Let $\alpha \in  (0,1)$ be c.e.\ and random. Is there any {\em representation}
of $\alpha$ for which  PA  can \emph{prove} 
that $ \alpha$ c.e.\ and random?
\end{quote}
 We  give an affirmative answer to this question.  
A major step in the proof is to show that Theorem~\ref{representation:thm}  can be proved   in PA. Our proof, which is simpler than the standard one, can be  used also for the original theorem.

The paper also includes a sharper form of the Kraft-Chaitin Theorem, as well as a formal  proof
of this theorem written with the proof assistant  Isabelle. 

 In what follows proofs will be written in Solovay's style \cite{solovay2k}. All necessary steps are presented in sufficient detail to leave the remaining formalisation routine. The
 formalisation of the  Kraft-Chaitin Theorem is  presented with full details, and then a sketch of the formal proof in Isabelle is  discussed.

 The paper is organised as follows. Sections 2 and 3 present all  facts on formal provability and Algorithmic Information Theory needed for this paper.  The   Kraft-Chaitin Theorem is presented in Section~4. Section~5 presents three ways to prove randomness, using Martin-L\"of tests,  prefix-free complexity, and Solovay representation formula.  In Section~6  we revisit Chaitin's Theorem on the randomness of the halting probability of a  universal  machine.  In Section~7   we prove that 
a real $\alpha \in (0,1)$ is provably Chaitin-random iff it is provable that $\alpha = \Omega_U$ for some provably universal  machine $U$ (see Theorem~\ref{thm:representation}). 
  In Section~8 we prove our main theorem: 
 every c.e.\ random real is provably random (Theorem~\ref{thm: cerandprov}).
 In Section~9 we construct a universal machine $U$ based on which PA cannot prove the randomness of its halting probability.
 Section~10 presents a formal proof of the Kraft-Chaitin Theorem written with  Isabelle. The  final Section~11  includes a few general remarks.

\section{Provability}

By $\mathcal{L}_{A}$ we denote the first-order language   of arithmetic whose non-logical  symbols consist of  the constant symbols 0 and 1, the binary relation symbol $<$ and  two binary function symbols $+$ (addition) and $\cdot$ (multiplication). Peano Arithmetic (see  \cite{RK}, shortly, PA) is  the first-order theory given by a set of 15 axioms defining discretely ordered rings,  together with induction axioms for each formula $\varphi (x,y_{1},\ldots ,y_{n})$ in $\mathcal{L}_{A}$:
\[\forall \overline{y}(\varphi(0,\overline{y}) \wedge \forall x(\varphi(x,\overline{y}) \rightarrow \varphi(x+1,\overline{y})) \rightarrow \forall x (\varphi(x,\overline{y})).\]
\if01
This schema retains as much as possible from  the second-order induction axiom:
\[\forall X (0\in X \wedge \forall x(x\in X\rightarrow x+1 \in X) \rightarrow \forall y (y\in X)),\]
where $X$ ranges over all subsets of the domain, and $x,y$ range over elements of this domain, but
avoids quantification over sets of natural numbers, which is impossible in first-order logic. 

In PA it is not possible  to say that any set of natural numbers containing 0 and closed under successor is the entire set of natural numbers, but  that any definable set of natural numbers has this property. The induction schema includes one instance of the induction axiom for every definition of a subset of the naturals.
\fi
The structure {\bf N} whose domain is  the set of naturals $\N=\{0,1,2,\ldots\}$,  where the symbols in   $\mathcal{L}_{A}$ have the obvious interpretation, satisfies the axioms of PA; this is the standard model for PA. There are non-standard models  of PA that are not isomorphic to
{\bf N}. If $M$ is a structure for $\mathcal{L}_{A}$ and $\varphi(\overline{x})$ is an $\mathcal{L}_{A}$-formula with free-variables  $\overline{x} = (x_{1}, \ldots , x_{n})$
and $\overline{a} = (a_{1}, \ldots ,  a_{n}) \in M$, then we write $M \vDash \varphi(\overline{a})$ to mean that ``$\varphi$ is true in $M$ when each variable $x_{i}$ is interpreted by $a_{i}$''. We blur the distinction between $n$ and  the closed term of $\mathcal{L}_{A}$, $(\cdots (((1+1)+1)+1)+ \cdots 1)$, ($n$ times).

A formula $\theta (\overline{x})$ of $\mathcal{L}_{A}$ is $\Delta_{0}$ if all its quantifiers are bounded. A formula $\psi (\overline{x})$ of $\mathcal{L}_{A}$ is $\Sigma_{1}$ if it is of the form $\psi (\overline{x}) = \exists y \theta (\overline{x}, y)$ with $\theta (\overline{x}, y) \in \Delta_{0}$; $\psi (\overline{x})$ of $\mathcal{L}_{A}$ is $\Pi_{1}$ if it is of the form $\psi (\overline{x}) = \forall y \theta (\overline{x}, y)$ with $\theta (\overline{x}, y) \in \Delta_{0}$. 

By PA $\vdash \theta$ we mean ``there is a proof in PA for $\theta$''.  It is useful to know  that PA proves the {\it least number principle}:
${\rm PA} \vdash  \forall y (\exists x \varphi (x, y) \rightarrow \exists z (\varphi (z,  y) \wedge \forall w < z \, \neg \varphi (w,  y))),$
for each formula $\varphi (x,  y) $ of $\mathcal{L}_{A}$.

An important link between computability  and provability is given by the following results.

\begin{thm}  A partial function from $\N$ to $\N$  is partial computable iff its graph is equivalent to a  $\Sigma_{1}$ $\mathcal{L}_{A}$-formula.
\end{thm}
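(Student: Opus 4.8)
The plan is to prove the two implications separately, with the bulk of the work going into arithmetising computation for the harder direction. Throughout, ``graph defined by a $\Sigma_{1}$ formula'' means there is a $\Sigma_{1}$ formula $\psi(x,y)$ such that, for all $a,b \in \N$, $f(a)=b$ iff $\mathbf{N} \vDash \psi(a,b)$. For the easy direction ($\Sigma_{1}$-definable $\Rightarrow$ partial computable), I would write the graph as $\psi(x,y) = \exists z\, \theta(x,y,z)$ with $\theta \in \Delta_{0}$ and exploit that every $\Delta_{0}$ formula is decidable: since all of its quantifiers are bounded, the truth value of $\theta(a,b,c)$ in $\mathbf{N}$ is determined by a finite search through the bounded ranges, with atomic formulas decided by evaluating the polynomials built from $+$ and $\cdot$ and comparing under $<$. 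To compute $f(a)$ I would then dovetail a search over all pairs $(b,c)$, halting and outputting $b$ as soon as $\theta(a,b,c)$ holds. Because $\psi$ defines the graph of a function, for each $a \in \dom(f)$ there is a unique such $b$ with at least one witness $c$, while for $a \notin \dom(f)$ the search never halts, so $f$ is partial computable.

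For the converse (partial computable $\Rightarrow$ $\Sigma_{1}$-definable) I would invoke Kleene's Normal Form Theorem: there is a primitive recursive predicate $T(e,x,s)$, read ``$s$ codes a halting computation of the machine with index $e$ on input $x$'', and a primitive recursive output-extraction function $U$, such that for an index $e$ of $f$ one has $f(a)=b$ exactly when $\exists s\,(T(e,a,s) \wedge U(s)=b)$. If $T$ and the graph of $U$ are each definable by $\Delta_{0}$ formulas, then this displayed condition is a single existential quantifier over a $\Delta_{0}$ matrix, hence $\Sigma_{1}$, and it defines the graph of $f$. Thus the entire weight of this direction reduces to one representability claim.

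The main obstacle is precisely that claim: every primitive recursive function and relation --- in particular $T$ and $U$ --- is definable by a $\Delta_{0}$ formula. The difficulty is that the recursion clauses defining a primitive recursive function refer to its values at all earlier arguments, so the naive rendering ``there exists a sequence $\langle v_{0},\dots,v_{n}\rangle$ witnessing the recursion'' quantifies over a sequence code of unbounded length and threatens to escape $\Delta_{0}$. The standard device I would use to keep the matrix bounded is G\"odel's $\beta$-function, $\beta(a,b,i) \equiv a \bmod (1+(i+1)b)$, which is itself $\Delta_{0}$; by the Chinese Remainder Theorem, for every finite sequence there exist parameters $a,b$ with $\beta(a,b,i)$ equal to its $i$-th term for all $i \le n$. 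This replaces quantification over sequences by ordinary number quantification and lets the recursion and minimisation steps stay within the $\Delta_{0}$/$\Sigma_{1}$ hierarchy. Carrying out an induction on the construction of primitive recursive functions, using the $\beta$-function to encode each computation history, then yields the $\Delta_{0}$-definability of $T$ and $U$ and completes the proof.
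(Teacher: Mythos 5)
Your statement is one the paper itself does not prove: it is quoted as classical background (the paper's reference here is Kaye's \emph{Models of Peano Arithmetic}), so the only comparison possible is with the standard argument you are reconstructing. Your easy direction is correct: $\Delta_0$ formulas are decidable by bounded search, and dovetailing over pairs $(b,c)$ computes $f$. The hard direction, however, rests on a false lemma. You claim that \emph{every} primitive recursive function and relation has a $\Delta_0$-definable graph. This is not true: $\Delta_0$-definable sets all lie in the linear-time hierarchy (in particular they are decidable in elementary time), whereas by the time hierarchy theorem there are primitive recursive sets of far higher complexity; if $A$ is such a set, its characteristic function $\chi_A$ is primitive recursive, yet its graph cannot be $\Delta_0$, since $x\in A \iff (x,1)\in\mathrm{graph}(\chi_A)$ would make $A$ itself $\Delta_0$. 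Moreover, the induction you sketch cannot yield $\Delta_0$ even in principle: in the primitive recursion clause, the $\beta$-function parameters coding the history $\langle f(\bar x,0),\dots,f(\bar x,n)\rangle$ are not bounded by any term of $\mathcal{L}_A$ in the arguments (primitive recursive values grow too fast), so those existential quantifiers are genuinely unbounded and the induction delivers $\Sigma_1$ graphs, not $\Delta_0$ ones.

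The architecture is repairable in either of two standard ways, and you should commit to one. (i) Weaken the lemma to what your induction actually proves --- every primitive recursive function has a $\Sigma_1$ graph --- and then verify the closure properties of $\Sigma_1$ over $\mathbf{N}$ needed to finish: closure under conjunction and unbounded $\exists$ (collapse several existentials into one via a $\Delta_0$ pairing polynomial), and closure under bounded $\forall$ (a single number can bound the finitely many witnesses). With these, $\exists s\,(T(e,a,s)\wedge U(s)=b)$ is still $\Sigma_1$ even when $T$ and the graph of $U$ are only $\Sigma_1$. (ii) Keep $\Delta_0$, but prove it for $T$ and $U$ \emph{directly} by arithmetizing machine configurations, not by induction on primitive recursive structure: code the whole computation history into the single number $s$ (with the $\beta$-parameters packed inside $s$), and observe that every quantifier needed to check ``$s$ is a correct halting computation of machine $e$ on input $a$ with output $b$'' ranges over parts of $s$ and so can be bounded by $s$ itself. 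This second route is the one taken in Kaye, and it is the cleanest way to make your ``one representability claim'' true as stated.
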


\begin{cor} \label{cesets}A set $A \subset \N^{k}$ is computably enumerable (c.e.) if there is a  $\Sigma_{1}$ $\mathcal{L}_{A}$-formula $\varphi(\overline{x})$ such that for all $\overline{x}\in \N^{k}$, $\overline{x}\in A $ iff
{\bf N}  $  \vDash  \varphi(\overline{x})$.
\end{cor}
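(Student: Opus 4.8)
The plan is to derive this directly from the preceding theorem, via the two classical equivalences: a set is c.e.\ iff it is the domain (equivalently, the range) of a partial computable function, and a relation is $\Sigma_1$-definable iff it is the projection of a $\Delta_0$ (hence decidable) relation. Since the theorem is stated for unary partial functions from $\N$ to $\N$, the first thing I would do is fix a $\Delta_0$-definable tupling function $\langle \cdot,\ldots,\cdot\rangle\maps \N^k \to \N$ — for instance the iterated Cantor pairing polynomial $\langle x,y\rangle = \tfrac{(x+y)(x+y+1)}{2}+y$ — which is $\Delta_0$-definable and provably a bijection, so that membership $\overline{x}\in A$ may be transferred to a single code $c=\langle x_1,\ldots,x_k\rangle$. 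Although the statement is phrased with ``if'', it is really the characterization, so I would prove both directions.

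For the stated direction ($\Sigma_1$-definable $\Rightarrow$ c.e.), suppose $A$ is defined by $\varphi(\overline{x}) = \exists y\,\theta(\overline{x},y)$ with $\theta\in\Delta_0$. Because every quantifier in $\theta$ is bounded, the relation $\{(\overline{x},y)\maps \mathbf{N}\vDash\theta(\overline{x},y)\}$ is decidable: one evaluates each bounded quantifier by a finite search. Hence a machine may dovetail over all pairs $(\overline{x},y)$, test $\theta$, and output $\overline{x}$ whenever a witness $y$ is found; this enumerates $A$. (Equivalently, the function that on input $\overline{x}$ searches for the least $y$ with $\theta(\overline{x},y)$ and then halts is partial computable with domain $A$.) For the converse (c.e.\ $\Rightarrow$ $\Sigma_1$), if $A$ is c.e.\ then $A$ is the domain of a partial computable $f$. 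Coding $\overline{x}$ as $c$, the theorem supplies a $\Sigma_1$ formula $\gamma(c,y)$ equivalent to the graph of $f$, whence $\overline{x}\in A$ iff $\exists y\,\gamma(c,y)$. Writing $\gamma=\exists z\,\delta$ with $\delta\in\Delta_0$ and collapsing the two existential quantifiers into one through the $\Delta_0$ pairing function, I obtain a single $\Sigma_1$ formula defining $A$.

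The only genuinely delicate points are the coding steps: checking that the tupling function is $\Delta_0$-definable and provably bijective (which is all we need, since the corollary speaks only about truth in $\mathbf{N}$, not about PA-provability), and that $\Delta_0$ is closed under the substitutions and the bounded search used when contracting $\exists\exists$ to $\exists$. These are standard facts of arithmetization; once they are in place, the corollary follows immediately from the theorem.
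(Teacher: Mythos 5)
The paper states this corollary without any proof at all---it appears bare, as an immediate consequence of the preceding theorem---so there is no official argument to compare against; judged on its own, your proof is correct. A few observations on how it relates to what the authors presumably intended. The corollary is used later only in the stated direction ($\Sigma_1$-definable implies c.e., to conclude that provably computable functions have c.e.\ graphs and hence are total and computable), and for that direction your argument is in fact a direct one, not an application of the theorem: $\Delta_0$ formulas define decidable relations because bounded quantifiers can be evaluated by finite search, and projecting a decidable relation yields a c.e.\ set by dovetailing (equivalently, as the domain of the partial computable least-witness search). If one wants the statement to be literally a corollary of the theorem, one instead applies its ``$\Sigma_1$ graph implies partial computable'' direction to the function $f(\overline{x})$ defined as the least $y$ with $\theta(\overline{x},y)$, whose graph $\theta(\overline{x},y)\wedge\forall z<y\,\neg\theta(\overline{x},z)$ is $\Delta_0$, and observes that $A=\dom(f)$ after coding. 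Your converse direction does invoke the theorem exactly as intended, and your handling of the coding issues---the $\Delta_0$ definability of the Cantor pairing graph via $2z=(x+y)(x+y+1)+2y$, closure of $\Delta_0$ under substitution and bounded quantification, and contraction of adjacent existential quantifiers using the bound $y,z\le\langle y,z\rangle$---is the standard arithmetization and is sound. Proving both implications rather than just the stated ``if'' does no harm and matches the spirit of the preceding theorem, which is an equivalence.
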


 A total function $f: \N^{k} \rightarrow \N$ is represented in PA if there is an $\mathcal{L}_{A}$-formula $\theta(\overline{x})$ such that for all $\overline{n} \in \N^{k}$:\\[-4ex]
\begin{enumerate}
\item PA $\vdash \exists ! y \theta (\overline{n}, y)$, and
\item if $k=f(\overline{n})$ then PA $\vdash \theta (\overline{n}, k)$.
\end{enumerate}
\mbox{}\\[-4ex]
(Here  $\exists ! $ means ``there exists a unique''.)
One can show that every total computable function is represented by a $\Sigma_{1}$-formula of PA  \cite{RK}.

A function $f: \N \rightarrow \N$ is {\it provably computable} \cite{PF,RK} if there exists a $\Sigma_{1}$-formula of PA $\varphi(x,y)$ such that:\\[-4ex]
\begin{enumerate}
\item $\{(n,m) \mid m,n, \in \N, f(n)=m\} = \{(n,m) \mid {\bf N} \vDash  \varphi (n,m)\}$,
\item PA $\vdash \forall x \exists ! y \varphi (x, y)$.
\end{enumerate}
\mbox{}\\[-4ex]
In view of  Corollary~\ref{cesets}, any provably computable function has a c.e.\ graph, so it is total and computable. These functions can be viewed as computable functions whose totality is proved by PA. 

\begin{thm}[\cite{RK}] Every primitive recursive function is provably computable, but there exist computable functions which are not provably computable in {\rm PA}.
\end{thm}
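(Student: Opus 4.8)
The plan is to treat the two halves of the statement separately: the positive part (every primitive recursive function is provably computable) by structural induction on the build-up of the primitive recursive functions, and the negative part (existence of a computable but not provably computable function) by an effective enumeration together with a diagonal argument.

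For the positive part I would show that the class of provably computable functions contains the initial functions and is closed under composition and primitive recursion. The base functions---the zero function, the successor, and the projections---are represented by the evident $\Sigma_{1}$ (indeed quantifier-free) formulas $y=0$, $y=x+1$, and $y=x_{i}$, for which PA proves $\forall \overline{x}\,\exists ! y$ immediately. For composition $f(\overline{x}) = g(h_{1}(\overline{x}), \ldots, h_{k}(\overline{x}))$ I would set
\[
\varphi_{f}(\overline{x}, y) \;=\; \exists z_{1}\cdots \exists z_{k}\Bigl(\textstyle\bigwedge_{i=1}^{k}\varphi_{h_{i}}(\overline{x}, z_{i}) \;\wedge\; \varphi_{g}(z_{1}, \ldots, z_{k}, y)\Bigr),
\]
which is again $\Sigma_{1}$ since $\Sigma_{1}$ formulas are closed under conjunction and existential quantification, and PA derives totality and uniqueness for $\varphi_{f}$ from the induction hypotheses for its constituents. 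The delicate case is primitive recursion, where $f(\overline{x}, 0) = g(\overline{x})$ and $f(\overline{x}, n+1) = h(\overline{x}, n, f(\overline{x}, n))$: here I would encode the finite computation sequence $f(\overline{x}, 0), \ldots, f(\overline{x}, n)$ by a single number via G\"odel's $\beta$-function, and let $\varphi_{f}(\overline{x}, n, y)$ assert the existence of a code for a sequence that starts at $g(\overline{x})$, respects the recursion clause at each step, and has $y$ as its $n$-th entry.

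The main obstacle in the positive part is exactly this recursion clause. It requires, first, that PA prove the basic adequacy properties of the $\beta$-function (resting on the Chinese Remainder Theorem, which PA can formalise), and, second and more to the point, that PA prove totality and uniqueness of $\varphi_{f}$ by an induction on $n$ carried out \emph{inside} PA. This is where the induction schema of PA is indispensable: it is precisely the formal counterpart of the informal recursion, so the argument goes through but must be presented as an application of the PA induction axiom to the formula expressing ``$\exists ! y\,\varphi_{f}(\overline{x}, n, y)$''.

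For the negative part I would diagonalise against an effective enumeration of all provably computable functions. Such a function is witnessed by a pair $(\varphi, \pi)$, where $\varphi$ is a $\Sigma_{1}$ formula in two free variables and $\pi$ is a PA-proof of $\forall x\,\exists ! y\,\varphi(x, y)$; since proof-checking is decidable, the set of such pairs is computable, and listing it yields an enumeration $f_{0}, f_{1}, f_{2}, \ldots$ exhausting the provably computable functions. The key point is that the associated universal function $F(i, n) = f_{i}(n)$ is total computable: because ${\bf N}\vDash$ PA, the statement $\forall x\,\exists ! y\,\varphi_{i}(x, y)$ holds in the standard model, so for each $n$ a witness exists and a dovetailed search for the unique $m$ (together with its $\Sigma_{1}$ witness) satisfying $\varphi_{i}(n, m)$ is guaranteed to halt. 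Setting $g(n) = F(n, n) + 1 = f_{n}(n) + 1$ then produces a total computable function; were $g$ provably computable it would equal some $f_{j}$, yet $g(j) = f_{j}(j) + 1 \neq f_{j}(j)$, a contradiction. Hence $g$ is computable but not provably computable in PA.
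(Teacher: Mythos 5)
The paper does not prove this theorem at all: it is stated as a known result imported from Kaye's \emph{Models of Peano Arithmetic} \cite{RK}, so there is no in-paper proof to compare against. Your argument is correct and is essentially the standard one found in that reference and in Fischer \cite{PF}: for the positive half, structural induction with the G\"odel $\beta$-function (via the Chinese Remainder Theorem, formalisable in PA) to handle the primitive recursion clause, with totality and uniqueness established by an application of the PA induction schema to the formula $\exists ! y\,\varphi_{f}(\overline{x},n,y)$ --- and you are right that this is legitimate even though that formula is not itself $\Sigma_{1}$, since PA has induction for all formulas; for the negative half, a computable enumeration of pairs (formula, totality proof), soundness of PA over the standard model to guarantee that the dovetailed witness search terminates, and diagonalisation. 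Both halves are complete: in particular you correctly isolate the two points on which the argument actually turns, namely that provable totality plus ${\bf N}\vDash{\rm PA}$ makes the universal function $F(i,n)=f_{i}(n)$ total computable, and that the diagonal $g(n)=f_{n}(n)+1$ cannot occur in the enumeration.
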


If $f$ is computable but not provably computable in PA, then the statement ``$f$ is total'' is true but unprovable in PA. 
In contrast with the case of computable functions, {\it c.e.\ sets are provably enumerable} \cite{PF} (because every non-empty c.e.\ set  can be enumerated by a primitive recursive function, \cite{BL}, p. 138).

In what follows all computations will be implemented   by primitive recursive functions. Hence, we will work with a special type of $\Sigma_{1}$ formulae.
By abuse of language we  say that a formula of PA is $\Sigma_{1}^{0}$ if it has the form
$\exists x P(x)$, for some primitive recursive predicate $P(x)$. A formula of PA is $\Pi_{2}^{0}$ 
if it has the form
$\forall x \exists y P(x,y)$, for some primitive recursive predicate $P(x,y)$.

Our metatheory is  {\rm ZFC}. We fix a (relative) interpretation of PA in  {\rm ZFC}. Each formula of $\mathcal{L}_{A}$ has a translation into a formula of  {\rm ZFC} determined by the interpretation of PA in  {\rm ZFC}. By abuse of language we shall use the phrase ``sentence of arithmetic''  to mean a formula with no free variables of  {\rm ZFC}  that is the translation of some formula of PA.
We assume that  {\rm ZFC} is 1--consistent, that is, if it proves a $\Sigma_{1}^{0}$ sentence then that sentence is true (in the standard model of PA). 
\begin{thm} [Solovay \cite{solovay2k}] Every $\Pi_{2}^{0}$  sentence proved by  {\rm ZFC} is true.
\end{thm}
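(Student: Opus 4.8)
The plan is to reduce the truth of a provable $\Pi_2^0$ sentence to the $1$--consistency hypothesis on ZFC, which speaks only about $\Sigma_1^0$ sentences. First I would unpack the definitions: a $\Pi_2^0$ sentence has the form $\forall x \exists y\, P(x,y)$ for a primitive recursive predicate $P$, and ``true'' means true in the standard model $\mathbf{N}$. By the semantics of the universal quantifier over $\N$, to prove that $\forall x \exists y\, P(x,y)$ is true it suffices to show that, for every fixed $n \in \N$, the sentence $\exists y\, P(n,y)$ is true in $\mathbf{N}$, where $n$ is identified with its numeral.

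Next I would fix an arbitrary $n \in \N$. Assuming ${\rm ZFC} \vdash \forall x \exists y\, P(x,y)$, a single application of universal instantiation in the underlying logic (substituting the closed numeral term for $n$ in place of $x$) yields ${\rm ZFC} \vdash \exists y\, P(n,y)$. The crucial observation is that $P(n,y)$, obtained by fixing the first argument to a numeral, is again a primitive recursive predicate in $y$, so $\exists y\, P(n,y)$ is a $\Sigma_1^0$ sentence.

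I would then invoke the standing assumption that ZFC is $1$--consistent: since ZFC proves the $\Sigma_1^0$ sentence $\exists y\, P(n,y)$, that sentence is true in $\mathbf{N}$, i.e.\ there is some $m \in \N$ with $P(n,m)$ true. As $n$ was arbitrary, for every $n$ there is such an $m$, which is exactly the assertion that $\forall x \exists y\, P(x,y)$ is true, completing the argument.

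The argument is short, and the only point requiring care is the reduction in the first paragraph: one must be sure that the truth of a $\Pi_2^0$ sentence in the standard model decomposes, instance by instance, into the truth of the $\Sigma_1^0$ sentences $\exists y\, P(n,y)$, so that the $1$--consistency hypothesis can be applied to each instance separately. I expect no genuine obstacle beyond keeping the bookkeeping between numerals and standard naturals straight, and confirming that substituting a numeral into a primitive recursive predicate leaves it primitive recursive, hence the quantified sentence $\Sigma_1^0$.
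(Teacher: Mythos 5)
Your proof is correct: instantiating the ZFC-provable $\Pi_2^0$ sentence at each numeral yields ZFC-provable $\Sigma_1^0$ sentences, and the paper's standing assumption that ZFC is 1--consistent makes each such instance true, hence the whole sentence is true in $\mathbf{N}$. The paper states this theorem without proof (citing Solovay), placing it immediately after the 1--consistency assumption, and your argument is exactly the intended derivation from that assumption, including the two bookkeeping points that matter (substitution of a numeral commutes with the translation into ZFC, and fixing an argument of a primitive recursive predicate leaves it primitive recursive).
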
 
As a consequence, it follows that {\it if $U$ is a   machine which} {\rm PA} {\it can prove universal and  {\rm ZFC} can prove  the sentence ``the i-th digit of $\Omega_{U}$ is $k$'', then
the sentence is true}. Whenever we  talk about the provability of a sentence of arithmetic
we mean that PA proves its corresponding translation  formula.

If there is a proof  in PA for statement $A$ we  say that $A$  provable in PA.  
We say that $A$ is provably $P$ (where $P$ is a property) if  the statement  ``$A$ has $P$''
is provable in PA.



 
\section{Algorithmic Information Theory: Some Definitions and Results}
 \label{ait}
{\em All reals are in the unit interval.}  A c.e.\ real $\alpha$ is represented by an increasing computable sequence of rationals converging to $\alpha$. We  blur the distinction between the real $\alpha$ and the infinite base-two expansion of
$\alpha$, i.e.\  the infinite    sequence $\alpha_{1} \alpha_{2}\cdots \alpha_{n}\cdots $ ($\alpha_{n}\in\{0,1\})$ such that $ \alpha = 0.\alpha_{1} \alpha_{2}\cdots \alpha_{n}\cdots $ By
$\alpha(n)$ we denote the  string of length $n$, $\alpha_{1} \alpha_{2}\cdots \alpha_{n}$.
 
 The set of (bit) strings is denoted by $\Sigma^{*}$; $\varepsilon$ denotes the empty string. If $s$ is a    string then $|s|$ denotes the length of $s$. We import the theory of computability from natural numbers to    strings by fixing the  canonical bijection between $\Sigma^{*}$ and $\N$ induced by the
linear order $s<t$ if $|s| < |t|$ or   $|s| = |t|$ and $s$ lexicographically precedes $t$.

 A   machine $U$ is {\em universal}  if for every   machine $V$
there is a constant $c$ (depending upon $U$ and $V$) such that for all
   strings $s,t$, if $V(s)=t$, then $U(s')=t$ for some    string $s'$
of length $|s'| \leq |s| + c$.  The domain of $U$ is the set $\{x \in \Sigma^{*} \mid U(x) \mbox{  is defined}\}$. The Omega number $\Omega_{U} =  \sum_{x \in \dom{U}} 2^{-|x|}$ is halting probability of $U$. The {\em prefix-free complexity} of the string
$x\in\Sigma^*$ (relatively to the   machine $C$)
is  $H_C(x)=\min \{|y| \mid  y \in \Sigma^*, \ C(y)=x\}$
($\min \emptyset = \infty$). If $U$ is  a universal   machine,  then for every    machine we can effectively construct a constant $c$ (depending on $U$ and $C$) such that
 $H_U (x) \leq H_C (x) + c$, for all $x$.
 
 A real $\alpha$ is {\em Chaitin-random} if there exists a universal   machine $U$ and constant $c$ such that for all $n\ge 1$, $H_{U}(\alpha (n))\ge n-c$.

A c.e.\ open set is a c.e.\ union  of intervals with rationals endpoints $[a,b)$ and $\mu$ is Lebesgue measure. If $S$ is a prefix-free set, then $\mu (S)$ denotes the Lebesgue measure of the cylinder denoted by $S$, i.e.\ all reals whose infinite binary expansions have a prefix in $S$.
To the    string $x$ we associate the interval $[0.x, 0.x + 2^{-|x|})$ of measure $2^{-|x|}$.   A Martin-L\"{o}f 
test (shortly, ML test)  $A$ is a uniformly c.e.\ sequence of c.e.\ open sets $A = (A_n)$  such that for all $n\ge1$, $\mu(A_n) \leq 2^{-n}$.  A real $\alpha$ is {\em Martin-L\"of-random} (shortly, {\em ML-random})  if for every ML test $A$ there exists an $i$ such that $\alpha \not\in A_{i}$. A classical theorem states that {\em a real is Chaitin-random iff
  it is  ML-random} \cite{chaitin87,Ca}.

Note that Chaitin and Martin-L\"of definitions apply to any real. In the special case of c.e.\ reals the following Solovay representation formula stated in \cite{solovayemail} is used: A real $\alpha$ is {\em c.e.\ and random} if there exists a universal  machine $U$, an integer $c>o$ and a c.e.\ real $\gamma>0$ such that $\alpha = 2^{-c} \cdot \Omega_{U} + \gamma$ (see Lemma~\ref{lem:solovayrepresentation}).

\section{Kraft-Chaitin Theorem Revisited}

 We start by showing that PA can prove the Kraft-Chaitin Theorem \cite{chaitin87,Ca}.
 
\begin{thm}
\label{thm:kc}
Suppose $(n_i, y_i)_i \in \N \times \Sigma^*$ is a primitive recursive enumeration of ``requests'' which provably satisfies
$
	\sum_{i} 2^{-n_i} \le 1.
$
Then there exists a provably prefix-free machine $M$ and a primitive recursive enumeration $(x_i)_i$ of $\dom(M)$ such that the following is provable in {\rm PA}:\\[-4ex]
\begin{enumerate}
\item	$\mu( \dom(M) ) = \sum_{i} 2^{-n_i}$,
\item	$|x_i| = n_i$ for all $i\in\N$,
\item	$M(x_i) = y_i$ for all $i\in\N$.
\end{enumerate}
\end{thm}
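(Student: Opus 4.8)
The plan is to formalize, inside PA, the classical greedy construction underlying the converse of the Kraft inequality, maintaining at every stage an explicit description of the still-unassigned portion of $[0,1)$ and checking that every invariant it needs is primitive recursive and provable by induction on the number of requests processed. After reading the first $k$ requests I would keep (i) the finite list of assigned code words $x_1,\dots,x_k$ with their outputs $y_1,\dots,y_k$, and (ii) a canonical description of the available region as a finite, pairwise disjoint union of dyadic intervals. The central invariant is that these available intervals all have \emph{distinct} lengths; equivalently, the lengths $2^{-j}$ that occur are exactly those with a $1$ in the finite binary expansion of the remaining measure $1-\sum_{j\le k}2^{-n_j}$. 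Since each step is a bounded search followed by a bounded amount of interval-splitting over this finite state, the transition function is primitive recursive, hence so are the enumeration $(x_i)_i$ of $\dom(M)$ and the machine $M$ defined by $M(x_i)=y_i$.

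The step itself is as follows. To serve the request $(n_i,y_i)$ I locate the available dyadic interval whose length $2^{-m}$ is the smallest with $m\le n_i$ (an interval of exactly the requested length if one exists, otherwise the largest available interval still longer than $2^{-n_i}$). If $m=n_i$ I assign that interval's string as $x_i$. If $m<n_i$ I split it into dyadic pieces of lengths $2^{-(m+1)},\dots,2^{-(n_i-1)}$ together with two pieces of length $2^{-n_i}$, assign one of the length-$2^{-n_i}$ pieces as $x_i$, and return the rest to the pool. Because $2^{-m}$ was chosen minimal among available lengths exceeding the request, none of the lengths $2^{-(m+1)},\dots,2^{-n_i}$ was present beforehand, so distinctness of lengths is preserved; disjointness is immediate from the splitting, and $|x_i|=n_i$ holds by construction.

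The crux is the \emph{room lemma}: at stage $i$ a suitable interval always exists. Here I would use the provable hypothesis $\sum_i 2^{-n_i}\le 1$ together with monotonicity of partial sums to argue, inside PA, that the available measure equals $1-\sum_{j<i}2^{-n_j}\ge 2^{-n_i}$; were every available interval strictly shorter than $2^{-n_i}$, their distinct lengths would sum to less than $2^{-n_i}$, a contradiction, so an interval of length $\ge 2^{-n_i}$ must be available. I expect this measure bookkeeping --- reconciling the finitary ``distinct-length / binary-expansion'' invariant with the Kraft inequality at every finite stage, and justifying the additivity of $\mu$ over the canonical interval description --- to be the main obstacle to a clean PA formalization, since it must be carried out entirely by induction over dyadic rationals rather than by appeal to real analysis.

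Finally, the three conclusions fall out. Property 2, $|x_i|=n_i$, is recorded at each step. Prefix-freeness of $\dom(M)=\{x_i\}_i$ follows from the permanent disjointness of the assigned cylinders (in particular the $x_i$ are distinct, so $M$ is well defined and provably prefix-free). Property 1, $\mu(\dom(M))=\sum_i 2^{-n_i}$, then follows by proving for each $k$ the finite identity $\mu(\{x_1,\dots,x_k\})=\sum_{j\le k}2^{-|x_j|}=\sum_{j\le k}2^{-n_j}$ from disjointness and Property 2, and passing to the c.e.\ limit that defines the measure of the domain.
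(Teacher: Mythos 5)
Your proposal is correct and follows essentially the same route as the paper's proof: a greedy allocation maintaining the free region as a prefix-free family (equivalently, disjoint dyadic intervals) with the distinct-lengths invariant, the same splitting of the chosen block into pieces of lengths $2^{-(m+1)},\dots,2^{-(n_i-1)}$ plus two of length $2^{-n_i}$, and the same "room lemma" (the paper's invariant that $2^{-n}\le\mu(S_i)$ forces a free string of length at most $n$, proved exactly by your distinct-lengths summation argument). The only differences are cosmetic --- the paper phrases the state as string sets $S_i, T_i$ rather than intervals, and your parenthetical "largest available interval" should read "smallest," as your own subsequent minimality argument makes clear.
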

\begin{proof}
Algorithm \ref{alg:kc} below enumerates the graph of $M$.  Intuitively, $S_i$ keeps track of the tree of prefixes we haven't allocated yet.  To start with we have allocated nothing, so $S_0 = \{\epsilon\}$.  At each step we want a string (node) of a given length (depth) $n_i$.  The program selects the deepest leaf it can, then creates the smallest number of new leaves to create the node we need.\\[-1.7ex]

\begin{algorithm}[htb]
\caption{}\label{alg:kc}
\begin{algorithmic}[1]

\State $S_0 = \{\epsilon\}$, $T_0 = \emptyset$, $r_0 = 0$, $i \setto 0$.
\Loop
	\State	Let $s_i$ be the longest element of $S$ of length at most $n_i$.  If no such string exists, terminate.
	\If{$|s_i|=n_i$}
		\State	$S_{i+1} = (S_i \setminus \{s_i\})$.
	\Else
		\State	$S_{i+1} = (S_i \setminus \{s_i\}) \cup \{ s_i1, s_i01, s_i0^21, \ldots, s_i0^{n_i-|s_i|-1}1 \}$.
	\EndIf
	\State	Define $M(s_i0^{n_i-|s_i|}) = y_i$.
	\State	$T_{i+1} = T_i \cup \{s_i0^{n_i-|s_i|}\}$.
	\State	$r_{i+1} = r_i + 2^{-n_i}$.
	\State	$i\setto i+1$.
\EndLoop

\end{algorithmic}
\end{algorithm}

Examining Algorithm~\ref{alg:kc}, it is clear that the sequence $x_i = s_i0^{n_i-|s_i|}$ is a primitive recursive enumeration of $\dom(M)$, and  whenever $x_i$ is defined we have $M(x_i) = y_i$ and $|x_i| = n_i$.  It remains to show that $x_i$ is defined for all $i\in\N$ (i.e. the program never terminates), that $\dom(M)$ is prefix-free, and $\mu(\dom(M)) = \sum_i 2^{-n_i}$.

It suffices to establish, for all $i$, the following invariants: \\[-4ex]
\begin{enumerate}
\item	$S_i \cup T_i$ is prefix-free (which implies that $S_i$ and $T_i$ individually are prefix-free),
\item	$\mu(S_i \cup T_i) = 1$,
\item	$\mu(T_i) = r_i$,
\item	$\sum_{j\ge i} 2^{-n_j} \le \mu(S_i)$,
\item	If $2^{-n} \le \mu(S_i)$, then $S_i$ contains a string of length at most $n$ (equivalently,  $S_i$ contains strings of distinct length).
\end{enumerate}

The base case is trivial.  For the inductive step, first observe that line 3 of Algorithm~\ref{alg:kc} doesn't terminate since $2^{-n_i} \le \mu(S_i)$ by invariant 5.  We see that
\[
	S_{i+1} \cup T_{i+1} = \left( (S_i \cup T_i) \setminus \{s_i\} \right) \cup \{ s_i1, s_i01, s_i0^21, \ldots, s_i0^{n_i-|s_i|-1}1, s_i0^{n_i-|s_i|} \}
\]
which is prefix-free establishing invariant 1.  From this we can see invariant 2 holds:
$
	\mu(S_{i+1} \cup T_{i+1}) = \mu(S_{i} \cup T_{i}) = 1.
$
Next observe invariant 3 holds too:
$
	\mu(T_{i+1}) = \mu(T_i) + 2^{-n_i} = r_{i+1},
$
which implies that
$
	\mu(S_{i+1}) = \mu(S_i) - 2^{-n_i}.
$
From this follows invariant 4:
$
	\sum_{j\ge i+1} 2^{-n_j} \le \mu(S_{i+1}).
$
Finally, since $s_i$ is the longest string of length at most $n_i$ in $S_{i}$, and we add strings of distinct length between $s_i+1$ and $n_i$ to $S_i$ to form $S_{i+1}$, we see that $S_{i+1}$ consists of strings of distinct lengths.  This establishes invariant 5.
\end{proof}

\section{Randomness and Provability}

 In this section we discuss three forms of provability for randomness.

  There are two ways to represent a c.e.\ real number $\alpha \in (0,1)$ in PA: 1) by giving an increasing 1-1 primitive recursive function that enumerates a  c.e.\ prefix-free set of strings $\{s_{i}\}$ such that
 $\alpha = \sum_{i} 2^{-|s_{i}|}$, 2) by giving an increasing primitive recursive sequence $(a_i)_i $ of rationals in the unit interval whose limit is $\alpha$.
 It is clear that given the representation 1) one can effectively get
 the representation 2). The converse is also true.

 \begin{lem}
\label{lem:repce}
Let $\alpha$ be a c.e.\  real defined by the increasing primitive recursive sequence $(a_i)_i $ of rationals.
Then there is a primitive recursive sequence $(n_i)_{i}$ of natural numbers such that {\rm PA} proves
$
	\sum_{i} 2^{-n_i} = \alpha.
$
\end{lem}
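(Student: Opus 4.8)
The plan is to convert representation~2 (the increasing rationals) into representation~1 (a sum of powers of two) by first replacing $(a_i)_i$ with a \emph{dyadic} approximating sequence and then simply reading off the finite binary expansions of the successive increments. The point is that a nonnegative \emph{dyadic} rational has a canonical finite binary expansion, so each increment contributes finitely many exponents $n$ with $2^{-n}$, and there is no need to access the (non-computable) binary expansion of $\alpha$ itself.

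First I would pass to dyadics. Define $a_i' = \floor{2^i a_i}/2^i$ and $b_i = \max_{j\le i} a_j'$. Then $(b_i)_i$ is a primitive recursive, nondecreasing sequence of dyadic rationals with $b_i \le a_i \le \alpha$ and $|a_i - b_i| \le 2^{-i}$; the running maximum is needed because the plain floors $a_i'$ need not increase. Since $|a_i - b_i|\le 2^{-i}$, {\rm PA} proves $\lim_i b_i = \lim_i a_i = \alpha$. Next, set $\delta_0 = b_0$ and $\delta_k = b_k - b_{k-1}\ge 0$ for $k\ge 1$; each $\delta_k$ is a nonnegative dyadic rational, hence has a finite binary expansion $\delta_k = \sum_{l\in E_k} 2^{-l}$, where $E_k$ is a finite set of positive integers computable primitive-recursively from $k$ (with $|E_k|$ and the exponents bounded in terms of $k$). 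Concatenating the blocks $E_0, E_1, E_2, \ldots$ in order produces a single primitive recursive sequence $(n_i)_i$ that enumerates, with multiplicity, all of these exponents; multiplicities and non-prefix-freeness are harmless, since the lemma only asks for $\sum_i 2^{-n_i} = \alpha$.

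The verification inside {\rm PA} rests on a telescoping identity. For each $k$, the sum of $2^{-n_i}$ taken over the first $k+1$ blocks equals $\sum_{j\le k}\delta_j = b_k$; this is proved by induction on $k$, the base case $k=0$ and the inductive step each being a finite dyadic identity that {\rm PA} checks by bounded computation. Thus the nondecreasing sequence of partial sums of $(2^{-n_i})_i$ has $(b_k)_k$ as a cofinal subsequence, every partial sum is $\le\alpha$, and so {\rm PA} proves $\sum_i 2^{-n_i} = \lim_k b_k = \alpha$, where the informal ``limit'' is read as the provable statement that the two increasing sequences agree within $2^{-m}$ beyond a computable bound for every $m$.

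I expect the only real content to lie in the reduction to a dyadic approximation: once the increments $\delta_k$ are dyadic, everything is a bounded finite computation. The two places demanding care are (i)~securing monotonicity via the running maximum, so that the $\delta_k$ are genuinely nonnegative and the expansions $E_k$ make sense, and (ii)~checking that the flattening of the doubly indexed family $(k, E_k)$ into one sequence $(n_i)_i$, together with the block-wise telescoping identity, is carried out by primitive recursive functions and certified by a routine {\rm PA}-induction.
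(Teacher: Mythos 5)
Your route is genuinely different from the paper's, and most of it is sound, but there is one step that does not go through as stated: the claim that concatenating the blocks $E_0, E_1, E_2, \ldots$ yields a \emph{primitive recursive} sequence $(n_i)_i$. The trouble is empty blocks. Whenever the dyadic floor fails to advance, i.e.\ $b_k = b_{k-1}$ (which certainly happens when $a_k$ increases by less than the grid spacing without crossing a grid point), the block $E_k$ is empty, and to compute $n_i$ you must first locate the block containing the $i$-th emitted exponent. Skipping over runs of empty blocks is an unbounded search, and an unbounded search over a primitive recursive predicate yields a computable function, not automatically a primitive recursive one. So the flattening in your point (ii) is not a ``routine PA-induction'': it is exactly where primitive recursiveness is at stake. (A smaller omission of the same kind: you also need infinitely many blocks to be nonempty for $(n_i)_i$ to be an infinite sequence at all; this is true, because $b_k \le a_k < \alpha$ while $b_k \to \alpha$, but it deserves a line.)

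The gap can be closed, and it is worth seeing how. If $b_j$ is constant, equal to $c$, for $K \le j \le k$, then each such $a_j$ lies in $[c, c+2^{-j})$ (since $a_j \ge a_K \ge c$ and $\floor{2^j a_j}/2^j \le c$); strict monotonicity of $(a_j)_j$ then forces $0 < a_{K+1} - a_K < 2^{-k}$, hence $k < -\log_2(a_{K+1} - a_K)$, a bound primitive recursively computable from $K$ (it is at most the bit length of the denominator of the rational $a_{K+1} - a_K$). Iterating this bound---and iteration of a primitive recursive function is primitive recursive---bounds the index of the $i$-th nonempty block, turning your search into a bounded one. Alternatively you can force every block to be nonempty by flooring $a_i$ at a precision $p_i$ chosen so that the floored value provably exceeds $b_{i-1}$, say $p_i = \ceil{-\log_2 (a_i - b_{i-1})} + 1$; but observe that this repair essentially recreates the paper's own proof, which avoids the issue from the outset: there one greedily emits exactly one exponent per stage, $n_i = \ceil{-\log_2(a_i - r_{i-1})}$ with $r_i = r_{i-1} + 2^{-n_i}$, so that provably $(a_i + r_{i-1})/2 \le r_i \le a_i$, and $\sum_i 2^{-n_i} = \alpha$ follows by squeezing. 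One exponent per index means no flattening, no empty-block bookkeeping, and a shorter PA verification.
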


\begin{proof}

Without loss of generality assume $a_1>0$.  Define the primitive recursive sequences $(r_i)_i \in \Q$ and $(n_i)_i \in \N$ by $r_0=0$ and for $i\ge 1$ by 
\[	n_i = \ceil{-\log_2 (a_i-r_{i-1})}, 
	r_i = r_{i-1} + 2^{-n_i}.\]
Since $(r_i)_i$ is strictly increasing we can establish by induction the inequality $r_{i-1} < a_i$ for all $i$, making the logarithm well-defined.  By construction we have
$
	\sum_{i} 2^{-n_i} = \lim_{i\to\infty} r_i.
$
Define $\beta = \lim_{i\to\infty} r_i$.
Since 
$
	-\log_2 (a_i-r_{i-1}) \le n_i \le -\log_2 (a_i-r_{i-1})+1
$
we have
$
	 (a_i+r_{i-1})/2  \le r_i  \le  a_i.
$
Taking the limit we see that $(\alpha+\beta)/2 \le \beta \le \alpha$ establishing our result.
%
\end{proof}

\begin{cor}
Let $\alpha$ be a c.e.\  real defined by the increasing primitive recursive sequence $(a_i)_i $ of rationals.
Then there is a  machine $M$ such that {\rm PA} proves that $\alpha =\mu(\dom(M))$.
\end{cor}

\begin{proof} Use Lemma~\ref{lem:repce} and Theorem~\ref{thm:kc}.
\end{proof}

In what follows a c.e.\ real is given by one of the above representations.

  A c.e.\ real $\alpha$ is {\em provably Chaitin-random} if there exists a provably universal   machine $U$ and constant $c$ such that {\rm PA}  proves that for all $n\ge 1$, $H_{U}(\alpha (n))\ge n-c$.
  A c.e.\ real $\alpha$ is {\em provably  ML-random}  if for every set $A$ which {\rm PA}  proves to be a ML test and {\rm PA}  proves that   there exists an $i$ such that  $\alpha \not\in A_{i}$.

  The classical theorem that states that {\em a real is Chaitin-random iff
  it is  ML-random} is provable in PA. 
 However, for the goal of this paper only one implication is needed:

 \begin{thm}\label{thm:chaitintoML}
 Every c.e.\ provably Chaitin-random real is provably  ML-random.
 \end{thm}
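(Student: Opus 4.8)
The plan is to run the classical argument that a Martin-L\"of covering of $\alpha$ would compress its prefixes, contradicting Chaitin-randomness, but to arrange every ingredient so that it is primitive recursive and PA-provable. The only non-trivial tool will be the provable Kraft--Chaitin Theorem (Theorem~\ref{thm:kc}). Concretely, given any set $A=(A_n)$ that PA proves to be an ML test, I would exhibit an explicit index $i$ together with a PA-proof of $\alpha\notin A_i$; since the construction is uniform in (a description of) $A$ and in the PA-proof of its being an ML test, this yields the required metatheoretic implication.

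First I would turn the test into Kraft--Chaitin requests. Since PA proves that $A$ is an ML test, it proves, uniformly in $n$, that each $A_n$ is a c.e.\ open set with $\mu(A_n)\le 2^{-n}$. Replacing each $A_{2n}$ by a prefix-free antichain $G_n\subseteq\Sigma^*$ generating the same open set (a routine, measure-preserving ``disjointification'' that is uniformly primitive recursive and that PA verifies), I form, for every $n\ge 1$ and every $\sigma\in G_n$, the request $(|\sigma|-n,\sigma)$. Because $[\sigma]\subseteq A_{2n}$ forces $2^{-|\sigma|}\le\mu(A_{2n})\le 2^{-2n}$, we have $|\sigma|\ge 2n$, so every requested length is nonnegative. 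The total weight is
\[
	\sum_{n\ge 1}\sum_{\sigma\in G_n} 2^{-(|\sigma|-n)} \;=\; \sum_{n\ge 1} 2^{n}\,\mu(A_{2n}) \;\le\; \sum_{n\ge 1} 2^{n}2^{-2n} \;=\; 1,
\]
and this bound is provable in PA from the provable measure bounds on $A$. Feeding this primitive recursive enumeration of requests into Theorem~\ref{thm:kc} produces a provably prefix-free machine $M$ for which PA proves $H_M(\sigma)\le |\sigma|-n$ for every $n\ge 1$ and every $\sigma\in G_n$.

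Next I would cash in provable universality. As PA proves $U$ universal and $M$ is an explicitly given primitive recursive machine, PA proves that there is a constant $c_M$ with $H_U(x)\le H_M(x)+c_M$ for all $x$, and a concrete value of $c_M$ can be extracted from the simulation. Let $c$ be the constant witnessing provable Chaitin-randomness and set $N=c+c_M+1$; I claim $\mathrm{PA}\vdash \alpha\notin A_{2N}$, which gives $\mathrm{PA}\vdash\exists i\,(\alpha\notin A_i)$. Arguing inside PA by contradiction, suppose $\alpha\in A_{2N}$. Then some $\sigma\in G_N$ is a prefix of $\alpha$, i.e.\ $\sigma=\alpha(|\sigma|)$, whence
\[
	H_U(\alpha(|\sigma|)) \;\le\; H_M(\sigma)+c_M \;\le\; (|\sigma|-N)+c_M.
\]
Provable Chaitin-randomness gives $H_U(\alpha(|\sigma|))\ge |\sigma|-c$, so $|\sigma|-c\le |\sigma|-N+c_M$, i.e.\ $N\le c+c_M$, contradicting $N=c+c_M+1$. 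Hence $\alpha\notin A_{2N}$.

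The step I expect to be the main obstacle is not the inequality juggling but arranging the construction so that one constant $c_M$ serves all levels at once. If one built $M$ from the single level $A_{2N}$, the resulting overhead $c_M$ would depend on $N$, and the choice $N=c+c_M+1$ would be circular; summing the requests over all $n$ produces one machine $M$, hence one $c_M$, which one fixes \emph{before} choosing $N$. This is precisely why the total-weight computation above must run over all levels. The remaining work is purely formalization: making the passage from c.e.\ open sets to uniformly primitive recursive prefix-free generating sets $G_n$, and checking within PA that it preserves measure, so that the hypothesis $\sum_i 2^{-n_i}\le 1$ of Theorem~\ref{thm:kc} is genuinely available.
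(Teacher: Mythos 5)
Your proposal is correct and follows essentially the same route as the paper's proof: both convert accelerated levels of the ML test (you use $A_{2n}$, the paper uses $A_{n^2}$) into prefix-free generating sets, feed the requests $(|\sigma|-n,\sigma)$ into the provable Kraft--Chaitin Theorem~\ref{thm:kc} to obtain one machine $M$, invoke the provable universality constant for $U$ over $M$, and derive $\alpha\notin A_{i}$ at a level fixed by the two constants. The only differences are the choice of acceleration and your (welcome) explicit remark that a single $M$ must serve all levels to avoid circularity in choosing $N$ --- a point the paper handles implicitly in the same way.
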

\begin{proof}
 Take a c.e. real $\alpha$,  a  machine $U$ which is provably universal  and a natural $c>0$ such that  {\rm PA} proves that for all $n\ge 1$,
 $H_{U}(\alpha (n)) \ge n-c$.

We wish to prove that for every   $A= (A_{n})$   which {\rm PA} proves to be a ML test there exists an $i$ such that {\rm PA} proves that $\alpha \not\in A_{i}$. 
Following the proof of  Proposition~6.3.4 in \cite{Ca} it follows that PA proves the existence of     a c.e.\ set $S\subset \Sigma^{*}\times \N$
such that each $S_{i} = \{x \in \Sigma^{*} \mid (x,i)\in S\}$ is prefix-free, and by taking $A_{i} = \{\beta \mid \beta(m)\in S_{i}, \mbox{  for some } m\ge 1\}$ we get $\mu(A_{i}) =
\sum_{s\in S_{i}} 2^{-|s|}$.
 
Let $g \maps \N \to \Sigma^* \times \Sigma^*$ be a 1-1 primitive recursive enumeration of the graph of $U$.  Denote by $\pi_i \maps \Sigma^* \times \Sigma^* \to \Sigma^{*}$ for $i=1,2$ the projection functions and  $f(i) = \pi_1(g(i))$ is a 1-1 primitive recursive enumeration of $\dom(U)$.
Note that $H_U(x)$ can be expressed in {\rm PA}:
$H_U(x) = \min_i \{ |v|: U(v) = x \} 
		= \min_i \{ |\pi_1(g(i))| \mid  \pi_2(g(i)) = x \}$,

We have:
 \[\sum_{n\ge 2} \sum_{s \in S_{n^{2}} }  2^{-\lfloor |s| -n\rfloor} = \sum_{n\ge 2} 2^{n}\mu(S_{n^{2}}) \le \sum_{n\ge 2} 2^{n}2^{-n^{2}}\le 1.\]
 
 We can now use Theorem~\ref{thm:kc}:  There exists a provably prefix-free  machine $M$
 such that: $ \mu(\dom(M)) = \sum_{n\ge 2} \sum_{s \in S_{n^{2}} }  2^{-\lfloor |s| -n\rfloor}$, $\dom(M) = \{r_{n,s} \mid n\ge 2, s\in S_{n^{2}}, |r_{n,s} |= |s|-n\}, M(r_{n,s} )=s.$ Since $U$ is provably universal there is a constant $d\ge 1$ such that for all strings $x, H_{U}(x) \le H_{M}(x)+d$, so in particular, {\rm PA} proves that for all $n\ge 2$, if $s\in S_{n^{2}}$, then
 $H_{U}(s) \le H_{M}(s)+d \le |s|-n+d < |s|-n+d+1.$
 
 We are now in a position to find a natural $n\ge 2$ such that {\rm PA} proves that $\alpha \notin A_{n^{2}}$ showing that $\alpha$ is provably ML-random.
   Note  that for $n\ge 2$, {\rm PA} proves  $\alpha \notin A_{n^{2}}$ iff for all $m\ge 1$,
  {\rm PA} proves that $\alpha(m) \notin S_{n^{2}}$.   For all $m\ge 1$,  {\rm PA} proves that  $\alpha(m) \in S_{n^{2}}$ implies $H_{U}(\alpha(m))  < m-n+d+1.$ Hence, for $2 \le n < c+d+1$, {\rm PA} proves that $H_{U}(\alpha(m)) \ge m-c$ implies $\alpha(m) \notin S_{n^{2}}$, so because $\alpha$ is provably Chaitin-random {\rm PA} proves that $\alpha \notin A_{n^{2}}$.
 \end{proof}

{\bf Comment } The above proof shows that that
$T^{U}_{m} = \{\beta \mid H_{U}(\beta (n)) < n-m, \mbox{  for some  } n\ge 1\}$, where $U$ is a provably universal  machine, is a
provably ML test such that for all $n\ge 2$ and provably  ML test $A$ there exists  $d>0$ such that {\rm PA} proves  the inclusion $A_{n^{2}} \subset T^{U}_{n-d-1}$, i.e. $(T^{U}_{m})^{}_{m}$ is a provably universal ML test.

\medskip

To be able to complete our program we need to choose a specific representation for a c.e.\ and random real which can be ``understood'' by {\rm PA} and, even more importantly, {\rm PA} can extract from it a proof of the
randomness of the real (c.e.\ is obvious). First we  work with Solovay representation formula discussed at the end of Section~\ref{ait}.

   A real $\alpha$ is {\em c.e.\ and provably random} if there exists a representation of $\alpha$ in the form 
\begin{equation}
\label{solovayformula}
\alpha = 2^{-c}\cdot\Omega_V + \gamma,
\end{equation}
  where $V$ is a provably universal  machine, $c\ge 0$ is an integer and $\gamma>0$ is a provably c.e.\  real.  Theorem~\ref{lem:solovayrepresentation}  shows that all c.e.\  random reals have a representation of this form.
In detail, {\rm PA}  receives an algorithm for a machine  $V$, a proof that $V$ is prefix-free and universal, an
 integer $c\ge 0$ and a computable increasing sequence of rational converging to a real $\gamma >0$. The goal is to prove that {\rm PA} can use this information to prove that $\alpha = 2^{-c}\cdot\Omega_V + \gamma$ is c.e.\ and random.

\section{Chaitin's Theorem Revisited}
  Chaitin \cite{chaitin75} proved that the halting probability of a universal  machine is Chaitin-random.
  This theorem is provable in {\rm PA}:

\begin{thm}
\label{thm:chaitin}
Suppose $U$ is provably universal.  Then 
$
	\Omega_U = \sum_{p \in \dom(U)} 2^{-|p|}
$
is provably Chaitin-random.
\end{thm}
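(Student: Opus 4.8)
The plan is to formalize the classical argument that the first $n$ bits of $\Omega_U$ encode the solution to the halting problem for all $U$-programs of length at most $n$, and to verify that every step survives inside PA, with $U$ itself serving as the witnessing provably universal machine.

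First I would fix the primitive recursive nondecreasing approximation $\Omega_U^{(s)} = \sum\{2^{-|p|} : p \in \dom(U) \text{ enumerated by stage } s\}$, whose provable bound $\Omega_U^{(s)} \le 1$ comes from prefix-freeness (Kraft) and whose limit is $\Omega_U$. The key combinatorial fact to establish in PA is this: for each $n$, if $s(n)$ is the least stage with $\Omega_U^{(s)} \ge 0.\Omega_U(n)$, then the residual mass satisfies $\Omega_U - \Omega_U^{(s(n))} \le \Omega_U - 0.\Omega_U(n) < 2^{-n}$, and consequently any $p \in \dom(U)$ with $|p| \le n$ must already be enumerated by stage $s(n)$, since otherwise its contribution $2^{-|p|} \ge 2^{-n}$ alone would exceed the residual. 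Hence from the string $\Omega_U(n)$ one computes $s(n)$ and then the complete finite set $R_n = \{U(p) : p \in \dom(U),\ |p| \le n\}$ of all strings of $U$-complexity at most $n$.

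Next I would package this as a single provably prefix-free machine $C$: on input $w$, $C$ runs $U(w)$; if $U(w)$ is a string $y$ of some length $n$, it treats $y$ as a candidate value of $\Omega_U(n)$, searches for the least stage $s$ with $\Omega_U^{(s)} \ge 0.y$, forms $R_n$, and outputs the least string $z_n$ not in $R_n$. Since $\dom(C) \subseteq \dom(U)$, prefix-freeness of $C$ is inherited and provable. Feeding $C$ a shortest $U$-program $p$ for $\Omega_U(n)$ (so $|p| = H_U(\Omega_U(n))$ and the true truncation $y = \Omega_U(n)$ is supplied) gives $C(p) = z_n$, whence $H_C(z_n) \le H_U(\Omega_U(n))$. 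By construction $z_n \notin R_n$ forces $H_U(z_n) > n$, while the provable universality of $U$ supplies a constant $d$ with $H_U(z_n) \le H_C(z_n) + d$. Chaining these provable inequalities yields $n < H_U(z_n) \le H_U(\Omega_U(n)) + d$, so PA proves $H_U(\Omega_U(n)) \ge n - c$ for all $n$ with $c = d$, which is provable Chaitin-randomness.

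The main obstacle is to guarantee, inside PA and uniformly in $n$, that the search defining $s(n)$ terminates, equivalently that $0.\Omega_U(n) < \Omega_U$ for every $n$, so that $C$ halts on $p$. This amounts to proving that $\Omega_U$ is not a dyadic rational, and it cannot be extracted from randomness without circularity. I would therefore prove it directly: were $\Omega_U$ equal to a concrete rational $r$, then for each $n$ one could run the approximation until $\Omega_U^{(s)} > r - 2^{-n}$, a stage that provably exists since $r - 2^{-n} < r = \lim_s \Omega_U^{(s)}$, leaving residual mass below $2^{-n}$ and hence deciding which $U$-programs of length at most $n$ halt; this contradicts the PA-provable undecidability of the halting problem of the universal machine $U$. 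Non-dyadicity simultaneously secures the strict residual bound $\Omega_U - 0.\Omega_U(n) < 2^{-n}$ used above. The remaining points, namely handling $\Omega_U(n)$ (whose bits are not computable) as a truncation of the ZFC-interpreted real, checking that $C$ is provably prefix-free, and extracting the universality constant $d$ for $C$ from the provable universality of $U$, are routine in the paper's Solovay style.
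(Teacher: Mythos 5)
Your proposal is correct and takes essentially the same route as the paper's proof: both formalize Chaitin's original argument in PA by building an auxiliary provably prefix-free machine $C$ (with $\dom(C)\subseteq\dom(U)$) that, on a $U$-program for the truncation $\Omega_U(n)$, waits for the primitive recursive approximation to reach $0.\Omega_U(n)$ and then outputs a string of $U$-complexity exceeding $n$, after which the provable universality of $U$ supplies the constant. The only real divergence is in the supporting details: you explicitly justify termination of the stage search by giving a PA-proof that $\Omega_U$ is not dyadic (via undecidability), a point the paper simply asserts as $0.\Omega_1\cdots\Omega_n < \Omega_U$, whereas the paper explicitly proves $U$ is provably onto (via the machine $M(0^{|x|}1x)=x$) to guarantee a program for $\Omega_1\cdots\Omega_n$ exists, a step you leave implicit.
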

\begin{proof}
Let $g \maps \N \to \Sigma^* \times \Sigma^*$ be a 1-1 primitive recursive enumeration of the graph of $U$.  Denote by $\pi_i \maps \Sigma^* \times \Sigma^* \to \Sigma$ for $i=1,2$ the projection functions and  $f(i) = \pi_1(g(i))$ is a 1-1 primitive recursive enumeration of $\dom(U)$.
Recall that $H_U(x)$ can be expressed in {\rm PA}.
\if01
\begin{align*}
	H_U(x) &= \min_i \{ |v|: U(v) = x \} \\
		&= \min_i \{ |\pi_1(g(i))| : \pi_2(g(i)) = x \}.
\end{align*}

\[H_U(x) = \min_i \{ |v| \mid U(v) = x \} \\
		= \min_i \{ |\pi_1(g(i))| \mid \pi_2(g(i)) = x \}.\]
		\fi
Define the  machine $M$ by $M(0^{|x|}1x) = x$.  Since $U$ is provably universal, there is a $c$ such that for all $x$, 
$
	H_U(x) \le H_M(x) + c = 2|x| + c+1.
$
This shows that $H_U(x)$ is provably total and  $U$ is provably onto.

Define the primitive recursive sequence of rationals
$
	\omega_k = \sum_{i}^{k} 2^{-|\pi_1(g(i))|}
$
and notice that $(\omega_k)_{k}$  is provably strictly increasing;  $\Omega_U$ is, by definition, the limit of this sequence.

Define $C(x) = v$  if there exist $t$,$j$ such that\\[-4ex]
\begin{enumerate}
\item	$\pi_1(g(j)) = x$ (i.e. $x\in\dom(U)$),
\item	$t$ is the least such that $0.\pi_2(g(j)) \le \omega_t$ (i.e. $0.U(x)\le\omega_t$),
\item	$v$ is the lexicographically least string such that $v \neq \pi_2(g(s))$ for all $1\le s\le t$.
\end{enumerate}

This defines a provably prefix-free machine.  Observe that if $C(x)$ is defined and $U(x)=U(x')$ then $C(x)=C(x')$.  From this we can establish that whenever $C(x)$ is defined we have
$
	H_C(C(x)) \le H_U(U(x)).
$
As $U$ is provably universal,  there exists an $a$ such that for all $y$,
$
	H_U(y) \le H_C(y) + a
$
is provable in {\rm PA}.

Denote by $\Omega_i$ the $i$th  digit  of $\Omega_U$.  Since $U$ is provable onto, for each $n$ there exists a string $x_n$ such that
$
	U(x_n) = \Omega_1 \cdots \Omega_n.
$
Since $0.\Omega_{1}\cdots\Omega_n < \Omega_U$ we know that $C(x_n)$ is defined.  Let $t$ be  the least natural (found when evaluating $C(x_n)$) such that
$
	0.U(x_n) \le \omega_t < \Omega_U < 0.U(x_n) + 2^{-n}.
$
 The inequality
$
	\sum_{i\ge t+1} 2^{-\pi_1(g(i))} \le 2^{-n}
$ is easy consequence,
 so for all $i\ge t+1$ we have $|\pi_2(g(i))| \ge n$.  Since $C(x_n)$ equals $g(i)$ for some $i\ge t+1$ by construction, we have that for all $n$
\if01
\begin{align*}
	n 	&\le H_U(C(x_n)) \\
		&\le H_U(U(x_n)) + a \\
		&= H_U(\Omega_1 \cdots \Omega_n) +a
\end{align*}
\fi
\[n 	\le H_U(C(x_n)) \\
		\le H_U(U(x_n)) + a \\
		= H_U(\Omega_1 \cdots \Omega_n) +a\]
is provable in {\rm PA}.  That is, $\Omega$ is provably Chaitin-random.
\end{proof}

 From Theorem~\ref{thm:chaitin} we deduce that {\rm PA} can prove the implication:  ``if $U$ is a provably  universal   machine, then $\Omega_{U}$ is Chaitin-random.''  We know that every
  c.e.\ and random real is the halting probability of a universal  machine, but we need more: {\it Can any c.e.\ and random real be represented
as the halting probability of a {\rm provably}  universal   machine?}  First we have to check   whether every  universal   machine
is provably  universal.

\begin{thm}
\label{thm:nonprovunivmachine} There exist a provably universal  machine and  a  universal   machine that is not provably universal.
\end{thm}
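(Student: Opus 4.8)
The plan is to prove the two existence claims separately; the first is routine and the second is the heart of the matter.

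For the provably universal machine I would use the standard padding construction and check that PA verifies it. Fix a primitive recursive enumeration $(C_e)_e$ of all prefix-free machines, obtained by running the $e$-th Turing machine while enforcing prefix-freeness on the fly (accept an input $p$ only when neither $p$ nor any of its prefixes or extensions has already been accepted); this transformation is primitive recursive and leaves a genuinely prefix-free machine unchanged, so every machine occurs as some $C_e$. Define $U(0^e 1 p) = C_e(p)$. Then $\dom(U)$ is prefix-free, and PA proves, for each index $e$, that $C_e(s)=t$ implies $U(0^e1s)=t$ with $|0^e1s| = |s| + e + 1$. Hence PA proves that for every $e$ the constant $c = e+1$ witnesses the simulation property, i.e. PA proves $U$ universal. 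This part needs only that PA can reason about a primitive recursive construction, which is unproblematic.

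For the second claim I would start from the provably universal $U$ just built and sabotage it conditionally on the consistency of PA. Let $\mathrm{Incon}(n)$ be the primitive recursive predicate ``$n$ codes a PA-proof of $0=1$'', so that $\mathrm{Con(PA)} = \forall n\,\neg\mathrm{Incon}(n)$ is a true (since ZFC proves PA consistent) but, by G\"odel's second incompleteness theorem, PA-unprovable $\Pi_1$ sentence. Enumerate $\dom(U) = \{d_0, d_1, \ldots\}$ primitive recursively with $U(d_i) = u_i$, and define $V$ by copying $U$ step by step but halting the enumeration of its graph as soon as a stage $i$ is reached at which some $n \le i$ satisfies $\mathrm{Incon}(n)$. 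Since $\dom(V) \subseteq \dom(U)$, the machine $V$ is automatically prefix-free. If PA is consistent the enumeration never halts and $V = U$, so $V$ is genuinely universal; if PA were inconsistent, then $\dom(V)$ would be finite.

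It remains to see that $V$ is not provably universal, which is where I expect the real work to lie. I would argue inside PA: a universal machine is onto (this is exactly the ontoness established in the proof of Theorem~\ref{thm:chaitin}), whereas a machine with finite domain has finite range and so cannot be onto. Therefore PA proves the implication ``$V$ is universal $\to \dom(V)$ is infinite $\to \mathrm{Con(PA)}$''. Consequently, were PA to prove $V$ universal it would prove $\mathrm{Con(PA)}$, contradicting G\"odel's second incompleteness theorem; hence PA does not prove $V$ universal. Combined with the fact that $V$ is genuinely universal, this produces a universal machine that is not provably universal, which together with the machine $U$ from the first part completes the proof. The main obstacle is making PA carry out the implication ``$V$ universal $\Rightarrow \mathrm{Con(PA)}$'' cleanly, and for this the provable ontoness of universal machines furnished by Theorem~\ref{thm:chaitin} is the key ingredient.
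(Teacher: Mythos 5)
Your proof is correct, and its second (main) half takes a genuinely different route from the paper's. The first half is essentially the paper's own construction: both pad an effective enumeration of machines whose prefix-freeness PA can verify (the paper enumerates provably prefix-free machines by proof search, you enforce prefix-freeness syntactically on the fly; either way PA proves that a genuinely prefix-free machine is reproduced exactly, giving the simulation constant $e+1$). For the second half, however, the paper avoids G\"odel's second incompleteness theorem entirely: it fixes a universal machine $U$ and a uniform family $T_{g(i)}$ which copies $U$ on inputs of length less than the current cardinality of the range of the $i$-th primitive recursive function $f_i$, so that $T_{g(i)}$ is universal iff $f_i$ has infinite range; since $\{i \mid f_i(\N) \mbox{ is infinite}\}$ is not c.e., while $\{i \mid \mbox{PA proves that } T_{g(i)} \mbox{ is universal}\}$ is c.e.\ and (by soundness) contained in it, the two sets differ, so some universal $T_{g(i)}$ is not provably universal. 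Your argument instead cripples a single machine with a search for a PA-inconsistency, and converts a hypothetical PA-proof of its universality into a PA-proof of $\mathrm{Con(PA)}$ via the PA-provable implication ``universal $\Rightarrow$ onto $\Rightarrow$ infinite domain'', contradicting G\"odel II; that implication is indeed available exactly as in the proof of Theorem~\ref{thm:chaitin}, as you note. What each approach buys: yours is explicit (it names the offending machine), needs only the consistency of PA rather than any soundness assumption, and relativizes to any consistent c.e.\ extension of PA; the paper's is non-constructive (it does not identify the index $i$), but it needs no incompleteness theorem, resting instead on the purely computability-theoretic fact that the index set of primitive recursive functions with infinite range is not c.e., together with the paper's standing soundness assumption to rule out PA proving universality of the non-universal members of the family. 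Both routes hinge on the same provable implication that a universal machine must have infinite domain.
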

\begin{proof} The set of all  provably prefix-free machines is c.e., so if 
$(M_i)_{i}$ is a computably enumeration of provably prefix-free machines, then the machine $U$ defined by $U(0^i1x) = M_i(x)$
is a provably universal   machine.

Let $(f_{i})_{i}$ be a c.e.\ enumeration of all primitive recursive functions $f_{i}: \N \rightarrow
\Sigma^{*}$ and $(T_{i})$ a c.e.\ enumeration of all  machines. Fix a universal   machine $U$ and consider the computable function $g: \N \rightarrow \N$ such that:

\[ T_{g(i)}(x) = \left\{ \begin{array}{ll}
 U(x), & \mbox{\rm if for some $j>0,  \#\{f_{i}(1), f_{i}(2), \ldots ,f_{i}(j)\}> |x|$}, \\
\infty, & \mbox{\rm otherwise} \,.
  \end{array} \right.\]

For every $i$, $T_{g(i)}$ is a   universal machine iff $f_{i}(\N)$ is infinite (if $f_{i}(\N)$ is finite, then so is  $T_{g(i)}$). Since the set
of all indices of primitive recursive functions with infinite range is not c.e.\
it follows that there is an $i$ such that  {\rm PA} cannot prove that  $ T_{g(i)}$ is  universal.
\end{proof}

Theorem~\ref{thm:nonprovunivmachine} does not imply a negative answer for the previous question; in
fact, Corollary~\ref{cor:secondrep}  shows that the answer is affirmative.  Theorem~\ref{thm:nonprovunivmachine} produces examples of true and unprovable (in PA) statements of the form ``$V$ is universal''.

\section{Provably C.E.\ Random   Reals}

In this section we sharpen Theorem~\ref{representation:thm}  by proving
 that  a  real is provably c.e.\ and Chaitin-random  iff it is provable that the real is the halting probability of a provably universal  machine.

\if01
\begin{lem}
\label{lem:repce}
Let $\alpha$ be a c.e.\  real defined by the increasing primitive recursive sequence $(a_i)_i $ of rationals.
Then there is a primitive recursive sequence $(n_i)_{i}$ of natural numbers such that {\rm PA} proves
\[
	\sum_{i} 2^{-n_i} = \alpha.
\]
\end{lem}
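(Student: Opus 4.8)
The plan is to build the sequence $(n_i)_i$ by a greedy dyadic approximation of $\alpha$: at each stage I add the largest power of two that keeps the running partial sum at or below the current rational $a_i$. Concretely I set $r_0 = 0$ and, for $i \ge 1$,
\[
	n_i = \ceil{-\log_2(a_i - r_{i-1})}, \qquad r_i = r_{i-1} + 2^{-n_i},
\]
so that the partial sums telescope, $\sum_{j \le i} 2^{-n_j} = r_i$, whence $\sum_i 2^{-n_i} = \lim_i r_i =: \beta$. After dropping finitely many initial terms and thinning, I may assume $a_1 > 0$ and that $(a_i)_i$ is strictly increasing, which is harmless for a c.e.\ real and is what gets the recursion off the ground.

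First I would check that $(n_i)_i$ and $(r_i)_i$ are genuinely primitive recursive. As the $a_i$ are produced primitive-recursively as rationals, and arithmetic on rationals, the base-two logarithm of a positive rational, and the ceiling are all primitive recursive, the only delicate point is that the logarithm is applied only to strictly positive arguments. I would secure this, together with the bound $r_i \le a_i$, by a single induction in PA: for $i=1$ the base case $r_0 = 0 < a_1$ makes $n_1, r_1$ defined and the estimate below gives $r_1 \le a_1$; for the step, $r_i \le a_i$ together with strict monotonicity $a_i < a_{i+1}$ gives $r_i < a_{i+1}$, so $n_{i+1}, r_{i+1}$ are defined with $r_{i+1} \le a_{i+1}$. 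In particular every $a_i - r_{i-1}$ is positive and every $n_i$ is a genuine natural number.

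The analytic heart is a two-sided estimate read straight off the ceiling. From $n_i \ge -\log_2(a_i - r_{i-1})$ I get $2^{-n_i} \le a_i - r_{i-1}$, hence $r_i \le a_i$; from $n_i \le -\log_2(a_i - r_{i-1}) + 1$ I get $2^{-n_i} \ge (a_i - r_{i-1})/2$, hence $(a_i + r_{i-1})/2 \le r_i$. Each is a finite arithmetic fact about rationals, so PA proves it for every $i$ and, via the induction schema, for all $i$ at once. Passing to the limit, $r_i \le a_i$ gives $\beta \le \alpha$, while $(a_i + r_{i-1})/2 \le r_i$ gives $(\alpha + \beta)/2 \le \beta$ and hence $\alpha \le \beta$; together these yield $\alpha = \beta = \sum_i 2^{-n_i}$.

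The real work is not this calculation but phrasing the passage to the limit so that PA can perform it, since PA does not quantify over the reals $\alpha$ and $\beta$ themselves. I would read the target equality $\sum_i 2^{-n_i} = \alpha$ as the assertion that the two increasing primitive recursive rational sequences $(r_i)_i$ and $(a_i)_i$ have a common limit, encoded as the PA-provable statement that for every rational $\varepsilon > 0$ there is an $N$ with $0 \le a_i - r_i < \varepsilon$ for all $i \ge N$. This follows from the sharper consequence $a_i - r_i \le 2^{-n_i}$ of the two estimates, together with $2^{-n_i} = r_i - r_{i-1} \to 0$ (the increments of the bounded increasing sequence $(r_i)_i$). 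It is precisely this uniform, rational-level statement, and not any manipulation of the reals themselves, that PA certifies; getting that translation right is the only subtle point of the argument.
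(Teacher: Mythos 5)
Your proposal is correct and follows essentially the same route as the paper's own proof: the identical greedy recursion $n_i = \ceil{-\log_2(a_i - r_{i-1})}$, $r_i = r_{i-1} + 2^{-n_i}$, the same two-sided ceiling estimate yielding $(a_i + r_{i-1})/2 \le r_i \le a_i$, and the same squeeze $(\alpha+\beta)/2 \le \beta \le \alpha$ in the limit. Your added care about well-definedness (securing $r_{i-1} < a_i$ by induction, using strict monotonicity) and about how PA expresses the limit as a statement on rational sequences only makes explicit what the paper leaves as routine formalisation.
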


\begin{proof}

Without loss of generality assume $a_1>0$.  Define the primitive recursive sequences $(r_i)_i \in \Q$ and $(n_i)_i \in \N$ by $r_0=0$ and for $i\ge 1$ by 
\begin{align*}
	n_i &= \ceil{-\log_2 (a_i-r_{i-1})}, \\
	r_i &= r_{i-1} + 2^{-n_i}.
\end{align*}
Since $(r_i)_i$ is strictly increasing we can establish by induction $r_{i-1} < a_i$ for all $i$, making the logarithm well-defined.  By construction we have
\[
	\sum_{i} 2^{-n_i} = \lim_{i\to\infty} r_i.
\]
Define $\beta = \lim_{i\to\infty} r_i$.
Since 
\[
	-\log_2 (a_i-r_{i-1}) \le n_i \le -\log_2 (a_i-r_{i-1})+1
\]
we have
\[
	 (a_i+r_{i-1})/2  \le r_i  \le  a_i.
\]
Taking the limit we see that $(\alpha+\beta)/2 \le \beta \le \alpha$ establishing our result.
%
\end{proof}

\begin{cor}
Let $\alpha$ be a c.e.\  real defined by the increasing primitive recursive sequence $(a_i)_i $ of rationals.
Then there is a  machine $M$ such that {\rm PA} proves that $\alpha =\mu(\dom(M))$.
\end{cor}
 \fi

According to 
Solovay \cite{solovaymanu}
a c.e.\ real $\alpha$ {\em Solovay dominates } a c.e.\ real $\beta$  (we write
$\beta \leq_S \alpha$)  if 
there are two computable, increasing  sequences $(a_i)_i$
and $(b_i)_i$ of rationals and a constant $c$ with
$\lim_{n\rightarrow \infty} a_n = \alpha$,
$\lim_{n\rightarrow \infty} b_n =\beta$, and
$c (\alpha - a_n) \geq \beta - b_n$, for all $n$.

For  c.e.\ reals $\alpha, \beta$, {\rm PA} proves
$\beta \leq_S \alpha$  if 
there are two primitive recursive, increasing  sequences $(a_i)_i$
and $(b_i)_i$ of rationals and a constant $c$  such that {\rm PA} proves
$\lim_{n\rightarrow \infty} a_n = \alpha$,
$\lim_{n\rightarrow \infty} b_n =\beta$, and
$c (\alpha - a_n) \geq \beta - b_n$, for all $n$.



\begin{thm}\label{thm:dominates}
If $\alpha$ is c.e.\ and provably 
ML-random, and $\beta$ is c.e., then $\beta \le_S \alpha$ is provable in {\rm PA}.
\end{thm}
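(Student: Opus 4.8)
The plan is to prove Solovay domination by a test-based argument that directly exploits the provable ML-randomness of $\alpha$. From the approximations of $\alpha$ and $\beta$ I would manufacture a uniformly c.e.\ sequence of open sets $(A_k)_k$ that {\rm PA} can verify is a ML test, engineered so that failure of the domination inequality at ``level $k$'' forces $\alpha\in A_k$. Provable ML-randomness then yields, inside {\rm PA}, a level that $\alpha$ escapes, and that level hands back the Solovay constant. For the setup, fix increasing primitive recursive rational approximations $(a_s)_s\uparrow\alpha$ and $(b_s)_s\uparrow\beta$, without loss of generality in $(0,1)$ with $a_0=b_0=0$. The negation of ``domination with constant $2^k$'' at stage $s$ reads $\alpha-a_s<2^{-k}(\beta-b_s)$, i.e.\ $\alpha\in[a_s,\,a_s+2^{-k}(\beta-b_s))$. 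The basic catching mechanism is that every such ``bad-stage'' interval contains the single point $\alpha$, so all of them overlap and their union is an interval of measure at most $2^{-k}(\beta-b_0)\le 2^{-k}$.

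The set $A_k$ is the enumerated-from-below version of this union, built from intervals $[a_s,\,a_s+2^{-k}(b_t-b_s))$ over witnessed pairs $s<t$ with $a_t-a_s<2^{-k}(b_t-b_s)$. Two facts drive the construction: each such interval provably contains $a_t$ (since $a_s\le a_t<a_s+2^{-k}(b_t-b_s)$), which legitimises the enumeration; and if $s$ is \emph{genuinely} bad then it is witnessed at cofinally many $t$, so the nested witness-intervals for that $s$ exhaust $[a_s,\,a_s+2^{-k}(\beta-b_s))$ and hence capture $\alpha$ in the limit. This gives, provably in {\rm PA}, the catching implication: if $\alpha\notin A_k$ then $2^k(\alpha-a_s)\ge\beta-b_s$ for all $s$.

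The key steps, in order, are then the following. First, show that $(A_k)_k$ is primitive recursive uniformly in $k$ and that {\rm PA} proves $\mu(A_k)\le 2^{-k}$, so that {\rm PA} proves $(A_k)_k$ is a ML test. Second, establish the catching implication of the previous paragraph in {\rm PA}. Third, apply provable ML-randomness of $\alpha$ to the provable ML test $(A_k)_k$ to obtain ${\rm PA}\vdash\exists k\,(\alpha\notin A_k)$. Fourth, combine the second and third steps to get ${\rm PA}\vdash\exists k\,\forall s\,(2^k(\alpha-a_s)\ge\beta-b_s)$; by the least number principle {\rm PA} fixes the least such $k$, and taking the same sequences $(a_s)_s,(b_s)_s$ together with $c=2^{k}$ yields exactly ${\rm PA}\vdash\beta\le_S\alpha$. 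By soundness one may moreover read off a concrete numeral $c=2^{k_0}$ from the genuine least escape level $k_0$.

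The main obstacle is the provable measure bound in the first step, made delicate precisely by the gap between the clean ``all bad intervals contain $\alpha$'' picture and what is enumerable: because $\alpha$ is only a limit, $A_k$ must include intervals for stages that are merely \emph{approximately} bad, and intervals produced at different witnessing times $t$ need not overlap, so a naive union lets the measure blow up far past $2^{-k}$ (many well-separated stages, say $a_s\approx s\delta$ with $\beta$ nearly stationary, each contribute a near-maximal interval). Controlling this requires enumerating greedily, so that the lengths $2^{-k}(b_t-b_s)$ are charged against \emph{disjoint} increments of $\beta$; the total is then $2^{-k}$ times the total growth of $\beta$, which is $\le 2^{-k}$. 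The bulk of the work is arranging this charging scheme and checking, within {\rm PA}, that it both preserves the catching property and delivers $\mu(A_k)\le 2^{-k}$; the remaining formalisation of the limits and the ML-test conditions is then routine, in the style already used for Theorems~\ref{thm:kc} and~\ref{thm:chaitintoML}.
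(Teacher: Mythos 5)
Your proposal is correct and follows essentially the same route as the paper: the paper's test $T_n$ is exactly your greedy construction, placing an interval $[a_i,\,a_i+2^{-n}(b_i-b_{s^n}))$ only when $a_i$ escapes the previously placed intervals, so that the lengths are charged against disjoint increments of $\beta$ and $\mu(T_n)\le 2^{-n}$, after which provable ML-randomness yields a level $m$ with $\alpha\notin T_m$ and hence Solovay domination with constant $2^m$. The only cosmetic difference is that the paper concludes by extracting the subsequences indexed by the non-empty stages rather than proving your ``for all $s$'' catching implication, but both yield $\beta\le_S\alpha$ as defined.
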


\begin{proof}




Let $(a_i)_i$ and $(b_i)_i$ be primitive recursive sequences of rationals with limits $\alpha$ and $\beta$ respectively.  Let $a_0=b_0=0$. 

For each $n$, for $i\ge 1$ if $a_i \notin \bigcup_{j=1}^{i-1} T_n[j]$ then define
$
	T_n[i] = [a_i, a_i + 2^{-n} (b_i - b_{s^{n}})),
$
where $s^{n} = \max_{j<i} \{j : T_n[j] \neq \emptyset\}$ is the most recent non-empty stage, or $s^{n}=0$ if this is the first non-empty stage.  Otherwise define $T_n[i] = \emptyset$.

Let $s^{n}_j$ denote the $j$th non-empty stage, wherever that is well-defined, and let $s^{n}_0 = 0$.  Observe that
\[
	T_n = \bigcup_i T_n[i] = \bigcup_{j\ge1} [a_{s^{n}_j}, a_{s^{n}_j} + 2^{-n} (b_{s^{n}_j} - b_{s^{n}_{j-1}}))
\]
and that all the sets in the above union are disjoint by construction.  As a result
$
	\mu(T_n) = \sum_{j\ge1} 2^{-n} (b_{s^{n}_j} - b_{s^{n}_{j-1}}) \le 2^{-n},
$
so {\rm PA} proves that $(T_n)_n$ is a ML-test.

Because $\alpha$ is provably ML-random, {\rm PA} proves  that there exists an $m$ such  that $\alpha \notin T_m$, so for all $j\ge1$ we know that $s^{m}_j$ is well-defined.  
By construction we have the inequality
$
	a_{s^{m}_{j+1}} \notin [a_{s^{m}_j}, a_{s^{m}_j} + 2^{-m} (b_{s^{m}_j} - b_{s^{m}_{j-1}}))
$
which implies that
$
	b_{s^{m}_j} - b_{s^{m}_{j-1}} \le 2^m(a_{s^{m}_{j+1}} - a_{s^{m}_j}).
$

Defining $a'_j = a_{s^{m}_j}$ and $b'_j = b_{s^{m}_{j-1}}$, we have for all $j\ge1$ that
$
	b'_{j+1} - b'_{j} \le 2^m(a'_{j+1} - a'_{j}),
$
where $(a'_j)_j$ and $(b'_j)_j$ are primitive recursive sequences of rationals which provably converge to $\alpha$ and $\beta$ respectively.  So {\rm PA} proves $\beta \le_S \alpha$.
\end{proof}

\if01
{\bf Comment} With the notation $a'_{i}, b'_{i}$ as below, let

\[
	T'_n[i] = [a'_i, a'_i + 2^{-n} (b'_{i} - b'_{i-1})),
\]

and observe that $\alpha \in T'_n[i] $ implies that $\alpha < a'_i + 2^{-n} (b'_{i} - b'_{i-1})$
which contradicts the construction of $T'_n[i] $ by virtue of which for all $n, i, j >i$ we have
$a'_{j}\not\in T'_n[i]$, hence $a'_{j} > a'_i + 2^{-n} (b'_{i} - b'_{i-1})$ and $\alpha >  a'_{j}$.
Consequently, $\alpha \not\in T'_{1}$, so $\beta \le_S \alpha$ with  constant $c=2$ .
\fi

\begin{cor}\label{cor:dominates}
 If $\alpha$ is c.e.\ and 
provably Chaitin-random and $\beta$ is c.e., then $\beta \le_S \alpha$ is provable in {\rm PA}.
\end{cor}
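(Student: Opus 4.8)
The plan is to obtain this corollary as an immediate composition of two results already established in the excerpt, so no new construction is required. The only gap between the hypothesis of Corollary~\ref{cor:dominates} and the hypothesis of Theorem~\ref{thm:dominates} is the distinction between \emph{provably Chaitin-random} and \emph{provably ML-random}, and this gap is precisely what Theorem~\ref{thm:chaitintoML} closes.

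Concretely, I would argue as follows. Suppose $\alpha$ is c.e.\ and provably Chaitin-random and $\beta$ is c.e. First I would invoke Theorem~\ref{thm:chaitintoML}, which asserts that every c.e.\ provably Chaitin-random real is provably ML-random; applying it to $\alpha$ yields that $\alpha$ is provably ML-random. At this point $\alpha$ satisfies exactly the hypothesis of Theorem~\ref{thm:dominates}: it is c.e.\ and provably ML-random, while $\beta$ is c.e. Applying Theorem~\ref{thm:dominates} then gives that $\beta \le_S \alpha$ is provable in {\rm PA}, which is the conclusion sought.

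Since the substantive work has already been carried out upstream, there is no genuine obstacle at this stage. The real content lives in Theorem~\ref{thm:dominates}, where the Solovay-domination witnesses $(a'_j)_j$ and $(b'_j)_j$ are extracted from a provable ML-test and a provable failure of $\alpha$ to lie in some level $T_m$, and in Theorem~\ref{thm:chaitintoML}, which supplies the Kraft--Chaitin machine converting a complexity lower bound into a ML-test that $\alpha$ provably evades. The corollary merely records that provable Chaitin-randomness is a sufficient hypothesis in place of provable ML-randomness, so that the Solovay-domination conclusion can be phrased directly in terms of prefix-free complexity, which is the form that will be convenient in the subsequent representation arguments. I would therefore present the proof in a single line that chains the two citations together.
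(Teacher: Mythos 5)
Your proposal is correct and matches the paper's own proof exactly: the paper cites Theorem~\ref{thm:chaitintoML} to upgrade provable Chaitin-randomness to provable ML-randomness and then applies Theorem~\ref{thm:dominates}, which is precisely your chain of reasoning.
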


\begin{proof} See Theorems~\ref{thm:chaitintoML} and \ref{thm:dominates}.
\end{proof}

\if01
\begin{cor}\label{cor:dominates}
If $\alpha \in (0,1)$ is c.e.,
ML-random, and $\beta \in (0,1)$ is c.e., then $\beta \le_S \alpha$.
\end{cor}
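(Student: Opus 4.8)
The plan is to mirror the construction in the proof of Theorem~\ref{thm:dominates}, but to carry it out in the metatheory rather than inside PA. Note first that Corollary~\ref{cor:dominates} does not follow by simply quoting Theorem~\ref{thm:dominates}: the present hypothesis is that $\alpha$ is ML-random, not \emph{provably} ML-random, so the construction must be re-run classically. The only place the provability layer was used in Theorem~\ref{thm:dominates} was to locate a level of a test that $\alpha$ escapes, and here that role is played directly by the (plain) ML-randomness of $\alpha$.

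First I would fix computable increasing sequences of rationals $(a_i)_i$ and $(b_i)_i$ with $\lim_i a_i = \alpha$ and $\lim_i b_i = \beta$, normalised by $a_0 = b_0 = 0$; since $\alpha$ is random it is irrational, so $a_i < \alpha$ strictly for every $i$. For each $n\ge 1$ I build a c.e.\ open set $T_n = \bigcup_i T_n[i]$ exactly as before: scanning $i=1,2,\ldots$, whenever $a_i$ has not yet been covered by a previously opened interval I open $T_n[i] = [a_i, a_i + 2^{-n}(b_i - b_{s^n}))$, where $s^n$ is the most recent stage at which an interval was opened (or $0$); otherwise $T_n[i]=\emptyset$. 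Listing the nonempty stages as $s^n_1 < s^n_2 < \cdots$ with $s^n_0 = 0$, the opened intervals are pairwise disjoint, so their measures telescope and $\mu(T_n) = \sum_{j\ge1} 2^{-n}(b_{s^n_j} - b_{s^n_{j-1}}) \le 2^{-n}\beta < 2^{-n}$. Hence $(T_n)_n$ is a genuine ML test.

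Now I would apply the ML-randomness of $\alpha$: there is some $m$ with $\alpha\notin T_m$. The heart of the argument is to extract domination from this single fact. Because $a_i < \alpha$ and $\alpha$ avoids the half-open interval $T_m[i]$ whenever it is opened, $\alpha$ must lie at or beyond the right endpoint of each opened interval; combined with $a_i\to\alpha$ and $\alpha$ irrational, this forces infinitely many nonempty stages, so every $s^m_j$ is well-defined. Writing $a'_j = a_{s^m_j}$ and $b'_j = b_{s^m_{j-1}}$ --- increasing computable sequences converging to $\alpha$ and $\beta$ --- the fact that $a_{s^m_{j+1}}$ is itself a fresh (uncovered) point and hence escapes $T_m[s^m_j]$ to the right gives $a'_{j+1} \ge a'_j + 2^{-m}(b'_{j+1} - b'_j)$, i.e.\ $b'_{j+1} - b'_j \le 2^m(a'_{j+1} - a'_j)$ for all $j\ge1$.

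Finally I would telescope this per-step inequality: summing $b'_{k+1} - b'_k \le 2^m(a'_{k+1} - a'_k)$ over $k\ge j$ and letting the upper limit tend to infinity yields $\beta - b'_j \le 2^m(\alpha - a'_j)$ for every $j$, which is precisely Solovay domination with constant $c = 2^m$; hence $\beta \le_S \alpha$. I expect the only genuine obstacle to be the bookkeeping that certifies infinitely many nonempty stages and that $\alpha$ truly sits to the right of every opened interval (the step asserted as ``$s^m_j$ is well-defined'' in Theorem~\ref{thm:dominates}); everything else reduces to disjointness of the intervals, a telescoping measure bound, and a telescoping sum.
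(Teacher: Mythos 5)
Your proposal is correct and takes essentially the same approach as the paper: the paper obtains this classical statement by re-running the ML-test construction from the proof of Theorem~\ref{thm:dominates} outside the provability layer (exactly as invoked in the proof of Lemma~\ref{lem:solovayrepresentation}), which is precisely what you do. The only difference is presentational --- you spell out the telescoping of the per-step inequality and the argument (via irrationality of $\alpha$) that infinitely many stages are nonempty, details the paper leaves implicit.
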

\fi

\begin{thm}\label{thm:omegarep}
Suppose $V$ is a provably universal   machine, $\alpha$ is c.e., and $\Omega_V \le_S \alpha$ is provable in {\rm PA}.
Then there exists a provably universal  machine $U$ such that $\Omega_U = \alpha$ is provable in {\rm PA}.
\end{thm}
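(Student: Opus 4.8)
\section*{Proof proposal}

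The plan is to convert the provable domination $\Omega_V \le_S \alpha$ into a provable Solovay representation $\alpha = 2^{-d}\Omega_V + \gamma$ with $\gamma$ a provably c.e.\ real, and then to realise this representation by a single Kraft--Chaitin machine that simultaneously copies $V$ (so as to inherit its universality) and pads the remaining weight $\gamma$ up to $\alpha$. Concretely, I would fix primitive recursive increasing sequences $(a_n)_n$ and $(b_n)_n$ with $a_n\nearrow\alpha$, $b_n\nearrow\Omega_V$, and a constant $c$ for which {\rm PA} proves $c(\alpha-a_n)\ge\Omega_V-b_n$ for all $n$. Choosing an integer $d$ large enough that $2^d\ge c$ and $2^{-d}<a_1$ (possible since $a_1>0$), the domination inequality rearranges, provably in {\rm PA}, to $a_n-2^{-d}b_n \le \alpha-2^{-d}\Omega_V$ for all $n$.

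Setting $\gamma:=\alpha-2^{-d}\Omega_V$ and $\gamma_n:=\max_{m\le n}\,(a_m-2^{-d}b_m)$, the sequence $(\gamma_n)_n$ is primitive recursive and increasing, and {\rm PA} proves $\gamma_n\nearrow\gamma$ together with $0<\gamma$ (the latter because $2^{-d}\Omega_V<2^{-d}<a_1\le\alpha$). Hence $\gamma$ is a provably c.e.\ real, and by Lemma~\ref{lem:repce} there is a primitive recursive sequence $(m_i)_i$ with $\gamma=\sum_i 2^{-m_i}$ provable in {\rm PA}. Next I would assemble the requests for $U$: letting $g$ be a $1$--$1$ primitive recursive enumeration of the graph of $V$, I issue for each pair $(p,z)$ with $V(p)=z$ the request $(|p|+d,\,z)$, and for each $i$ the padding request $(m_i,\,\varepsilon)$. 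Interleaving these into one primitive recursive stream, the total weight is $\sum_{p\in\dom(V)}2^{-(|p|+d)}+\sum_i 2^{-m_i}=2^{-d}\Omega_V+\gamma=\alpha\le 1$, which {\rm PA} proves, so Theorem~\ref{thm:kc} applies and yields a provably prefix-free machine $U$ with $\Omega_U=\mu(\dom(U))=\alpha$ provable in {\rm PA}.

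It remains to argue that $U$ is provably universal. The copying requests give, for the shortest $V$-program $p$ of any string $z$, a $U$-program of length $|p|+d$, so (as in the proof of Theorem~\ref{thm:chaitin}, where $H_U$ is shown expressible in {\rm PA}) {\rm PA} proves $H_U(z)\le H_V(z)+d$ for all $z$. Composing with the provable universality of $V$, namely that for every machine $C$ there is a constant $c_C$ with $H_V(z)\le H_C(z)+c_C$, we obtain $H_U(z)\le H_C(z)+c_C+d$; thus $U$ is provably universal, completing the construction.

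The arithmetic rearrangements and the bookkeeping for interleaving the two request streams are routine. The one point deserving care---and the main obstacle---is carrying out the passage from domination to representation entirely inside {\rm PA}: one must verify that the rearrangement of the domination inequality, the convergence $\gamma_n\nearrow\gamma$, the positivity $\gamma>0$, and the identity $2^{-d}\Omega_V+\gamma=\alpha$ are all formalisable, and in particular that the hypothesis $\sum_i 2^{-n_i}\le 1$ of Theorem~\ref{thm:kc} is \emph{provable} rather than merely true, so that the machine it delivers is genuinely provably universal with provably $\Omega_U=\alpha$.
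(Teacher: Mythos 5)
Your proposal is correct and follows essentially the same route as the paper's proof: extract the provably positive, provably c.e.\ real $\gamma=\alpha-2^{-d}\Omega_V$ from the provable domination, represent it via Lemma~\ref{lem:repce}, and feed an interleaved primitive recursive stream of copying requests of length $|p|+d$ and padding requests of length $m_i$ into Theorem~\ref{thm:kc}, reading universality off the copying requests. The only cosmetic difference is that the paper has the Kraft--Chaitin machine $M$ output \emph{programs} of $V$ (with padding requests mapped to a fixed element of $\dom(V)$) and sets $U=V\circ M$, whereas you let $U$ be the Kraft--Chaitin machine itself, outputting $V$'s outputs and $\varepsilon$ for padding; both yield provable universality and a provable identity $\Omega_U=\alpha$.
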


\begin{proof}


Since $\Omega_V \le_S \alpha$, there exist  primitive recursive increasing sequences $(a_i)_i$ and $(b_i)_i$ of rationals, with limits $\alpha$ and $\Omega_V$ respectively, and a constant $c\ge 0$ such that for all $n$
\begin{equation}
\label{th2:ss}
	b_{n+1} - b_n < 2^c(a_{n+1}-a_n).
\end{equation}
Define $a_0=b_0=0$.  Form the real
\[
	\gamma = \alpha - 2^{-c}\cdot \Omega_V = \sum_n (a_{n+1}-a_n) - 2^{-c}(b_{n+1}-b_n). 
\]
By equation (\ref{th2:ss})  the terms of the sum are positive, so $\gamma \in (0,1)$ is c.e. Applying Lemma~\ref{lem:repce} to $\gamma$ we get a primitive recursive sequence $(m_i)_i$ of natural numbers such that
$
	\sum_i 2^{-m_i} = \gamma.
$

Let $(v_i)_i$ be a 1-1 primitive recursive enumeration of $\dom(V)$, and define the sequence of requests   $	y_{2i}		= v_i ,
	n_{2i}		= |v_i|+c, 
	y_{2i+1}	= v, 		
	n_{2i+1}	= m_i,$
where $v$ is an arbitrarily fixed element in $\dom (V)$.

By Theorem~\ref{thm:kc} we get a provably prefix-free machine $M$ and a primitive recursive enumeration $(x_i)_i$ of $\dom(M)$ such that the following three statements are provable:
1) 	$\mu( \dom(M) ) = \sum_{i} 2^{-n_i}$,
2)	$|x_i| = n_i$ for all $i$,
3)	$M(x_i) = y_i$ for all $i$.

Consider the machine $U = V \circ M$.  The machine $U$ is provably universal. Indeed, $U(x_{2i}) = V(M(x_{2i})) = V(y_{2i})= V(v_{i})$ and $|x_{2i}| = n_{2i}=|v_{i}| + c$,
by 
construction of $M$.
 Finally, it is provable that
$\Omega_U	= \sum_{p\,\in\,\dom(U)} 2^{-p} = \sum_{i} 2^{-n_{2i}} + \sum_{i} 2^{-n_{2i+1}} = 2^{-c}\cdot\Omega_V + \gamma = \alpha. $ 

\end{proof}

Using all results above we obtain:

\begin{thm}
\label{thm:representation} 
A c.e.\ real $\alpha$ is provably Chaitin-random iff it is provable that $\alpha = \Omega_U$ for some provably universal  machine $U$.
\end{thm}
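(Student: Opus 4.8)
The plan is to prove the two implications separately, assembling results already established. Both directions are short given Theorems~\ref{thm:chaitin}, \ref{thm:nonprovunivmachine}, and \ref{thm:omegarep} together with Corollary~\ref{cor:dominates}; the real work was isolated in those earlier results, so what remains is to select the correct objects to feed into each.

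For the direction ($\Leftarrow$), suppose it is provable that $\alpha = \Omega_U$ for some provably universal machine $U$. Theorem~\ref{thm:chaitin} already gives that $\Omega_U$ is provably Chaitin-random, i.e.\ PA proves $H_U(\Omega_U(n)) \ge n - c$ for a suitable constant $c$. Since PA proves $\alpha = \Omega_U$, it proves $\alpha(n) = \Omega_U(n)$ for every $n$, and substituting equals into the complexity bound yields that PA proves $H_U(\alpha(n)) \ge n - c$ for all $n$. Hence $\alpha$ is provably Chaitin-random, witnessed by the same machine $U$ and the same constant $c$.

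For the direction ($\Rightarrow$), assume $\alpha$ is c.e.\ and provably Chaitin-random. First I would fix a provably universal machine $V$, which exists by the positive half of Theorem~\ref{thm:nonprovunivmachine} (the machine enumerating the provably prefix-free machines). Its halting probability $\Omega_V$ is a c.e.\ real, so Corollary~\ref{cor:dominates} applies with $\beta = \Omega_V$ and gives that PA proves $\Omega_V \le_S \alpha$. Feeding $V$, $\alpha$, and this provable Solovay domination into Theorem~\ref{thm:omegarep} produces a provably universal machine $U$ with $\Omega_U = \alpha$ provable in PA, which is exactly the required representation.

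The step I expect to be most delicate is the substitution in the backward direction: one must check that PA genuinely transports the provably Chaitin-random predicate across the provable identity $\alpha = \Omega_U$, replacing $\Omega_U(n)$ by $\alpha(n)$ inside $H_U(\cdot) \ge n - c$. This is unproblematic — provable equality of the two c.e.\ reals in PA entails provable equality of their length-$n$ prefixes — but it is the only point where one argues inside PA rather than merely citing a previous theorem. Everything else reduces to lining up the hypotheses of the cited results, so no serious obstacle is anticipated.
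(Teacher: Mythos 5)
Your proposal is correct and follows essentially the same route as the paper: the forward direction invokes a provably universal $V$ (Theorem~\ref{thm:nonprovunivmachine}), provable Solovay domination $\Omega_V \le_S \alpha$ (your Corollary~\ref{cor:dominates} is just the paper's explicit chaining of Theorems~\ref{thm:chaitintoML} and \ref{thm:dominates}), and then Theorem~\ref{thm:omegarep}, while the converse is Theorem~\ref{thm:chaitin}. Your extra remark about transporting the bound $H_U(\cdot)\ge n-c$ across the provable identity $\alpha=\Omega_U$ only spells out a substitution the paper leaves implicit when it says the converse ``is exactly'' Theorem~\ref{thm:chaitin}.
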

\begin{proof}
Suppose $\alpha$ is provably c.e.\ and Chaitin-random.  By Theorem~\ref{thm:chaitintoML}, $\alpha$ it is provably ML-random.  Take a provably universal  machine $V$ (Theorem~\ref{thm:nonprovunivmachine}).  From 
Theorem~\ref{thm:dominates} we see that $\Omega_V \le_S \alpha$ is provable in {\rm PA}.  By Theorem~\ref{thm:omegarep} we effectively get a $U$ which is provably universal and prefix-free such that $\alpha = \Omega_U$ is provable in {\rm PA}.
The converse is exactly Theorem~\ref{thm:chaitin}.

\end{proof}

\begin{cor}
\label{cor:provchatinrandimpliesrand}
Every provably c.e.\ and Chaitin-random real  is provably random.
\end{cor}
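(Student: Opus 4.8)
The plan is to read the Solovay representation (\ref{solovayformula}) directly off the machinery already assembled, since ``provably random'' for a c.e.\ real means precisely that it admits a representation $\alpha = 2^{-c}\cdot\Omega_V + \gamma$ with $V$ provably universal, $c\ge 0$ an integer, and $\gamma>0$ a provably c.e.\ real. So the whole task reduces to exhibiting such a $V$, $c$, and $\gamma$ from the hypothesis that $\alpha$ is provably c.e.\ and Chaitin-random.

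First I would fix a provably universal machine $V$, one of which exists by Theorem~\ref{thm:nonprovunivmachine}. Since $\Omega_V$ is c.e.\ and $\alpha$ is provably Chaitin-random, Corollary~\ref{cor:dominates} (applied with $\beta = \Omega_V$) yields that {\rm PA} proves $\Omega_V \le_S \alpha$. This is exactly the input required to enter the construction in the proof of Theorem~\ref{thm:omegarep}: from the Solovay domination I extract primitive recursive increasing rational sequences $(a_i)_i, (b_i)_i$ converging to $\alpha$ and $\Omega_V$ respectively, together with a constant $c\ge 0$ satisfying $b_{n+1}-b_n < 2^c(a_{n+1}-a_n)$ for all $n$.

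Next I would set $\gamma = \alpha - 2^{-c}\cdot\Omega_V = \sum_n \left[(a_{n+1}-a_n) - 2^{-c}(b_{n+1}-b_n)\right]$. The strict inequality above makes every summand positive, and (since $(b_i)_i$ is increasing) forces $a_{n+1}>a_n$, so $\gamma$ is a provably c.e.\ real with $\gamma>0$. Hence $\alpha = 2^{-c}\cdot\Omega_V + \gamma$ is literally a representation of the form (\ref{solovayformula}), witnessing that $\alpha$ is provably random. Because each ingredient, $V$ provably universal, $\Omega_V \le_S \alpha$, and the positivity and c.e.\ character of $\gamma$, is a statement provable in {\rm PA}, the representation is obtained entirely inside {\rm PA}.

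I expect no serious obstacle here; the corollary is essentially a repackaging of Corollary~\ref{cor:dominates} with the first half of the proof of Theorem~\ref{thm:omegarep}. The one point demanding care is the strictness $\gamma>0$ required by the definition of provably random: one must verify that the domination can be arranged so that each summand is strictly positive (no vanishing term), which is exactly what the construction underlying Theorem~\ref{thm:dominates} delivers through the strict inequality $b_{n+1}-b_n < 2^c(a_{n+1}-a_n)$. Once that is confirmed, the remainder is bookkeeping.
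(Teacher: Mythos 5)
Your proof is correct, but it takes a genuinely different route from the paper's. The paper obtains this corollary as an immediate consequence of Theorem~\ref{thm:representation}: given that $\alpha$ is provably c.e.\ and Chaitin-random, it is provable that $\alpha=\Omega_U$ for some provably universal machine $U$, and Solovay's formula (\ref{solovayformula}) is then satisfied trivially with $V=U$, $c=1$ and $\gamma=\frac{1}{2}\cdot\Omega_U$. You instead bypass Theorem~\ref{thm:representation}---and with it the Kraft--Chaitin machine construction inside Theorem~\ref{thm:omegarep}---by combining Corollary~\ref{cor:dominates} with only the algebraic first half of the proof of Theorem~\ref{thm:omegarep}: from the provable domination $\Omega_V\le_S\alpha$ you extract $\gamma=\alpha-2^{-c}\cdot\Omega_V$ directly as a positive provably c.e.\ real. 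In effect you have proved the provable analogue of Lemma~\ref{lem:solovayrepresentation}, which the paper states only later and in the non-provable setting. What each approach buys: the paper's proof is a one-liner given machinery it needs anyway for Theorem~\ref{thm:representation}, whereas yours is logically lighter (no new universal machine, no invocation of Theorem~\ref{thm:kc}) and makes explicit that the corollary rests only on domination, not on the full representation theorem. One detail to tighten: the construction in Theorem~\ref{thm:dominates} delivers the non-strict inequality $b'_{j+1}-b'_j\le 2^m(a'_{j+1}-a'_j)$, not the strict one you attribute to it; since the approximating sequences are strictly increasing, replacing $m$ by $m+1$ restores strictness and hence the positivity of every summand of $\gamma$---the same silent adjustment the paper makes when it asserts (\ref{th2:ss}) in the proof of Theorem~\ref{thm:omegarep}.
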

\begin{proof}  If $\alpha$ is provably Chaitin-random and c.e.\  then by Theorem~\ref{thm:representation},  $\alpha = \Omega_U$
for some  provably universal  machine $U$, so $\alpha$ satisfies Solovay's formula (\ref{solovayformula}) with $c=1, \gamma= 1/2 \cdot \Omega_U$.

\end{proof}

 \section{Every Random C.E.\  Real Is Provably C.E.\ Random}

This section proves its title. We start  with the following result  by Solovay \cite{solovayemail}:

\begin{lem}
\label{lem:solovayrepresentation}
Let $V$ be a universal  machine. If $\alpha$ is c.e.\ and ML-random,  
then there exists an integer $c\ge 0$ and a c.e. real $\gamma >0$ such that (\ref{solovayformula}) is satisfied.
\if01
\begin{equation}
\label{solovayrepresentation}
\alpha = 2^{-c} \Omega_{V} + \gamma.
\end{equation}
\fi
\end{lem}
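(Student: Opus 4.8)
The plan is to reduce the statement to the Solovay domination machinery already developed, running essentially the same construction as in the proof of Theorem~\ref{thm:omegarep} but in the ``forward'' direction. The halting probability $\Omega_V$ of the universal machine $V$ is c.e., and $\alpha$ is c.e.\ and ML-random. The classical form of Theorem~\ref{thm:dominates} (its argument needs only ML-randomness of $\alpha$ and c.e.-ness of the dominated real, with the references to provability simply dropped) applied with $\Omega_V$ as the dominated real yields $\Omega_V \le_S \alpha$.

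What I actually want is not the ``remainder'' form of domination appearing in the definition of $\le_S$, but the ``differenced'' form that the proof of Theorem~\ref{thm:dominates} delivers directly. Inspecting that proof, one extracts primitive recursive increasing sequences of rationals $(a_i)_i$ and $(b_i)_i$ with $a_0=b_0=0$, limits $\alpha$ and $\Omega_V$ respectively, and a witnessing constant that may be taken to be a power of two, say $2^{c}$, so that $b_{n+1}-b_n \le 2^{c}(a_{n+1}-a_n)$ for all $n$. Since $a_{n+1}\ge a_n$, this inequality persists when $c$ is replaced by any larger integer, so I am free to enlarge $c$ as convenient.

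With $c$ fixed I define $\gamma = \alpha - 2^{-c}\Omega_V$ and telescope:
\[
	\gamma = \sum_n \left[ (a_{n+1}-a_n) - 2^{-c}(b_{n+1}-b_n) \right].
\]
Each summand is nonnegative by the differenced inequality, so the partial sums $a_N - 2^{-c} b_N$ form an increasing primitive recursive sequence of rationals converging to $\gamma$; hence $\gamma$ is c.e.\ (one may also feed it to Lemma~\ref{lem:repce}). This already gives the representation (\ref{solovayformula}); it remains only to secure $\gamma>0$. Since $\alpha>0$ and $2^{-c}\Omega_V \to 0$ as $c\to\infty$, I choose $c$ large enough (the domination inequality only improves as $c$ grows) that $2^{-c}\Omega_V < \alpha$, forcing $\gamma>0$.

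I expect the only real friction to be bookkeeping rather than mathematics: arranging the domination constant as a genuine power of two and confirming that it is the differenced inequality, rather than merely the remainder inequality in the definition of $\le_S$, that emerges from Theorem~\ref{thm:dominates}. Once that is in hand, positivity of $\gamma$ comes for free by enlarging $c$, and the c.e.\ property is immediate from the telescoped, termwise-nonnegative series.
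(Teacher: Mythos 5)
Your proposal is correct and follows essentially the same route as the paper: invoke the classical (non-provability) form of Theorem~\ref{thm:dominates} to obtain $\Omega_V \le_S \alpha$ in its differenced form, then telescope $\gamma = \alpha - 2^{-c}\cdot\Omega_V = \sum_n \left[ (a_{n+1}-a_n) - 2^{-c}(b_{n+1}-b_n) \right]$ to see that $\gamma$ is c.e. The only (harmless) bookkeeping difference is how positivity of $\gamma$ is secured: the paper takes the differenced inequality to be strict, so every summand is positive, whereas you keep the non-strict inequality and enlarge $c$ until $2^{-c}\Omega_V < \alpha$; both work.
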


\begin{proof} Using the proof of Theorem~\ref{thm:dominates},  we deduce that $\Omega_{V}\le_{S} \alpha$
(because $\alpha$ is c.e.\ and
ML-random). Consequently, we can consider the primitive recursive increasing sequences $(a_i)_i$ and $(b_i)_i$ of rationals, with $a_0=b_0=0$ and converging to  $\alpha$ and $\Omega_V$ respectively, and a constant $c\ge 0$ such that for all $n$,
$
	b_{n+1} - b_n < 2^c(a_{n+1}-a_n).
$
 The c.e.\ real
$
	\gamma = \alpha - 2^{-c}\cdot \Omega_V = \sum_n (a_{n+1}-a_n) - 2^{-c}(b_{n+1}-b_n) 
$ is positive and $\alpha = 2^{-c} \cdot \Omega_{V} + \gamma.$

\end{proof}

It is not difficult to see that the converse implication in Lemma~\ref{lem:solovayrepresentation} is also true. In fact, a sharper result can be proved:

\begin{thm}
\label{thm:provablyrandom}
Let $V$ be provably universal, $c\ge 0$ be an integer, $\gamma$ a positive c.e.\ real. Then
$\alpha = 2^{-c} \cdot \Omega_{V} + \gamma$ is provably Chaitin-random (ML-random).
\end{thm}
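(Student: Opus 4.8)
The plan is to reduce the statement to Theorem~\ref{thm:omegarep} followed by Theorem~\ref{thm:chaitin}, the key observation being that the positivity of $\gamma$ forces the Solovay domination $\Omega_V \leq_S \alpha$ in a readily provable way. First I would fix primitive recursive, increasing rational sequences $(b_n)_n$ and $(g_n)_n$ converging, provably in PA, to $\Omega_V$ and $\gamma$ respectively: for $(b_n)_n$ one simply takes the partial sums $\sum_{i\le n} 2^{-|\pi_1(g(i))|}$ arising from a primitive recursive enumeration of $\dom(V)$, exactly as in the proof of Theorem~\ref{thm:chaitin}, while $(g_n)_n$ comes with the c.e.\ real $\gamma$. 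Setting $a_n = 2^{-c} b_n + g_n$ then yields a primitive recursive, increasing rational sequence with provable limit $\alpha$, so $\alpha$ is (provably) c.e.

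Next I would establish $\Omega_V \leq_S \alpha$ in PA with the constant $2^c$. Indeed,
\[
	2^c(\alpha - a_n) = (\Omega_V - b_n) + 2^c(\gamma - g_n) \ge \Omega_V - b_n,
\]
the inequality holding because $g_n \le \gamma$ for every $n$. Since all three sequences are primitive recursive, and the statements about their limits together with $\gamma - g_n \ge 0$ are provable, PA proves $\Omega_V \leq_S \alpha$ in the sense of the definition preceding Theorem~\ref{thm:dominates}. Note that the internal constant here need not coincide with the $c$ appearing in the hypothesis; all we extract from the decomposition $\alpha = 2^{-c}\cdot\Omega_V + \gamma$ is the domination.

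With $V$ provably universal, $\alpha$ c.e., and $\Omega_V \leq_S \alpha$ provable, Theorem~\ref{thm:omegarep} supplies a provably universal machine $U$ for which $\Omega_U = \alpha$ is provable in PA. Theorem~\ref{thm:chaitin} then gives that $\Omega_U$, hence $\alpha$, is provably Chaitin-random, and Theorem~\ref{thm:chaitintoML} upgrades this to provably ML-random, yielding both assertions of the statement. Alternatively, one can bypass Theorem~\ref{thm:omegarep} and rebuild its machine directly: apply Lemma~\ref{lem:repce} to $\gamma$ to obtain a primitive recursive $(m_i)_i$ with $\sum_i 2^{-m_i} = \gamma$, feed the requests $(|v_i|+c,\, v_i)$ and $(m_i,\, v)$ (where $(v_i)_i$ enumerates $\dom(V)$ and $v\in\dom(V)$ is fixed) into the Kraft-Chaitin Theorem~\ref{thm:kc} to get a provably prefix-free $M$ with $\mu(\dom(M)) = \alpha$, and set $U = V \circ M$; provable universality of $U$ and $\Omega_U = \alpha$ then follow as in Theorem~\ref{thm:omegarep}.

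I do not anticipate a serious obstacle: the mathematical content is carried entirely by the two cited theorems. The only point requiring care — and the step I would treat as the crux — is verifying that the domination inequality and the convergence facts are genuinely \emph{provable} in PA rather than merely true, which reduces to exhibiting every sequence as primitive recursive and checking $\gamma - g_n \ge 0$ inside PA. This is routine given that $\gamma$ is a c.e.\ real and $V$ is provably universal.
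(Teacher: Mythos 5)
Your proposal is correct and takes essentially the same route as the paper: both construct the primitive recursive increasing sequence $a_n = 2^{-c}b_n + g_n$ converging to $\alpha$, verify that PA proves $\Omega_V \le_S \alpha$ (the paper checks the difference inequality $b_{n+1}-b_n \le 2^{c}(a_{n+1}-a_n)$ where you check the tail inequality $2^{c}(\alpha - a_n)\ge \Omega_V - b_n$, but both hold for these sequences and either suffices), and then chain Theorem~\ref{thm:omegarep}, Theorem~\ref{thm:chaitin}, and Theorem~\ref{thm:chaitintoML}. Your ``alternative'' route merely inlines the proof of Theorem~\ref{thm:omegarep}, so it is not a genuinely different argument.
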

\begin{proof} Let $(v_{n})$ be a primitive recursive enumeration of the domain of $V$ and $(b_{n})_{n}$
be a primitive recursive increasing sequence with limit $\gamma$. The sequence of rationals
$\alpha _{n} = 2^{-c} \sum_{i=1}^{n} 2^{-|v_{i}|} + b_{n}$
is primitive recursive,  increasing and converges to $\alpha$. 

Take $a_{n} = \sum_{i=1}^{n} 2^{-|v_{i}|} $  and observe that for all $n$,
$a_{n+1} - a_{n} \le 2^{c} (\alpha_{n+1} - \alpha_{n}),$
hence {\rm PA} proves that $\Omega_{V} \le_{S} \alpha$.
Using Theorem~\ref{thm:omegarep} we can find  a provably universal  machine $U$ such that $\Omega_U = \alpha$ is provable in {\rm PA}. By Theorem~\ref{thm:chaitin},  $\alpha$ is provably Chaitin-random and by Theorem~\ref{thm:chaitintoML}, $\alpha$ is provably ML-random.
\end{proof}

We can now state our main result:

\begin{thm}
\label{thm: cerandprov}
Every c.e.\ and random real is provably c.e.\ and Chaitin-random (ML-random), hence
provably c.e.\ and random.
\end{thm}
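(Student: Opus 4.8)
The plan is to assemble the machinery from Sections~6 and~7, with the single crucial decision being to apply Solovay's representation relative to a \emph{provably} universal machine. First I would note that, since $\alpha$ is c.e.\ and random, the classical equivalence of Chaitin- and ML-randomness gives that $\alpha$ is ML-random. Next I would fix, once and for all, a machine $V$ that is provably universal; such a machine exists by Theorem~\ref{thm:nonprovunivmachine}.

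With $\alpha$ ML-random and $V$ provably universal in hand, I would invoke Lemma~\ref{lem:solovayrepresentation} with this particular $V$ to obtain an integer $c\ge 0$ and a positive c.e.\ real $\gamma$ satisfying $\alpha = 2^{-c}\cdot\Omega_V + \gamma$. The proof of that lemma produces $\gamma$ explicitly as the limit of a primitive recursive increasing sequence of rationals, so $\gamma$ is provably c.e. Because $V$ is provably universal, $c\ge 0$ is an integer, and $\gamma>0$ is provably c.e., this representation is already in exactly the Solovay form demanded by the definition of ``c.e.\ and provably random''; hence that half of the conclusion is immediate.

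For the stronger assertion that $\alpha$ is provably c.e.\ and Chaitin-random (equivalently, ML-random), I would feed the triple $(V,c,\gamma)$ into Theorem~\ref{thm:provablyrandom}. Its hypotheses---$V$ provably universal, $c\ge 0$ an integer, $\gamma$ a positive c.e.\ real---are met verbatim, so it yields directly that $\alpha = 2^{-c}\cdot\Omega_V + \gamma$ is provably Chaitin-random and provably ML-random. Since $\alpha$ is evidently c.e., this establishes the first clause, and the ``hence'' clause follows at once from the Solovay representation (or from Corollary~\ref{cor:provchatinrandimpliesrand}).

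The content of the theorem is therefore almost entirely inherited from earlier results; the one point that must not be skipped is the use of a provably universal $V$ in the representation. The classical Solovay formula only asks that $V$ be universal, but to run the argument inside PA one must begin from a machine whose universality PA can certify, and Theorem~\ref{thm:nonprovunivmachine} guarantees that such machines exist. All the genuine difficulty---the Kraft--Chaitin construction (Theorem~\ref{thm:kc}), the provable domination argument (Theorem~\ref{thm:dominates}), the reassembly of the halting probability (Theorem~\ref{thm:omegarep}), and the provable version of Chaitin's theorem (Theorem~\ref{thm:chaitin})---has already been packaged into Theorem~\ref{thm:provablyrandom}, so here it only remains to check that the hypotheses line up.
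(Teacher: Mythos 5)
Your proposal is correct and follows essentially the same route as the paper's own proof: fix a provably universal $V$ via Theorem~\ref{thm:nonprovunivmachine}, apply Lemma~\ref{lem:solovayrepresentation} to obtain the representation $\alpha = 2^{-c}\cdot\Omega_V + \gamma$, feed it into Theorem~\ref{thm:provablyrandom}, and conclude provable randomness via Corollary~\ref{cor:provchatinrandimpliesrand} (your added remark that the representation already matches the Solovay-form definition is a harmless shortcut the paper also effectively uses).
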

\begin{proof}
Start with a provably universal  machine $V$ (Theorem~\ref{thm:nonprovunivmachine}).  By Lemma~\ref{lem:solovayrepresentation} there exist $c$ and $\gamma$ defining the representation (\ref{solovayformula}) for $\alpha$:
$
	\alpha = 2^{-c} \cdot \Omega_{V} + \gamma.
$
Since $V$ is provably universal, Theorem~\ref{thm:provablyrandom} shows that that $2^{-c} \cdot \Omega_{V} + \gamma$ is provably Chaitin-random  (ML-random).  Therefore $\alpha$ is provably Chaitin-random (ML-random). Finally use Corollary~\ref{cor:provchatinrandimpliesrand} to deduce that $\alpha$ is provably random.
\end{proof}

 Theorem~\ref{thm:representation} can now be stated in the form:

\begin{thm} 
\label{thm:representationfinal} 
A real $\alpha$ is provably  c.e.\  and random iff it is provable that $\alpha = \Omega_U$ for some provably universal  machine $U$.
\end{thm}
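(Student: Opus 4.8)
The plan is to derive Theorem~\ref{thm:representationfinal} from the already-established Theorem~\ref{thm:representation} by observing that, for a c.e.\ real $\alpha$, being \emph{provably random} in the sense of satisfying Solovay's representation formula (\ref{solovayformula}) is equivalent to being \emph{provably Chaitin-random}. Since the right-hand sides of the two theorems are literally identical---it is provable that $\alpha = \Omega_U$ for some provably universal machine $U$---it suffices to establish this equivalence of the two left-hand conditions and then quote Theorem~\ref{thm:representation}.

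First I would handle the implication from provable Chaitin-randomness to provable randomness. This is exactly Corollary~\ref{cor:provchatinrandimpliesrand}: if $\alpha$ is provably c.e.\ and Chaitin-random, then Theorem~\ref{thm:representation} supplies a provably universal machine $U$ with $\alpha = \Omega_U$ provable in {\rm PA}, and hence $\alpha$ satisfies (\ref{solovayformula}) with $c=1$ and $\gamma = \frac12\,\Omega_U$, exhibiting $\alpha$ as provably c.e.\ and random.

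For the converse, suppose $\alpha$ is provably c.e.\ and random, so that by definition it admits a representation $\alpha = 2^{-c}\cdot\Omega_V + \gamma$ with $V$ provably universal, $c\ge 0$ an integer, and $\gamma>0$ a provably c.e.\ real. Theorem~\ref{thm:provablyrandom} applies verbatim to exactly this data and yields that $\alpha$ is provably Chaitin-random; moreover such an $\alpha$ is manifestly provably c.e., since the representation itself furnishes a primitive recursive increasing sequence of rationals converging to it. This closes the equivalence of the two left-hand conditions, and invoking Theorem~\ref{thm:representation} completes the proof.

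The argument is essentially bookkeeping once Theorem~\ref{thm:provablyrandom} and Corollary~\ref{cor:provchatinrandimpliesrand} are in hand, so I do not anticipate a genuine obstacle at this stage: the conceptual content, namely that the Solovay-formula notion of randomness and the Chaitin-complexity notion coincide inside {\rm PA}, has already been absorbed into those two results (which in turn rest on the {\rm PA}-provable representation Theorem~\ref{thm:omegarep} and on Chaitin's Theorem~\ref{thm:chaitin}). The only care needed is to match the definitions precisely and to note that each representation used is patently a c.e.\ representation, so that the ``provably c.e.'' clause is automatic in both directions.
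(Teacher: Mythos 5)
Your proposal is correct and in substance coincides with the paper's own (one-line) proof, which reads ``Use Theorem~\ref{thm: cerandprov} and Corollary~\ref{cor:provchatinrandimpliesrand}'': both arguments reduce Theorem~\ref{thm:representationfinal} to Theorem~\ref{thm:representation} by showing that ``provably c.e.\ and random'' (the Solovay-formula notion) is equivalent to ``provably c.e.\ and Chaitin-random''. The only minor difference is that for the forward direction you apply Theorem~\ref{thm:provablyrandom} directly to the representation (\ref{solovayformula}), staying entirely inside {\rm PA}, whereas the paper cites Theorem~\ref{thm: cerandprov} (itself proved from Lemma~\ref{lem:solovayrepresentation} and Theorem~\ref{thm:provablyrandom}), a route that implicitly passes through actual randomness; your version is, if anything, slightly tidier.
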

\begin{proof} Use  Theorem~\ref{thm: cerandprov} and  Corollary~\ref{cor:provchatinrandimpliesrand}.\end{proof}

\begin{cor} 
\label{cor:uu}
For every universal  machine $U$ there exists  a provably universal  machine $U'$ such that $\Omega_{U} = \Omega_{U'}$. 
\end{cor}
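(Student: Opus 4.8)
The plan is to route through the real number $\Omega_U$ itself, treating it merely as a c.e.\ random real and deliberately forgetting that the given machine $U$ may fail to be provably universal. The key observation is that although $U$ need not be provably universal---by Theorem~\ref{thm:nonprovunivmachine} such machines exist---its halting probability is nonetheless an ordinary c.e.\ random real, and every such real has already been shown to admit a representation via a \emph{provably} universal machine. So the original $U$ is used only to \emph{name} the target real, not to build $U'$.

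First I would record that for any universal machine $U$, the halting probability $\alpha := \Omega_U$ is c.e.\ and random. This is the classical direction of Theorem~\ref{representation:thm}: if $(p_i)_i$ is a primitive recursive enumeration of $\dom(U)$, then the partial sums $\sum_{i\le k} 2^{-|p_i|}$ form a computable increasing sequence of rationals converging to $\Omega_U$, witnessing that $\Omega_U$ is c.e., while randomness is Chaitin's theorem. Crucially, this step requires only that $U$ be universal, \emph{not} provably universal, so it applies to an arbitrary $U$ as in the hypothesis.

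Next I would apply Theorem~\ref{thm: cerandprov} to $\alpha = \Omega_U$: since $\alpha$ is a c.e.\ random real, that theorem upgrades this to the statement that $\alpha$ is provably c.e.\ and random. Finally, invoking the forward direction of Theorem~\ref{thm:representationfinal} with this $\alpha$, the provable c.e.\ randomness of $\alpha$ yields a provably universal machine $U'$ for which $\Omega_{U'} = \alpha$ is provable in {\rm PA}; in particular $\Omega_{U'} = \Omega_U$, which is exactly the desired conclusion.

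I expect no real obstacle, since all the substantive work already lives in Theorems~\ref{thm: cerandprov} and \ref{thm:representationfinal}. The only point worth flagging is a conceptual one: one must resist the temptation to transform $U$ directly into a provably universal machine with the same halting probability, as that would appear to need provable universality of $U$ itself. Instead the argument passes through the invariant value $\Omega_U$ and re-represents it from scratch, so the possible non-provable-universality of the original $U$ is entirely harmless.
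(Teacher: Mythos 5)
Your proposal is correct and matches the paper's own argument: the paper likewise forgets $U$, notes that $\Omega_U$ is c.e.\ and random, upgrades it via Theorem~\ref{thm: cerandprov}, and then applies the representation theorem to obtain $U'$. The only cosmetic difference is that the paper tracks ``provably Chaitin-random'' and cites Theorem~\ref{thm:representation}, whereas you track ``provably c.e.\ and random'' and cite Theorem~\ref{thm:representationfinal}; these are interchangeable given Corollary~\ref{cor:provchatinrandimpliesrand}.
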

\begin{proof}  Since $\Omega_{U}$  is c.e.\ and random,  by Theorem~\ref{thm: cerandprov}  we deduce that $\Omega_{U}$ is provably Chaitin-random, so by Theorem~\ref{thm:representation} we get a provably universal  machine $U'$ such that $\Omega_{U} = \Omega_{U'}$. 
\end{proof}

\begin{cor} 
\label{cor:secondrep}
Every c.e.\ and random real can be written as the halting probability of a 
 provably universal machine.
\end{cor}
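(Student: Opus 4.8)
The plan is to obtain this statement as an immediate consequence of the two principal results already established, Theorem~\ref{thm: cerandprov} and Theorem~\ref{thm:representation}, so that no new machine construction is needed. I would fix an arbitrary c.e.\ and random real $\alpha$ and then convert its ``external'' randomness into a \emph{provable} representation as a halting probability in two moves.

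The first step is to upgrade the (true) randomness of $\alpha$ to provable Chaitin-randomness. Applying Theorem~\ref{thm: cerandprov} directly gives that $\alpha$ is provably c.e.\ and Chaitin-random. This is the step carrying all the weight: Theorem~\ref{thm: cerandprov} itself rests on the Solovay representation (Lemma~\ref{lem:solovayrepresentation}), on the provability of the Kraft-Chaitin construction (Theorem~\ref{thm:kc}), and on the existence of a provably universal machine (Theorem~\ref{thm:nonprovunivmachine}). So whatever difficulty the corollary has is inherited from there rather than arising anew.

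With provable Chaitin-randomness in hand, the second step is to invoke the forward (left-to-right) direction of the equivalence in Theorem~\ref{thm:representation}: a c.e.\ real that is provably Chaitin-random is provably of the form $\Omega_U$ for some provably universal machine $U$. In particular the equality $\alpha = \Omega_U$ holds (indeed, is provable in PA), which is precisely the representation the corollary asks for.

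The proof therefore has no genuine obstacle of its own; the difficulty has been fully absorbed into the earlier theorems, and the crux is really Theorem~\ref{thm: cerandprov} (passing from mere randomness to provable randomness). If one wanted a route that makes the reduction to prior results maximally explicit, one could instead combine Theorem~\ref{representation:thm}, which writes $\alpha = \Omega_W$ for some (not necessarily provably) universal machine $W$, with Corollary~\ref{cor:uu}, which supplies a provably universal $U'$ satisfying $\Omega_{U'} = \Omega_W = \alpha$. The direct argument above is shorter, however, and has the advantage of not relying on the classical Theorem~\ref{representation:thm} at all.
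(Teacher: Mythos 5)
Your proposal is correct and matches the paper's proof in essence: the paper also derives the corollary by combining Theorem~\ref{thm: cerandprov} with the representation equivalence (it cites Theorem~\ref{thm:representationfinal}, which is just Theorem~\ref{thm:representation} restated in terms of ``provably c.e.\ and random''). Your use of the forward direction of Theorem~\ref{thm:representation} after upgrading to provable Chaitin-randomness is the same two-step argument, so there is nothing substantive to distinguish the two.
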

\begin{proof}Use Theorems~\ref{thm: cerandprov} and \ref{thm:representationfinal}.

\end{proof}

\section{A Negative Result}

From the previous two sections we know that every c.e.\ random real can be written
as the halting probability of a provably universal machine, so it is provable random. Does there exist a universal machine whose halting probability is not provable random? By Theorem~\ref{thm:chaitin} such a machine should not be provably universal (and such machines exist by Theorem~\ref{thm:nonprovunivmachine}).

We answer in the affirmative this question.
To this aim we fix an effective enumeration of all c.e.\ reals in (0,1) $(\gamma_{i})_{i}$ 
(for example, by  enumerating all increasing primitive recursive sequences of rationals in (0,1))  and
define the  set $\fR_{\rm ce} = \{\gamma \in (0,1) \mid \gamma \mbox{  is c.e.}\}$.  
 A set $A \subseteq \fR_{\rm ce}$ is called c.e.\ if the set $\{i \in \N\mid \gamma_{i} \in A\}$ is c.e. Note that in $A$ we enumerate all indices for all elements in $A$.

\begin{lem} \label{upp} {\rm \cite{hr}}
If $A \subseteq \fR_{\rm ce}$ is c.e., then for all c.e.\ reals $\alpha \in A$
and $\beta > \alpha$ we have $\beta\in A$.
\end{lem}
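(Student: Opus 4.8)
The plan is to prove upward closure by a single application of the recursion theorem, exploiting the fact that the index set $W = \{i \in \N \mid \gamma_i \in A\}$ is c.e.\ and---crucially---contains \emph{every} index of each real it captures. First I would fix a c.e.\ enumeration $(W_s)_s$ of $W$, and fix primitive recursive increasing rational sequences $(a_s)_s$ and $(b_s)_s$ converging to $\alpha$ and $\beta$ respectively, with $a_0 = b_0 = 0$.

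Next I would describe, uniformly in a parameter $n \in \N$, a c.e.\ real $\delta^{(n)}$ given by an increasing sequence of rationals $(d_s)_s$: at stage $s$, if $n \notin W_s$ set $d_{s+1} = \max(d_s, a_{s+1})$, while from the first stage at which $n \in W_s$ onward set $d_{s+1} = \max(d_s, b_{s+1})$. Since this is effective in $n$, the s-m-n theorem makes $n \mapsto (\text{an index for } \delta^{(n)})$ computable, and the recursion theorem supplies a fixed point $e$ with $\gamma_e = \delta^{(e)}$. Thus $\delta := \delta^{(e)}$ is a c.e.\ real whose \emph{own} index $e$ is the one being watched during its construction.

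The argument then splits on whether $e \in W$. If $e \notin W$, the construction only ever chases $\alpha$, so $\delta = \alpha$; but $\alpha \in A$ forces $\gamma_e = \delta = \alpha \in A$, hence $e \in W$---a contradiction. Therefore $e \in W$: there is a least stage $s_0$ with $e \in W_{s_0}$, before which $d_{s_0} \le \alpha$, and after which the sequence tracks $(b_s)_s$; since $\beta > \alpha \ge d_{s_0}$ the limit is $\max(d_{s_0}, \beta) = \beta$, so $\delta = \beta$. But $e \in W$ means $\gamma_e = \delta \in A$, i.e.\ $\beta \in A$, as required.

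The hard part will be the self-reference: the construction of $\delta$ must inspect the enumeration of $W$ at its own index, which is precisely what the recursion theorem delivers. The note that $W$ enumerates \emph{all} indices of each real in $A$ is what guarantees the fixed-point index $e$ lies in $W$ exactly when $\delta \in A$; without that index-invariance the two cases would not close up. The remaining verifications---that $(d_s)_s$ is increasing, stays in $(0,1)$, and converges as claimed in each case---are routine.
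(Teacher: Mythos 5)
Your proof is correct, but it takes a genuinely different route from the paper's. Both arguments rest on the same switching gadget---an increasing rational sequence that tracks $(a_s)_s$ until a c.e.\ event fires and thereafter tracks $(b_s)_s$ via maxima, which stays increasing precisely because $\beta > \alpha$---but the trigger differs. The paper makes the trigger external: it fixes a c.e.\ non-computable set $K$ and builds, uniformly in $j$, a real $\gamma_{f(j)}$ equal to $\alpha$ if $j \notin K$ and to $\beta$ if $j \in K$; assuming $\beta \notin A$ (while $\alpha \in A$), the set $\{j \mid \gamma_{f(j)} \in A\}$ would equal $\{j \mid j \notin K\}$, which would be c.e.\ since $A$ is---contradicting the choice of $K$. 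Your trigger is internal: the constructed real watches its \emph{own} index in $W = \{i \mid \gamma_i \in A\}$, and the two cases close up with no auxiliary set and no appeal to a complete c.e.\ set. What the paper's version buys is economy of hypotheses on the enumeration $(\gamma_i)_i$: it needs only the s-m-n-style uniformity fact that a uniformly computable family of increasing sequences has a computable index function $f$. Your version needs in addition a fixed-point property for this numbering of c.e.\ reals, and that is the one place requiring more care than you give it: $(\gamma_i)_i$ is a numbering of reals by increasing primitive recursive sequences, not an acceptable numbering of partial computable functions, so Kleene's recursion theorem does not apply verbatim (and for numberings of total objects, such as the standard numbering of primitive recursive functions themselves, the fixed-point property can outright fail). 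The standard repair is to run your construction at the level of programs, apply Kleene's theorem there to get a program $p$ with $\phi_p = \phi_{g(T(p))}$, and push the fixed point into the enumeration via a computable translation $T$ from programs to $\gamma$-indices---a translation which exists by the very same uniformity fact the paper uses; equivalently, one can invoke precompleteness of the numbering (Ershov's fixed-point theorem). With that caveat discharged, your case analysis is sound, and you correctly isolate the role of index-invariance: the case $e \notin W$ closes only because $W$ contains \emph{all} indices of each real in $A$, the same convention the paper uses silently when it identifies $\{j \mid \gamma_{f(j)} \in A\}$ with $\{j \mid \gamma_{f(j)} = \alpha\}$.
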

\begin{proof} Let $K =\{k_{i}\}$ be a c.e.\ not computable set of natural numbers enumerated by a primitive recursive function $i \mapsto k_{i}$, and for each $n$ let $(a_{i}^{n})^{}_{i}$ be a primitive recursive increasing sequence of rationals
in (0,1) such that $\lim_{i \rightarrow \infty} a_{i}^{n}=\gamma_{n}$. Let $\alpha =\lim_{i \rightarrow \infty} a_{i}^{s}, \beta = \lim_{i \rightarrow \infty} a_{i}^{t}$ and define the function
\[ \Gamma(j,i) = \left\{ \begin{array}{ll}
a_{i}^{s}, & \mbox{\rm if $ j\not=k_{m},  $  for all $m\le i$}, \\
\max\{a_{i}^{s}, a_{i}^{t}\}, & \mbox{\rm otherwise} \,.
  \end{array} \right.\]
Because $\beta > \alpha$ there exists a natural $i_{0}$ such that $a_{i_{0}}^{s} < a_{i_{0}}^{t}$.
If $j\in K$, then there exists an $m$ such that $j=k_{m}$,  hence  $\Gamma(j,i)= a_{i}^{t}$, for 
$i\ge \max\{i_{0}, m\}$, so $\lim_{i \rightarrow \infty} \Gamma (i,j) = \beta$. If $j\not\in K$, then for all $i$, $\Gamma(j,i)= a_{i}^{s}$, so $\lim_{i \rightarrow \infty} \Gamma (i,j) = \alpha$.

Because of the uniform definition of $\Gamma (i,j)$ we can construct a computable function $f$ such that  $\lim_{i \rightarrow \infty} \Gamma (i,j) = \gamma_{f(j)}$. 

Finally, let's assume by absurdity that $\beta\not\in A$. The set $\{j\in \N\mid \gamma_{f(j)} \in A\}$ is c.e.\ because $A$ is c.e., but in view of the definition of $\Gamma$, $\{j\in \N\mid \gamma_{f(j)} \in A\}
= \{j\in \N\mid \gamma_{f(j)}=\alpha\} = \{j\in \N\mid  j\not\in K\}$, a non c.e.\ set.
\end{proof}

Let $(U_{i})_{i}$ be a c.e.\ enumeration of all universal machines. Consider now the sets $\fR_{\rm halt} =\{\Omega_{U_{i}}\}$ and $ \fR_{\rm cerand}^{\rm PA} =\{\gamma \in \fR_{\rm ce} \mid \gamma \mbox{  is provably random}\}.$ By enumerating proofs in PA we deduce that   $ \fR_{\rm cerand}^{\rm PA}$ is c.e., so  
$ \fR_{\rm cerand}^{\rm PA}=\{\gamma_{f(i)}\}$, for some primitive recursive function $f$. 

We have: $\{\gamma_{f(i)}\} \subseteq
\{\Omega_{U_{i}}\} \subset \{\gamma_{i}\}$.
Is $ \fR_{\rm halt}$ c.e.? The answer is negative:

\begin{thm} \label{thm:haltprobnotprov} There exists a universal machine $U_{t}$ such that $\Omega_{U_{t}} \not= \gamma_{f(i)},$ for all $i$.
\end{thm}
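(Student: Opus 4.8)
The plan is to argue by contradiction, playing the upward closure forced by Lemma~\ref{upp} against the fact that the c.e.\ random reals are \emph{not} upward closed in $\fR_{\rm ce}$. Suppose the conclusion fails, so that the halting probability of every universal machine already occurs in the enumerated list: $\fR_{\rm halt}=\{\Omega_{U_i}\}\subseteq\{\gamma_{f(i)}\}$. Since the reverse inclusion $\{\gamma_{f(i)}\}\subseteq\{\Omega_{U_i}\}$ has already been recorded, this would force the equality $\fR_{\rm halt}=\fR_{\rm cerand}^{\rm PA}$. As $\fR_{\rm cerand}^{\rm PA}$ is c.e., $\fR_{\rm halt}$ would then be a c.e.\ subset of $\fR_{\rm ce}$, and Lemma~\ref{upp} would apply to it.

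I would then invoke Lemma~\ref{upp}: a c.e.\ subset of $\fR_{\rm ce}$ is closed upward, so $\fR_{\rm halt}$ would be closed upward. But Theorem~\ref{representation:thm} identifies $\fR_{\rm halt}$ with the set of \emph{all} c.e.\ random reals, and this set is plainly not upward closed. Concretely, fix any universal machine $U$; since $\Omega_U$ is random it is irrational, and as $\Omega_U\le 1$ by Kraft's inequality we get $0<\Omega_U<1$. Choose a rational $q$ with $\Omega_U<q<1$. Then $q$ is a c.e.\ real strictly above the random real $\Omega_U$, yet $q$, being rational, is not random, so $q\notin\fR_{\rm halt}$. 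Upward closure would nevertheless place $q\in\fR_{\rm halt}$, a contradiction. This contradiction yields a universal machine $U_t$ with $\Omega_{U_t}\neq\gamma_{f(i)}$ for all $i$.

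The delicate step — and the main obstacle — is legitimising ``$\fR_{\rm halt}$ is c.e.'' in exactly the index-set sense that Lemma~\ref{upp} consumes (a set $A\subseteq\fR_{\rm ce}$ counts as c.e.\ only when $\{i\mid\gamma_i\in A\}$ is c.e., and enumerating \emph{every} index of each member is the subtle part). One must also reconcile the conclusion with Theorem~\ref{thm: cerandprov} and Corollary~\ref{cor:uu}, which guarantee that each individual $\Omega_{U_t}$ is provably random. The content of the theorem is therefore not that some halting probability fails to be random, but that provable randomness cannot be extracted \emph{uniformly} from the machine: there is no c.e.\ assignment sending each universal machine to a provable-randomness certificate correctly situated in the master enumeration $(\gamma_i)_i$. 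A witnessing $U_t$ is necessarily not provably universal (Theorem~\ref{thm:nonprovunivmachine}), so Chaitin's Theorem~\ref{thm:chaitin} cannot be applied to $U_t$ itself to certify $\Omega_{U_t}$, even though, by Corollary~\ref{cor:uu}, a different and provably universal machine with the same halting probability does exist.
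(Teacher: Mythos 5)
Your proof is correct and is essentially the paper's own argument: the paper likewise combines Lemma~\ref{upp} with a non-random c.e.\ real lying strictly above a halting probability (it takes $\beta = 0.1^{n}0111\cdots$ where $\Omega_{U}(n+1)=1^{n}0$ and $\Omega_U \ge 1/2$, rather than your rational $q$), and concludes from the c.e.-ness of $\fR_{\rm cerand}^{\rm PA}$ together with the inclusion $\fR_{\rm cerand}^{\rm PA} \subseteq \fR_{\rm halt}$ that the inclusion must be strict. The only difference is bookkeeping: the paper proves outright that $\fR_{\rm halt}$ is not c.e.\ and then deduces the theorem, whereas you assume the conclusion fails and transfer c.e.-ness from $\fR_{\rm cerand}^{\rm PA}$ through the forced equality---the same contradiction, introduced one step earlier.
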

\begin{proof} Take a universal machine $U$ such that $\Omega_{U} \ge 1/2$ and construct the c.e.\ real $\beta = \Omega_{U}(n+1)11\cdots $, where $ \Omega_{U}(n+1) = 1^{n}0$. As $\beta > \Omega_{U}$ and $\beta$ is not random, $\beta\not=\Omega_{U_{i}}$,  for all $i$, so by Lemma~\ref{upp}, $\{\Omega_{U_{i}}\}$ is not c.e., hence the  theorem is proved. \end{proof}

{\bf Comment } There is no contradiction between Corollary~\ref{cor:uu} and 
Theorem~\ref{thm:haltprobnotprov}: there exist a universal machine $U_{t}$ and a provably universal machine $U_{j}$ such that  $\Omega_{U_{t}}=\Omega_{U_{j}}$ and  $\Omega_{U_{t}}\not= \gamma_{f(i)},$ for all $i$:
 PA cannot prove the randomness of $\Omega_{U_{t}}$ based on $U_{t}$, but can prove the randomness of $\Omega_{U_{t}}=\Omega_{U_{j}}$ based on $U_{j}$.

\section{Formal Proof of the Kraft-Chaitin Theorem}


In the above we gave proofs that various statements, once suitably formalised in the language of first order logic, were derivable from the axioms of PA.  In principle, but for lack of space and patience, we could have presented complete PA derivations of each statement proved.  Instead, as is common practice for all but the simplest of results, we sketched constructions which leave the actual derivations implicit.

Recent advances in theorem proving computer programs, such as the proof assistant Isabelle \cite{isabelletut}, have allowed complete formal derivations of nontrivial mathematical results.  In such systems, humans write a sequence of proof commands, and the computer system searches for a complete derivation, if one exists.  Essentially, the human user gives a sequence of intermediate lemma with proof directions, and the computer interpolates the full derivation.  (For a recent perspective on the importance of formalising mathematics see \cite{TCH:FP}.)

Using Isabelle, we formalised and proved the Kraft-Chaitin Theorem (Theorem \ref{thm:kc}), a key result in our above proof.  To keep our presentation self-contained, we begin by showing how to formalise and prove a simple result about strings; for a full introduction to the Isabelle system see \cite{isabelletut}.  We follow with a formalisation  of the Kraft-Chaitin Theorem, then sketch its formal proof.  The full proof script is available online \cite{kcproof}.


\subsection{Formalising Results in Isabelle}

To illustrate Isabelle and its use, we will formalise and prove the following simple property of strings:

\begin{lem}
Given strings $x,y,z \in \Sigma^*$, if $x$ extends $y$ then $xz$ extends $y$.
\end{lem}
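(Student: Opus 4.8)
The plan is to read \emph{``$x$ extends $y$''} as the statement that $y$ is a prefix of $x$, i.e.\ that there exists a string $w \in \Sigma^*$ with $x = yw$ (concatenation). The entire content of the lemma is then a single application of the associativity of concatenation, so I would structure the proof around unfolding this definition, extracting a witness, and re-associating.

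Concretely, I would proceed as follows. First, from the hypothesis that $x$ extends $y$, obtain a witness $w \in \Sigma^*$ such that $x = yw$. Second, append $z$ on the right to both sides to get $xz = (yw)z$. Third, invoke associativity of string concatenation to rewrite this as $xz = y(wz)$. Finally, observe that $wz \in \Sigma^*$, so the string $wz$ witnesses the fact that $y$ is a prefix of $xz$; that is, $xz$ extends $y$, as required.

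Mathematically there is no real obstacle here: the statement is an immediate consequence of associativity together with one existential introduction. In the Isabelle setting, where strings are modelled as lists, concatenation is the built-in list append and associativity is already available as a library fact, so the proof reduces to supplying the witness $w$ (via an elimination rule for the prefix/extends relation) and then letting the simplifier discharge the goal after rewriting with associativity. The only point requiring any care is to align the informal notion of \emph{extends} with whatever prefix predicate the Isabelle library actually uses, so that the definitional unfolding in step one matches the goal to be proved; once that correspondence is fixed, the remaining steps are routine.
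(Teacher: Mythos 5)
Your proof is correct, but it takes a genuinely different route from the paper's. You treat ``$x$ extends $y$'' via its existential characterisation ($\exists w$ with $x = yw$), after which the lemma is one application of associativity of concatenation plus an existential introduction with witness $wz$. The paper never invokes this characterisation: there, \texttt{extends} is a recursive predicate defined by pattern matching on the two lists (compare heads, recurse on tails), the lemma is formalised as the unconditional statement \texttt{extends (A@B) A}, and the proof is a structural induction on \texttt{A} with simplification, using an auxiliary lemma \texttt{extends1} stating that every list extends the empty list. The trade-off is this: your argument is the cleaner mathematical proof, and in Isabelle it is essentially free \emph{if} the prefix relation comes with (or is defined by) the existential characterisation, as the standard library's prefix predicate does. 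But with the paper's own recursive definition---which is the one actually used in their development---the equivalence $\mathtt{extends}\ x\ y \iff \exists w.\ x = y\mathbin{@}w$ is not definitional and would itself have to be proved by an induction of the same shape as the paper's direct proof; so your approach does not eliminate the induction, it only relocates it into the characterisation lemma, exactly the alignment issue you flag at the end. The paper's direct induction also serves its local purpose better: that subsection is a warm-up illustrating the induct-then-simplify workflow in Isabelle, which is what the rest of the Kraft--Chaitin formalisation relies on.
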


Strings are naturally represented by the Isabelle list data-type.  Here {\tt []} represents the empty list, and {\tt y\#ys} represents the list formed by concatenating the element {\tt y} with the list {\tt ys}.  For example, the string $001$ is represented by {\tt 0 \# 0 \# 1 \# []} (or {\tt [0,0,1]} for short).
The following code inductively defines whether the list {\tt A} extends {\tt B}, denoted {\tt extends A B}:

\begin{verbatim}
fun extends :: "'A list => 'A list => bool"
where
  "extends [] [] = True"
| "extends [] (y#ys) = False"
| "extends x [] = True"
| "extends (x#xs) (y#ys) = ((x=y) & (extends xs ys))"
\end{verbatim}

When faced with the above definition, Isabelle automatically proves termination  (in this case, by observing that the first argument always decreases in length with each recursive call).

Let us first prove that any list extends the empty list.  We enter into Isabelle:

\begin{verbatim}
lemma extends1: "extends A []"
\end{verbatim}

It responds with the propositions we need to prove:

\begin{verbatim}
goal (1 subgoal):
 1. extends A []
\end{verbatim}

It is natural to prove this by induction on {\tt A}, by entering the command \verb|apply(induct A)|.  This results in two proof obligations, one for the base case and the other for the inductive step:

\begin{verbatim}
goal (2 subgoals):
 1. extends [] []
 2. !!a A. extends A [] ==> extends (a # A) []
\end{verbatim}

The first proposition is one of the cases in our definition of {\tt extend}.  In the second {\tt !!} denotes universal quantification and this similarly follows from one of our definition cases.  We tell Isabelle to simplify these expressions with the command \verb|apply(simp_all)|. Isabelle manages to simplify all these expressions down to {\tt True}, using rewrite rules for simplifying conjunctions, variable identity, and expanding the definition of {\tt extends}.  As a result we get:

\begin{verbatim}
goal:
No subgoals!
\end{verbatim}

Having completed the proof, we compactly store it in the following format:

\begin{verbatim}
lemma extends1: "extends A []"
  apply(induct A) apply(simp_all)
done
\end{verbatim}

We can now attempt our original goal:

\begin{verbatim}
lemma extends2: "extends (A@B) A"

goal (1 subgoal):
 1. extends (A @ B) A
\end{verbatim}

The concatenation of lists {\tt A} and {\tt B} is denoted {\tt A @ B}.  We again induct with the command \verb|apply(induct A)|, then simplify with the command \verb|apply(simp_all)|

\begin{verbatim}
goal (2 subgoals):
 1. extends ([] @ B) []
 2. !!a A. extends (A @ B) A ==> extends ((a # A) @ B) (a # A)

goal (1 subgoal):
 1. extends B []
\end{verbatim}

Since we proved this before, we use the command \verb|apply(simp only: extends1)| to reuse our previous result, completing the proof.  In sum:

\begin{verbatim}
lemma extends2: "extends (A@B) A"
  apply(induct A) apply(simp_all) apply(simp only: extends1)
done
\end{verbatim}

\subsection{Formalising the Kraft-Chaitin Theorem}
\if01
\begin{thm}
\label{thm:kc2}
Suppose $(n_i, y_i)_i \in \N \times \Sigma^*$ is a primitive recursive enumeration of requests such that
\[
	\sum_{i} 2^{-n_i} \le 1.
\]
Then there exists a provably prefix-free machine $M$ and a primitive recursive enumeration $(x_i)_i$ of $\dom(M)$ such that the following is provable in {\rm PA}:
\begin{enumerate}
\item	$\mu( \dom(M) ) = \sum_{i} 2^{-n_i}$,
\item	$|x_i| = n_i$ for all $i\in\N$,
\item	$M(x_i) = y_i$ for all $i\in\N$.
\end{enumerate}
\end{thm}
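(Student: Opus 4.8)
The plan is to construct $M$ by an online greedy allocation of binary-tree nodes, maintaining at each stage $i$ two finite, primitive-recursively computed sets of strings: a pool $S_i$ of still-free leaves (with $S_0=\{\varepsilon\}$) and the set $T_i$ of codewords already handed out (with $T_0=\emptyset$). To serve the $i$-th request $(n_i,y_i)$ I would locate the longest free string $s_i\in S_i$ with $|s_i|\le n_i$, declare the codeword $x_i=s_i0^{n_i-|s_i|}$, set $M(x_i)=y_i$, delete $s_i$ from the pool, insert $x_i$ into $T$, and return to the pool the siblings $s_i1,\,s_i01,\,\ldots,\,s_i0^{\,n_i-|s_i|-1}1$ that are freed while descending from $s_i$ to $x_i$. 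Reading off $x_i$ from this procedure immediately yields a primitive recursive enumeration of $\dom(M)$ with $|x_i|=n_i$ and $M(x_i)=y_i$, so clauses (2) and (3) hold by inspection; what remains is prefix-freeness, the measure identity (1), and, crucially, that the search for $s_i$ never fails.

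All of these I would obtain from a single induction on $i$ establishing the invariants: (a) $S_i\cup T_i$ is prefix-free; (b) $\mu(S_i\cup T_i)=1$; (c) $\mu(T_i)=r_i:=\sum_{j<i}2^{-n_j}$; (d) $\sum_{j\ge i}2^{-n_j}\le\mu(S_i)$; and (e) the strings in $S_i$ have pairwise distinct lengths. Since every object here is a finite set manipulated by a primitive recursive function and each invariant is an arithmetic statement over $i$, the induction formalises directly in PA; clause (1) then follows by passing (c) to the limit, using the hypothesis $\sum_i 2^{-n_i}\le 1$.

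The step I expect to be the crux is showing that the allocation never gets stuck, i.e.\ that a free string of length at most $n_i$ always exists. This is exactly where invariants (d) and (e) combine: (d) gives $2^{-n_i}\le\mu(S_i)$, and (e) then forces a short string to be present, because if every free string had length $\ge n_i+1$ then, their lengths being distinct, $\mu(S_i)<\sum_{k\ge n_i+1}2^{-k}=2^{-n_i}$, contradicting (d). Propagating the distinct-length invariant (e) across a step is the matching delicate point: after removing the longest free string $s_i$ of length $\le n_i$ and inserting the freed siblings, whose lengths are exactly $|s_i|+1,\ldots,n_i$, all lengths stay distinct, since every surviving free string of length $\le n_i$ is shorter than $s_i$ and every other has length $>n_i$. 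The conservation identities (b), (c), and hence (d) then reduce to the arithmetic of geometric sums, $2^{-|s_i|}=2^{-n_i}+\sum_{k=|s_i|+1}^{n_i}2^{-k}$, which PA verifies routinely.
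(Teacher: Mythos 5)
Your proposal is correct and follows essentially the same route as the paper's proof: the same greedy tree-allocation algorithm (pool $S_i$ of free prefixes, allocated set $T_i$, codeword $s_i0^{n_i-|s_i|}$) and the same five inductive invariants, including the key combination of the measure lower bound with the distinct-lengths property to show the search for $s_i$ never fails. The extra details you supply (the geometric-sum contradiction when all free strings are too long, and the length bookkeeping when propagating distinctness) are exactly the implicit content of the paper's invariant 5 and its inductive step.
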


\begin{algorithm}[htb]
\caption{}\label{alg:kc2}
\begin{algorithmic}[1]

\State $S_0 = \{\epsilon\}$, $T_0 = \emptyset$, $r_0 = 0$, $i \setto 0$.
\Loop
	\State	Let $s_i$ be the longest element of $S$ of length at most $n_i$.  If no such string exists, terminate.
	\If{$|s_i|=n_i$}
		\State	$S_{i+1} = (S_i \setminus \{s_i\})$.
	\Else
		\State	$S_{i+1} = (S_i \setminus \{s_i\}) \cup \{ s_i1, s_i01, s_i0^21, \ldots, s_i0^{n_i-|s_i|-1}1 \}$.
	\EndIf
	\State	Define $M(s_i0^{n_i-|s_i|}) = y_i$.
	\State	$T_{i+1} = T_i \cup \{s_i0^{n_i-|s_i|}\}$.
	\State	$r_{i+1} = r_i + 2^{-n_i}$.
	\State	$i\setto i+1$.
\EndLoop

\end{algorithmic}
\end{algorithm}
\fi

The proof of the Kraft-Chaitin Theorem  is algorithmic:  it describes a particular algorithm (Algorithm~\ref{alg:kc} of Theorem~\ref{thm:kc}) for selecting strings of the required lengths, and proves that the algorithm is correct.  In what follows we will  implement this algorithm in Isabelle and will prove its correctness.



The following Isabelle code implements Algorithm \ref{alg:kc}.  We  give the definition of each function, then explain what it does.

\begin{verbatim}
fun extend :: "nat list => nat => nat list list"
where
  "extend l 0 = [l]"
| "extend l (Suc n) = (hd (extend l n) @ [0]) # (hd (extend l n) @ [1]) 
                                              # tl (extend l n)" 
\end{verbatim}

For {\tt l} a binary list representing a binary string, and {\tt n} a natural number, {\tt extend l n} computes the list
\[
	[ l0^{n-|l|}, l0^{n-|l|-1}1, \dots, l01, l1 ].
\]
For example, in Isabelle the expression {\tt extend [0,0,1] 5}  evaluates to 
\begin{center}
	{\tt [[0,0,1,0,0], [0,0,1,0,1], [0,0,1,1]] }
\end{center}
This corresponds to the set $\{ 00100, 00101, 0011\}$ of binary strings.

The set of unallocated prefixes $S_i$ and  the set of allocated strings $T_i$ are represented by lists of strings.  The free prefixes are ordered by decreasing length, the allocated strings by the order of allocation.

Consider one iteration of the main loop.  Let {\tt A} be the list of previously allocated strings, {\tt F} the list of free prefixes, and {\tt n} the length of the string we want to allocate at  this step.  (These are denoted  $T_i$, $S_i$, and $n_i$ in the original algorithm.)
{\tt kcstep A F n} returns the updated pair of allocated strings and free prefixes ($S_{i+1}$ and $T_{i+1}$).

\begin{verbatim}
consts kcstep :: "nat list list => nat list list => nat
                                => (nat list list * nat list list)"
primrec
  "kcstep A [] n       = (A, [])"   (* fail case *)
  "kcstep A (f # F) n  = (if length f <= n 
                then ((hd (extend f (n - length f))) # A, 
                      (tl (extend f (n - length f))) @ F)
                else (fst (kcstep A F n), f # snd (kcstep A F n)))"
\end{verbatim}

{\tt kcstep} searches through the list {\tt F} of free prefixes for the longest string of length at most {\tt n}.  One it finds it, it calls {\tt extend}, which returns a list of extended prefixes.  It takes the first string in the list, guaranteed to have  exactly length {\tt n}, and adds it to the allocated strings list.  The rest of the strings are placed on the free prefixes list.

For example, {\tt kcstep [] [[]] 2} evaluates to 
\begin{center}
	{\tt ([[0,0]], [[0,1], [1]]) }
\end{center}
which corresponds to the list $00$ of allocated strings and the set $\{1, 01\}$ of free prefixes.

\begin{verbatim}
consts kcloop :: "nat list => (nat list list * nat list list) 
                           => (nat list list * nat list list)"
primrec
  "kcloop [] X = X"
  "kcloop (l#ls) X = (kcstep (fst (kcloop ls X)) (snd (kcloop ls X)) l)"
\end{verbatim}

For a list of lengths {\tt l} and a pair {\tt (A,F)} of allocated strings and free prefixes, {\tt kcloop l (A,F)} runs {\tt kcstep} to allocate strings  for every length in {\tt l}.  For example, {\tt kcloop [3,4,2] ([], [[]])}  allocates a string of length 2, then one of length 4, then one of length 3, starting from the initial state where no strings are yet allocated ({\tt []}) and the empty string is our free prefix ({\tt [[]]}).

For example, {\tt kcloop [3, 2] ([], [[]])} evaluates to 
\begin{center}
	{\tt ([[0,1,0], [0,0]], [[0,1,1], [1]]) }
\end{center}
which corresponds to the list $00, 010$ of allocated strings (note that we reverse the list), and the set $\{1, 011\}$ of free prefixes.

\begin{verbatim}
fun kc :: "nat list => nat list list"
where
	"kc ls = (fst (kcloop ls ([],[[]])))"
\end{verbatim}

For a list of lengths {\tt l}, {\tt kc l} returns the list of strings allocated by running {\tt kcloop} on the list starting from the initial state where no strings have been allocated.  For example, {\tt kc [4,3,2]} evaluates to
\begin{center}
	{\tt [[0,1,1,0], [0,1,0], [0,0]] }
\end{center}
which corresponds to the sequence $00, 010, 0110$ of allocated strings.

This implements Kraft-Chaitin's algorithm, for we will prove that:
\begin{enumerate}
\item	If our list of lengths obeys Kraft's inequality, $\sum_{i} 2^{-n_i} \le 1$, then {\tt kc ls} is a list of strings, and the $i$th element of {\tt kc ls} has length equal to the $i$th element of {\tt ls}.
\item	{\tt kc ls} is always a prefix-free list (no two distinct elements of the list are prefixes of each other).
\item	If we add new lengths to the start of {\tt ls}, then this adds new strings to the end of {\tt kc ls} without changing the old ones.  That is, once a string of a given length is allocated it is not changed.
\end{enumerate}

To prove the above we need to define what a prefix-free list is, a function to evaluate Kraft's inequality, a function which checks whether the lengths of one list match the lengths in another, and a tool to check whether one list extends another.


\begin{verbatim}
fun prefixes :: "nat list => nat list => bool"
where
  "prefixes [] x = True"
| "prefixes x [] = True"
| "prefixes (x#xs) (y#ys) = ((x=y) & (prefixes xs ys))"

consts incomparable :: "nat list => nat list list => bool"
primrec
  "incomparable x [] = True"
  "incomparable x (y # ys) = (~(prefixes x y) & (incomparable x ys))"

consts prefixfree :: "nat list list => bool"
primrec
  "prefixfree [] = True"
  "prefixfree (x # xs) = ((incomparable x xs) & (prefixfree xs))"
\end{verbatim}

If {\tt x} is a prefix of {\tt y}, or vice versa, then {\tt prefixes x y}. For example {\tt prefixes [0,0,1] [0,0]} is true.
{\tt incomparable x A} holds if {\tt x} is not a prefix of any string in {\tt A}, for instance {\tt incomparable [0,0] [[1,0], [1,1,1]]} holds.
{\tt prefixfree L} holds if the list {\tt L} is prefix-free, for instance {\tt prefixfree [[0,0], [1,0], [1,1,1]]} holds.

\begin{verbatim}
consts expn2 :: "nat => rat"
primrec
  "expn2 0 = 1"
  "expn2 (Suc n) = (1/2) * expn2 n"

consts meas_nat :: "nat list => rat"
primrec
  "meas_nat [] = 0"
  "meas_nat (f#F) = (expn2 f + meas_nat F)"
\end{verbatim}

We define {\tt expn2 n} equal to $2^{-n}$.  {\tt meas\_nat F} computes the ``measure'' of a sequence of natural numbers {\tt F},
for example {\tt meas\_nat [4,3,2]} equals {\tt 7/16}.

\begin{verbatim}
fun lengthsmatch :: "nat list list => nat list => bool"
where
  "lengthsmatch [] [] = True"
|  "lengthsmatch [] (l#ls) = False"
|  "lengthsmatch (x#xs) [] = False"
|  "lengthsmatch (x#xs) (l#ls) = ((length x = l) & (lengthsmatch xs ls))"
\end{verbatim}

The expression {\tt lengthsmatch X Y} holds if the lengths of each string in {\tt X} matches the corresponding number in {\tt Y}.
For example, we have {\tt lengthsmatch [[0,0], [1,0], [1,1,1]] [2,2,3]} is {\tt True}.

\begin{verbatim}
fun extends :: "'A list => 'A list => bool"
where
  "extends [] [] = True"
| "extends [] (y#ys) = False"
| "extends x [] = True"
| "extends (x#xs) (y#ys) = ((x=y) & (extends xs ys))"
\end{verbatim}

Finally, {\tt extends A B} holds if the list {\tt A} extends the list {\tt B}, so {\tt extends [0,1] [0]} holds.

With the above definitions we can state the three results which establish correctness:

\begin{verbatim}
theorem kc_correct1: "meas_nat ls <= 1 ==> lengthsmatch (kc ls) ls"

theorem kc_correct2: "prefixfree (kc ls)"

theorem kc_extend:   "extends (rev (kc (L2 @ L1))) (rev (kc L1))"
\end{verbatim}

The first says that if {\tt ls} is a list of natural numbers $n_i$ which satisfies Kraft's inequality $\sum_i 2^{-n_i} \le 1$,
then the strings {\tt kc  ls} allocated by running Algorithm~\ref{alg:kc} on this list have exactly the lengths {\tt ls} we asked for.

The second says the strings allocated are prefix-free.

The last says that when Algorithm~\ref{alg:kc} allocates additional strings it does not change strings it has previously allocated.  To see this, note that when we run {\tt kc L} the algorithm allocates strings starting from the end of the list {\tt L}.  This means, the first element of {\tt kc L} is the last string allocated.  {\tt kc (L2 @ L1)} is the list of strings allocated if we allocate strings with lengths in {\tt L1} then strings with lengths in {\tt L2}.

Together, establishing these would show that the {\tt kc} algorithm constructively establishes the Kraft-Chaitin Theorem.

\subsection{Proof Outline}

All the above merely formalised the algorithm and stated the theorem we wish Isabelle to prove.  This gets the order mixed slightly, since formalising this theorem unearthed a mistake in the algorithm, so the process was mutual.  In some sense this formalisation of the theorem is the major creative work, the rest is just technical detail.  As one might guess, however, most of the work is in these details.  To prove the above theorems we must guide Isabelle to them by establishing numerous intermediate lemmas, and telling Isabelle which proof techniques to use to establish each.  Often we just advise Isabelle to induct on a variable then simplify, but sometimes we must give more detailed guidance.


The Isabelle proof follows the proof given for Theorem~\ref{thm:kc}:  we establish that the inner loop preserves some invariants, and use these invariants to establish correctness.

Recall the algorithm has two variables:  the list of allocated strings and the list of free strings.  Each pass through the loop will (potentially) add one new allocated string, and modify the free strings.  We then show that these two lists combined  remain prefix-free, their joint measure never decreases, and that there are never two free strings of the same length.

For reasons of space we give only the definitions required to state the above intermediate results and show how they are formalised in Isabelle.  The proof in its entirety is available online \cite{kcproof}.

\begin{verbatim}
fun strictlysorted :: "nat list list => bool"
where
  "strictlysorted [] = True"
| "strictlysorted [x] = True"
| "strictlysorted (x1 # x2 # xs) = ((length x1 > length x2) 
                                   & (strictlysorted (x2 # xs)))" 
\end{verbatim}

{\tt strictlysorted L} holds if the strings in {\tt L} are ordered by (strictly) decreasing length.  In particular, this means there can be no two strings of the same length in {\tt L}.

\begin{verbatim}
fun inv1 :: "nat list list * nat list list => bool"
where
  "inv1 X = strictlysorted (snd X)"

fun inv2 :: "nat list list * nat list list => bool"
where
  "inv2 X = prefixfree ((fst X) @ (snd X))"

\end{verbatim}

The first invariant is that the list of free strings is strictly sorted.  This is needed to show both that there is at most one string of any given length and to show that the algorithm will always select the longest string it is able to.

\begin{verbatim}
fun inv :: "nat list list * nat list list => bool"
where
  "inv X = ((inv1 X) & (inv2 X))"

theorem kcstep_inv: "inv (A,F) ==> inv (kcstep A F n)"
\end{verbatim}

This says simply that if the invariants held of the variables before running through the loop once, then they hold afterwards.

\begin{verbatim}
consts meas :: "nat list list => rat"		(* The measure of a prefix free set *)
primrec
  "meas [] = 0"
  "meas (x # xs) = expn2 (length x) + meas xs"
\end{verbatim}

This defines the measure of a list of strings: the usual $\sum_{x\in X} 2^{-|x|}$.

\begin{verbatim}
lemma kcstep_meas: "meas ((fst (kcstep A F n)) @ (snd (kcstep A F n))) 
                    = meas (A@F)"
\end{verbatim}

This says that measure is preserved at each step of the loop.  This measure will be 1 for all the intermediate states of the {\tt kc} algorithm, but we need this more general result for the inductive proofs to work.

A number of further intermediate results are required both to establish the above invariants and to apply them to the main theorems.  Below are three of the most important, which one may recall from the proof of Theorem \ref{thm:kc} (in total, there are 102 theorems and lemmas proved).

\begin{verbatim}
theorem meas_alloc: "[| expn2 n <= meas F; strictlysorted F |] 
                     ==> length (last F) <= n"

lemma kcstep_correct1: "[|inv (A,F); expn2 n <= meas F|]
            ==> (tl (fst (kcstep A F n)) = A) 
              & (length (hd (fst (kcstep A F n))) = n)"          

lemma kcstep_correct2: "[|inv (A,F); expn2 n <= meas F|] 
            ==> meas (fst (kcstep A F n)) = meas A + expn2 n"
\end{verbatim}

Theorem {\tt meas\_alloc} formalises the result that if $2^{-n}$ is smaller than the measure of a set $F$, and that set has no two strings of the same length, then there is a string of length at most $n$ in $F$.
Lemma {\tt kcstep\_correct1} says that if the invariants are satisfied by the current variables {\tt A} and {\tt F}, and the measure of {\tt F} is at least $2^{-n}$, then kcstep succeeds.  This means that we allocate one new string of length exactly $n$, leaving the old strings untouched.
Lemma {\tt kcstep\_correct2} expresses an implied result:  if the algorithm succeeds, then the measure of the list of allocated strings increases by exactly $2^{-n}$ (Isabelle will often not notice conclusions that seem obvious to the prover; they must be spelt out).

\section{Final Remarks}

If {\rm PA}  receives an algorithm for a machine  $V$, a proof that $V$ is universal and prefix-free, an
 integer $c\ge 0$, and a computable increasing sequence of rationals converging to a real $\gamma>0$, then  {\rm PA} can  prove that $\alpha = 2^{-c}\cdot\Omega_V + \gamma$ is c.e.\ and random.  Similarly,
if {\rm PA} receives an algorithm for a machine  $U$, a proof that $U$ is universal and prefix-free, then it can prove that $\Omega_U$ is c.e.\ and random.
This implies  that every c.e.\ random real is provably c.e.\ and random---as stated in
Solovay's email \cite{solovayemail}.

We have offered two representations for c.e.\ and random reals from which {\rm PA} can prove
that the real is  c.e.\ and random. In the first we fix a provably universal prefix-free machine $V$ and we vary the integer $c\ge 0$ and the c.e. real $\gamma>0$ to get via the formula $ 2^{-c}\cdot \Omega_V + \gamma$  all c.e.\ and random reals. In the second we vary all provably universal prefix-free machines $U$ to get via $\Omega_{U}$ all c.e.\ and random reals.

A key result was to
show that the  theorem  that ``a  real is c.e.\ and random iff it is the halting probability of a universal   machine'' \cite{CHKW98stacs,slaman,Ca} can be proved  in {\rm PA}.   Our proof, which is simpler than the standard one, can be  used also for the original theorem.

We proved two negative results showing the importance of the representation for provability of randomness: a) there exists a universal machine whose universality cannot be proved in PA,  b) there exists a universal machine $U$ such that, based on $U$,  PA cannot prove the randomness of $\Omega_{U}$.

Chaitin \cite{chaitin98} explicitly computed a  constant $c$ such that if $N$ is larger than the size in bits of the program 
for enumerating the theorems of {\rm PA} plus $c$,
then  {\rm PA} cannot prove that a specific  string  $x$ has complexity greater than $N$, $H_{U}(x)>N$. Consequently, {\rm PA} cannot prove randomness of almost all random (finite) strings.  Our positive result shows an interesting difference between  the finite and the infinite cases of (algorithmic) randomness.

Does our positive result contradict Chaitin and  Solovay's negative results discussed in the Introduction? The answer is negative because
the digits of the binary expansion of a random c.e.\ real are not computable.

Our positive result 
would not be satisfactory without demonstrating our proofs with an automatic theorem prover.
We have chosen Isabelle \cite{isabelletut} to obtain an automatic proof of our version of the Kraft-Chaitin Theorem, one of the key results of this paper.  The paper contains a description of the formalisation (for Isabelle) of the Kraft-Chaitin Theorem and the description of the main steps of the automatic  proof; the full proof is available online 
\cite{kcproof}. 

Finally we speculate about the role of the automatic prover.  How can an automatic theorem prover help understanding/proving a mathematical statement?\footnote{The reader may note that we don't question the fact that an automatic theorem prover {\em  helps} understanding mathematics, \cite{CCM}.}  There are at least three possibilities. a)
Use the prover to verify
the theorem by discovering a proof, call it ``Solovay mode'' (because this corresponds to
the result reported in this paper: Bob Solovay communicated to one of us the statement to be proved and we  found a proof). It is worth observing that the Kraft-Chaitin Theorem has two ``roles'':  one, as an algorithm,  to be executed, the
other, as a mathematical statement,  to be  proved.  Previous
formalisation efforts focused only on the first part\footnote{The use of  Lisp and Mathematica in Algorithmic Information Theory were pioneered by Chaitin---see  \cite{chaitin98}.}; our present work was directed towards the second.  One could imagine that mathematical journals might use such systems in the process of refereeing \cite{CCM,PF}.
b) The second possibility is to use the
prover to verify a human-made proof---a full Isabelle proof for all results in this paper is under construction. c) The third possibility is to use the prover as  some kind of  ``assistant'' in an interactive process of discovery/proving. During the work to automate the proof  of the Kraft-Chaitin Theorem a mistake in our human-made argument was unearthed and corrected.
We also used the experience with Isabelle to test the adequacy  of the representation of a c.e.\ random real in meeting the goal: to obtain the {\rm PA} proof of randomness.



\section*{Acknowledgment}

We thank Bob Solovay for suggesting the result of this paper and useful comments,  Jeremy Dawson for  helpful advice on the Isabelle proof, and Greg Chaitin, Liam Fearnley, Bruno Grenet,  Mathieu Hoyrup, Andr\'e Nies, 
Cristobal Rojas, Frank Stephan, Garry Tee and the anonymous referee for useful comments which improved our paper.

\end{document}